\documentclass[11pt]{article}

\usepackage[square,sort,comma,numbers]{natbib}
\usepackage{amsmath}
\usepackage{amsthm}
\usepackage{amssymb}
\usepackage{verbatim}

\usepackage{algorithm}

\usepackage{algpseudocode}

\usepackage{color}
\usepackage{xcolor}

\usepackage{babel}

\usepackage{graphicx}
\usepackage{caption}
\usepackage{subcaption}
\usepackage{tikz}
\usepackage{wrapfig,epsfig}
\usepackage{psfrag}
\usepackage{epstopdf}
\usepackage{mathtools}

\usepackage[margin=1in, letterpaper]{geometry}

\usepackage{tabularx}
\usepackage{longtable}

\newcommand{\wh}[1]{\widehat{#1}}

\DeclareMathOperator*{\E}{\mathbb{E}}

\newcommand{\R}{\mathbb{R}}

\DeclareMathOperator{\cut}{\mathrm{cut}}

\newtheorem{theorem}{Theorem}[section]
\newtheorem{definition}[theorem]{Definition}

\newtheorem{lemma}[theorem]{Lemma}
\newtheorem{claim}[theorem]{Claim}

\newtheorem{remark}[theorem]{Remark}

\newtheorem{corollary}[theorem]{Corollary}

\usepackage[bookmarksnumbered=true]{hyperref}
\usepackage{thmtools}
\usepackage{thm-restate}

\title{Sparsify Submodular Functions under Cardinality Constraints}

\author{Zhengting Bao\thanks{\tt{bztminamoto@mail.ustc.edu.cn}, University of Science and Technology of China, Hefei 230026, China.} \and  Dongrun Cai\thanks{\tt{cdr@mail.ustc.edu.cn}, University of Science and Technology of China, Hefei 230026, China.} \and Xue Chen\thanks{\tt{xuechen1989@ustc.edu.cn}, University of Science and Technology of China, Hefei 230026, China and Hefei National Laboratory, Hefei 230088, China. Supported by NSFC 62372424 and Innovation Program for Quantum Science and Technology 2021ZD0302901.}}

\begin{document}

\maketitle

\begin{abstract}
Submodular sparsification is a generalization of classical sparsification problems from graphs and matrices to sums of submodular functions. Given a sum $F(S):=f_1(S)+\cdots+f_m(S)$ of  $m$ submodular functions $f_1,\ldots,f_m:\{0,1\}^n \to \mathbb{R}_{\ge 0}$, a size-$s$ sparsifier of $F$ is a weight vector $w \in \mathbb{R}^m_{\ge 0}$ with at most $s$ non-zero entries such that  $w_1 f_1(S) + \cdots + w_m f_m(S) \approx F(S)$ for any subset $S \subseteq [n]$. Motivated by the broad applications of submodular functions in data mining and economics, submodular sparsification has been studied in the past few years. For general submodular functions, Kenneth and Krauthgamer provided an efficient construction of size $O(n^3)$. Although several families of submodular functions admit sparsifiers of size $\tilde{O}(n)$, the authors of~\cite{CohenKPPRSV17} proved an $\Omega(n^2)$ lower bound on the size of sparsifiers for the general case.

In this work, we study how much a  cardinality constraint, such as restricting $S$ to subsets of size at most $k$, can reduce the size of submodular sparsifiers. Specifically, if the guaranty is $w_1 f_1(S) + \cdots + w_m f_m(S) \approx F(S)$ for any subset $S \subseteq [n]$ of cardinality at most $k$, are there sparsifiers of size $o(n^2)$? Our main result shows an efficient sparsifier of size $O(n k^2 \log n)$ for sums of arbitrary submodular functions. This beats the $\Omega(n^2)$ lower bound for the general setting when $k=o(\sqrt{n})$ and leaves a factor of about $k$ to the lower bound $\Omega(kn)$. Next, we study which types of submodular functions admit sparsifiers of size $(k \log n)^{O(1)}$ under cardinality constraints. Our main result provides strong lower bounds for several natural families.

Our sparsificaiton algorithms combine several techniques to extend the importance-sampling framework, including the Lov\'{a}sz extension, Edmonds' greedy algorithm, the bicriteria approximation of submodular minimization, and Kenneth and Krauthgamer's approach. In particular, we give an efficient algorithm to obtain a tight estimate (up to a constant) for the sensitivity of each $f_i$ under cardinality constraints.
\end{abstract}

\thispagestyle{empty}
\clearpage
\pagenumbering{arabic}

\section{Introduction}\label{sec:intro}
Sparsification is a powerful tool in algorithm design. Its goal is to reduce the size of data efficiently while retaining some key properties. Two notable directions are graph sparsification and matrix sparsification. For graphs, the seminal work of Bencz\'{u}r and Karger \cite{BK96} provided an efficient construction of cut sparsifiers: for an undirected graph $G=(V,E)$, let the cut function of $G$ on a subset $S$ be $\cut_G(S)=\sum_{e \in E} \cut_{e}(S)$ where $\cut_{e}(S) \in \{0,1\}$ indicates whether $S$ cuts the edge $e$ or not; its algorithm outputs a cut sparsifier --- a subset $E' \subseteq E$ with weights $w$ --- of size $O(n \log n)$ such that 
\[
\sum_{e \in E'} w_{e} \cdot \cut_{e}(S) \approx \cut_G(S) \text{ for every } S.
\] A stronger notion, introduced by Spielman and Teng \cite{ST04}, is a spectral sparsifier that preserves the spectrum of the Laplacian matrix of the graph $G$. 
Its generalizations include hypergraphs \cite{KK15, CKN20, KapralovKTY21a, Lee23, JambulapatiLS23}, directed hypergraphs \cite{SomaY19spectral, BKV21, OST23, KhannaP024}, CSP sparsification \cite{butti2020sparsification, khanna2025efficient}, and code sparsification \cite{KhannaPS24, brakensiek2025redundancy, hsieh2026sparsifying}. 

On the other hand, matrix sparsification studies how to reduce the number of rows in a matrix $A \in \mathbb{R}^{m \times n}$ while preserving a certain norm for vectors in the form of $Ax$. Two classical examples are $\ell_p$ subspace embeddings and the restricted isometry property. An $\ell_p$-subspace embedding seeks a weighted submatrix $A'$ with as few rows as possible such that $\|A x\|_p \approx \|A' x\|_p$ for every $x \in \mathbb{R}^n$. When $p=2$, this is equivalent to the spectral sparsification of graphs. Classical results \cite{BLM89,Talagrand90,batson2012twice} have shown $\ell_p$-subspace embeddings of size $\tilde{O}(n)$ for any $p \in [1,2]$. Very recently, Jambulapati, Lee, Liu, and Sidford~\cite{JLLS23, JLLS24} have generalized the sparsification of functions of $Ax$ from norms to seminorms and more general linear models. The restricted isometry property is a key property in compressed sensing \cite{Donoho_compressed_sensing}. Unlike spectral sparsification, it takes a sparsity parameter $k$ and requires $\|A x\|_2 \approx \|A' x\|_2$ for all $k$-sparse vectors in $\mathbb{R}^n$. For Fourier and Hadamard matrices, this sparsity constraint reduces the size to $k \cdot (\log n)^{O(1)}$ \cite{CT05,RV} instead of $\Theta(n)$ \cite{batson2012twice}.

This work considers the sparsification of summations of submodular functions, introduced by Rafiey and Yoshida \cite{RafieyY22submodular}. A function $f:\{0,1\}^n \rightarrow \mathbb{R}$ is submodular if for any subsets $S$ and $T$, $f(S)+f(T) \ge f(S \cup T) + f(S \cap T)$. Let $F:=\sum_{i=1}^m f_i$ be a sum of submodular functions $f_1,\ldots,f_m$. 
Given an error $\epsilon$, a size-$s$ sparsification  of $F(S)$ is a weight vector $w \in \mathbb{R}^m_{\ge 0}$ with at most $s$ non-zero entries such that
\begin{equation}\label{eq:sparsification}
\forall S \in \{0,1\}^n, F(S) = (1\pm \epsilon) \cdot \left( \sum_{j=1}^m w_j \cdot f_j(S) \right).
\end{equation}

Sparsification of sums of submodular functions is a generalization of graph sparsification \cite{RafieyY22submodular} and matrix sparsification \cite{JLLS23}. On one hand, cut functions are an important subclass of submodular functions. For example, the directed cut function $\cut^{\to}_{(u,v)}(S)$ for a directed edge $(u,v)$ is submodular --- it is $1$ iff $u \in S$ and $v \notin S$. Unlike the standard cut function $\cut_{e}(S)$, the directed cut function is asymmetric. This gives a strong lower bound $\Omega(n^2)$ \cite{CohenKPPRSV17} on the size of directed cut sparsifiers, although undirected cut sparsifiers have a size as small as $O(n)$ \cite{batson2012twice}. This also implies a lower bound $\Omega(n^2)$ on the size of submodular sparsifiers.
Other examples include the directed cut on a hyperedge, $\cut_{A \to B}(S)=1$ iff $A \cap S \neq \emptyset$ and $B \cap \overline{S} \neq \emptyset$, and the sum of directed cuts on hyperedges,  $\sum_{(A,B) \in \mathcal{E}} \cut_{A  \to B}(S)$ over a set of hyperedges  $\mathcal{E}$. On the other hand, submodular functions are more general than linear functions. For example, a welfare function $\{0,1\}^n \rightarrow \mathbb{R}_{\ge 0}$ represents the utility of an agent over subsets of $n$ resources, which is not necessarily linear. However, matrix sparsification focuses on sparsifying the sum of a norm function $N(\cdot)$ (or a loss function) on linear forms such as $\sum_{i=1}^m N(\langle A_i,x \rangle)$ for rows $A_i$ in a matrix $A \in \mathbb{R}^{m \times d}$. 

Motivated by the wide applications of submodular functions in machine learning and economics, a line of research \cite{RafieyY22submodular, KZ23, JLLS23, KK24, KhannaP024} has studied how to sparsify sums of various types of submodular functions for \eqref{eq:sparsification}. For general submodular functions, Kenneth and Krauthgamer provided a sparsifier of size $O(n^3/\epsilon^2)$. A lower bound of $\Omega(n^2)$ is well known from the sparsification of directed cut functions \cite{CohenKPPRSV17}. Several types of submodular functions admit smaller sparsifications. For example, Jambulapati, Lee, Liu, and Sidford~\cite{JLLS23} constructed a sparsifier of size $\tilde{O}(n/\epsilon^2)$ for symmetric submodular functions (i.e.~$f_i(S)=f_i(\overline{S})$ for any $S$ and any $f_i$), which is almost optimal up to $(\log \frac{n}{\epsilon})^{O(1)}$ factors. Based on this construction,  Khanna, Putterman and Sudan~\cite{KhannaPS24} showed sparsifiers of size $\tilde{O}(n/\epsilon^2)$ for monotone submodular functions (i.e.~$f_i(S)\le f_i(T)$ for any $S \subseteq T$). We refer to \cite{KK24} for results on other types of submodular functions.

In this work, our focus is on the sparsification of sums of submodular functions under cardinality constraints. For convenience, we call it submodular sparsification under cardinality constraints.
Specifically, given $F:=\sum_{i=1}^m f_i$ and a cardinality parameter $k$, an error-$\epsilon$ size-$s$ sparsification of $F(S)$ under the cardinality constraint $|S| \le k$ provides a weight vector $w \in \mathbb{R}^m_{\ge 0}$ with at most $s$ non-zero entries such that
\begin{equation}\label{eq:sparsification_cardinality}  \forall S \subseteq [n] \text{ of cardinality at most } k, F(S) = (1\pm \epsilon) \cdot \left( \sum_{j=1}^m w_j \cdot f_j(S) \right).
\end{equation}
Constructing this sparsification is well-motivated. On the first hand, cardinality constraints appear in many applications of submodular functions. A classical example is to sparsify the social welfare function \cite{RafieyY22submodular}: Consider $m$ agents with interests in $n$ resources, whose utilities are described by $m$ submodular functions $f_1,\ldots,f_m:\{0,1\}^n \to \mathbb{R}_
{\ge 0}$. Thus, $F:=\sum_{i=1}^m f_i$ represents the social welfare, and a sparsification can be viewed as choosing a committee. A cardinality constraint corresponds to a restriction on the number of resources that can be taken. Rafiey and Yoshida \cite{RafieyY22submodular} also noted that sparsification under cardinality constraints subsumes sparsification under matroids. On the second hand, it is natural to ask how much a cardinality constraint reduces the size of sparsifiers. In matrix sparsification, the restricted isometry property (RIP) is equivalent to adding a cardinality constraint $k$ to the vector $x$ in the $\ell_2$ subspace embedding. 
For Fourier and Hadamard matrices, such a constraint reduces the size of the sparsifier to $k \cdot (\log n)^{O(1)}$ in RIP \cite{RV,CheraghchiGV13,HR16}. This contrasts with the $\ell_2$ subspace embedding, where the optimal bound is $\Theta(n)$ \cite{batson2012twice}.

Hence, it is natural to ask how small a sparsifier could be after adding a cardinality constraint; and if such a constraint can reduce the size, how to construct a smaller sparsifier efficiently? In fact, these questions have been studied by Rafiey and Yoshida \cite{RafieyY22submodular}. For \emph{monotone} submodular functions\footnote{\cite{KZ23} claimed the proof in \cite{RafieyY22submodular} holds only for monotone submodular functions.}, they showed an existential upper bound $O(\frac{M \cdot k \cdot n}{\epsilon^2})$ and provided an efficient construction of size $O(\frac{M \cdot k \cdot n^{1.5}}{\epsilon^2})$. $M$ denotes the largest number of extreme points in the base polytope of $f_i$ in the above two bounds, which can be exponentially large in $n$ \cite{KK24}. However, this result was subsumed by the size-$\tilde{O}(n/\epsilon^2)$ sparsifier for monotone submodular functions without any cardinality constraint by Khanna, Putterman, and Sudan~\cite{KhannaP024}. Although sparsification of submodular functions has been studied extensively in the last few years \cite{KZ23, JLLS23, KK24, KhannaP024, Qua24}, much less is known about their sparsification under any cardinality constraint.

\subsection{Our Results}
Our first result shows that a cardinality constraint can reduce the size of sparsifiers from $\Omega(n^2)$ to $\tilde{O}(n k^2)$ for any sum of submodular functions. Furthermore, we provide efficient algorithms to construct such sparsifiers for arbitrary submodular functions under cardinality constraints. 

For convenience, we always use $k$ to denote the cardinality and ${[n] \choose \le k}$ to denote subsets of carinality at most $k$. Similarly to the previous work by Jambulapati, Lee, Liu, and Sidford~\cite{JLLS23}, for a submodular function $f$, let $\mathcal{T}_{eval}(f)$ denote the time complexity of its evaluation oracle. 
\begin{theorem}\label{thm:inform_sparsify_general_submodular}[Informal version of Theorem~\ref{thm:total_sensitivity_general}]
    Let $F(S):=\sum_{i=1}^m f_i(S)$ be the sum of $m$ arbitrary submodular functions $f_1,\ldots,f_m:\{0,1\}^n \to \mathbb{R}_{\ge 0}$. There exists an efficient algorithm in time $(nm)^{O(1)} \cdot \mathcal{T}_{eval}(f_i)$ that, for any cardinality $k$, finds a sparsifier satisfying \eqref{eq:sparsification_cardinality} of size $O(\frac{n k^2 \log n}{\epsilon^2})$ (with high prob.).
\end{theorem}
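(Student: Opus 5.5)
The plan is to use the importance-sampling framework. For each $f_i$ we define its sensitivity under the cardinality constraint,
\[
\sigma_i := \max_{S \in {[n] \choose \le k},\, F(S)>0} \frac{f_i(S)}{F(S)}.
\]
We then sample each $f_i$ independently with probability $p_i=\min(1, C\,\hat\sigma_i \log n/\epsilon^2)$ and give it weight $1/p_i$, where $\hat\sigma_i$ is a constant-factor overestimate of $\sigma_i$. The expected size is $O(\log n/\epsilon^2)\sum_i\hat\sigma_i$. The proof therefore splits into three parts:
(i) a bound $\sum_i \sigma_i = O(nk^2)$;
(ii) a concentration argument showing that sampling proportional to sensitivities with an $O(\log n/\epsilon^2)$ oversampling factor preserves $F(S)$ for all $|S|\le k$ simultaneously;
(iii) an efficient algorithm computing $\hat\sigma_i$.

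For (ii), a naive union bound over all $\binom{n}{\le k}$ sets would cost a factor $k\log n$ rather than $\log n$. So we follow Kenneth and Krauthgamer's approach and reduce to a small number of "directions". Fix a set $S$ with $|S|\le k$. Via the Lov\'asz extension and Edmonds' greedy algorithm, $f_i(S)=\langle y^{(i)}_\pi,\mathbf 1_S\rangle$ for a base vector $y^{(i)}_\pi$ that depends only on an ordering $\pi$ placing $S$ first. This makes $F(S)$ a sum of linear forms. The sensitivity bound in (i) is then charged element by element: for $|S|\le k$ and each $j\in S$, we control $f_i$ through marginal contributions $f_i(\{j\}\mid T)$ with $T\subseteq S$, $|T|<k$.

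Concretely, submodularity gives $f_i(S)\le f_i(\emptyset)+\sum_{j\in S} f_i(\{j\})$. We treat the constant term separately; its sensitivity sums to $1$, since $f_i(\emptyset)/F(\emptyset)$ is dominated appropriately, or we handle $S=\emptyset$ directly. For the singleton terms, we want to compare $f_i(\{j\})$ against $F(S)$ for sets $S\ni j$. Define
\[
\sigma_{i,j}=\max_{S\ni j,\,|S|\le k} \frac{f_i(\{j\})}{F(S)}.
\]
Then $\sigma_i\le \sum_{j}\sigma_{i,j}$ restricted to relevant $j$. For fixed $j$, the family of sets of size at most $k$ containing $j$ has the property that $\min_S F(S)$ is a submodular minimization problem. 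Using the bicriteria approximation, which relaxes $k$ to $O(k)$, the sum over $i$ of $f_i(\{j\})/\min_S F(S)$ can be bounded by comparing against $F$ at an optimal $S^*_j$. A careful accounting, using the chain $f_i(S)\ge$ telescoping marginals along an ordering, should give $\sum_i\sigma_{i,j}=O(k^2)$: one factor of $k$ comes from the size of $S$, and one from losing elements in the marginal decomposition. Summing over $j$ gives $O(nk^2)$.

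Step (iii) follows the same route. For each $i,j$ we approximately solve $\min_{|S|\le k, S\ni j} F(S)$ with the bicriteria algorithm, so that $\hat\sigma_i$ is within a constant of a valid upper bound. This takes $(nm)^{O(1)}\mathcal T_{eval}$ time overall.

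The main obstacle is (i): getting $\sum_i\sigma_i=O(nk^2)$ for non-monotone $f_i$. The issue is that $F(S)$ can be small while individual $f_i(S)$ are large with cancellation absent, and submodular upper bounds via singletons may be loose. If the direct charging fails, I would fall back on a greedy/Edmonds decomposition. We write $f_i(S)=f_i(\emptyset)+\sum_{t}f_i(s_t\mid\{s_1..s_{t-1}\})$ over the $\le k$ prefixes, and bound each prefix-marginal term's sensitivity by $O(k)$ per element using the $k$-restricted Lov\'asz extension of the base polytope. This yields $k\cdot k\cdot n$.
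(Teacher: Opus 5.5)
\paragraph{Main gap: step (i) fails for non-monotone functions.} Step (i) is the heart of the theorem. The bound $f_i(S)\le\sum_{j\in S}f_i(\{j\})$ is a valid upper bound on the numerator. However, summing your $\sigma_{i,j}$ over $i$ gives $F(\{j\})/\min_{S\ni j,\,|S|\le k}F(S)$, and without monotonicity nothing relates $F(\{j\})$ to $F(S)$ for $S\ni j$.

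Here is a concrete counterexample. Take $k=2$, $M$ copies of $\cut_{j\to v}$, and one copy of $\cut_{v\to w}$. Then $F(\{j\})=M$ but $F(\{j,v\})=1$, so $\sum_i\sigma_{i,j}\ge M$. The true total sensitivity is only $2$: each copy of $\cut_{j\to v}$ has $\sigma_i=1/M$, because every $S$ with $j\in S$, $v\notin S$, $|S|\le 2$ has $F(S)=M$. So no accounting can give $O(k^2)$ per $j$, and step (iii) would be estimating an unbounded quantity. Your prefix-marginal fallback has the same defect, since marginals of a non-monotone function do not lower-bound $F(S)$.

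The missing idea is to charge $f_i(S)$ not to singletons but to the directed block cut $f_i^{\to}(u,\overline S)=\min_{u\in T\subseteq S}f_i(T)$. This quantity has \emph{both} properties you need:
\begin{itemize}
\item $\sum_{u\in S}f_i^{\to}(u,\overline S)\ge f_i(S)$, by submodularity in the first argument;
\item $f_i^{\to}(u,\overline S)\le f_i(S)$, so $\sum_i f_i^{\to}(u,\overline S)\le F(S)$.
\end{itemize}
Because it depends on $S$ through $\overline S$, the paper linearizes the monotone submodular map $B\mapsto f_i^{\to}(u,B)$ by a nonnegative Edmonds greedy vector $z^u_{f_i}$, built from an order with $\overline{S_i}$ first. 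This gives $f_i(S_i)\le\sum_{u\in S_i}\sum_{v\notin S_i}z^u_{f_i}(v)$ and $F(S_i)\ge\sum_{w\notin S_i}Z^u(w)$ for $Z^u=\sum_j z^u_{f_j}$. Lemma~\ref{lem:sum_over_min_k} then bounds the sum over $v$ by $k$ for each $u$, so $\sum_i\sigma_i\le kn$.

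The algorithm mirrors this with \emph{pairs} $(u,v)$. It runs bicriteria minimization of $F$ over $\{|S|\le k,\ u\in S,\ v\notin S\}$ with $\alpha=1/2$, sets $X_u(v)=1/F(\widehat{S_{u,v}})$, and takes $\rho_i=2\sum_u\widehat{f_i^{\to}}(u,X_u)$. Monotonicity of $f_i^{\to}(u,\cdot)$ gives $\rho_i\ge\sigma_i$, and the same lemma at cardinality $2k$ gives $\sum_i\rho_i\le 4kn$.

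\paragraph{Secondary gap: step (ii) is unsupported and unnecessary.} The vector $y^{(i)}_\pi$ depends on an order determined by $S$. So writing $f_i(S)=\langle y^{(i)}_\pi,\mathbf 1_S\rangle$ does not give a fixed small family of linear forms to net over. Kenneth and Krauthgamer do not avoid the union bound either: their $O(n^3)$ is total sensitivity $n^2$ times $\log 2^n$. As written, even granting (i), the union bound you can actually prove would give $O(nk^3\log n/\epsilon^2)$.

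The paper splits the budget the other way. Total sensitivity is $O(kn)$, and plain Chernoff plus a union bound over the at most $(n+1)^k$ sets in $\binom{[n]}{\le k}$ (Lemma~\ref{lemma:general_importance_sampling}) costs $O(k\log n/\epsilon^2)$ oversampling. That is exactly where the second factor of $k$ in $O(nk^2\log n/\epsilon^2)$ comes from.
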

Several remarks are in order. First, there is a lower bound $\Omega(nk)$ on the size of a sparsifier given the cardinality $k$. This is a generalization of the lower bound $\Omega(n^2)$ on directed cut functions and submodular functions \cite{CohenKPPRSV17} (see Claim~\ref{clm:lower_bound_arbitrary} for a full proof). Thus, the size of sparsifiers in Theorem~\ref{thm:inform_sparsify_general_submodular} is almost tight up to a factor of $O(k \log n)$. Second, this upper bound $O(n k^2 \log n)$ is smaller than the lower bound $\Omega(n^2)$ on the sparsification of submodular functions without any cardinality constraint when $k$ is small. This indicates that cardinality constraints can reduce the size of sparsifiers for any sum of submodular functions. Finally, Theorem~\ref{thm:inform_sparsify_general_submodular} can be generalized to any matroid $\mathcal{M}$. Because a matroid of rank $k$ is a subfamily of subsets of size $k$ \cite{RafieyY22submodular}, Theorem~\ref{thm:inform_sparsify_general_submodular} provides a sparsifier of the same size under any matroid constraint of rank $k$.

\paragraph{Techniques.} Our sparsification algorithm is in the classical importance-sampling framework, which has been used extensively in matrix sparsification and submodular sparsification (e.g., \cite{RafieyY22submodular, WoodruffY23, JLLS23, KK24}). The basic routine is to compute an importance score $\rho_j$ for each submodular function $f_j$. The sparsifier then samples each $f_j$ with probability proportional to $\rho_j$ and assigns a weight $1/\rho_j$ to each sampled $f_j$. Thus, the expected size is proportional to $\sum_i \rho_i$.

A standard requirement of $\rho_j$ is to upper bound the maximum contribution of $f_j$ in $F=\sum_{i} f_i$ as $\max_{S} \frac{f_j(S)}{F(S)}$. In some sparsification problems such as the $\ell_1$ and $\ell_2$ subspace embedding, there are efficient algorithms to compute this quantity \cite{CP15_Lewis,WoodruffY23}. However, for two submodular functions $f$ and $g$, there is no polynomial time algorithm \cite{BIWB16} to approximate $\underset{S \in \{0,1\}^n}{\max} \frac{f(S)}{g(S)}$ within a factor of $\sqrt{n}$. Actually, under a cardinality constraint $S \in {[n] \choose \le k}$, there is no polynomial time algorithm \cite{SF11} to approximate $\underset{S \in {[n] \choose \le k}}{\max} \frac{1}{F(S)}$ within a factor of $\sqrt{n}$.

Given the cardinality constraint $S \in {[n] \choose \le k}$, we call this quantity $\sigma_j^{(k)}:=\underset{S \in {[n] \choose \le k}}{\max} \frac{f_j(S)}{F(S)}$ the sensitivity of $f_j$ and $\mathfrak{S}:=\sum_i \sigma_i^{(k)}$ the total sensitivity. Then our algorithms compute $\rho_j^{(k)}$ as an approximation of $\sigma_{j}^{(k)}$. Our main contribution is efficient algorithms to compute $\rho^{(k)}_j$ such that $\rho^{(k)}_j \ge \sigma_j^{(k)}$ for every $j$ and $\sum_j \rho^{(k)}_j = O(kn)$. This sum is tight (up to a constant factor) because the lower bound $\Omega(nk)$ on the size of sparsifiers also holds for the total sensitivity $\mathfrak{S}$.

This provides a sparsifier of size $O(\frac{k^2 n \log n}{\epsilon^2})$ from the importance-sampling framework with a union bound. Without any cardinality constraint, Kenneth and Krauthgamer  \cite{KK24} showed a constructive proof that $\sum_j \sigma_j \le n^2$ for $\sigma_j:=\underset{S \in \{0,1\}^n}{\max} \frac{f_j(S)}{F(S)}$. Our results about $\rho^{(k)}_j$ can be viewed as an extension of their results, which show the total sensitivity $\mathfrak{S}$ drops from $n^2$ to $O(kn)$ after imposing a cardinality constraint. However, our algorithms and analyses are more involved. In addition to the idea of approximating each $f_j$ via directed cut functions from Kenneth and Krauthgamer \cite{KK24}, we combine several tools, such as Birkhoff's representation theorem, bicriteria minimization of submodular functions, the Lov\'{a}sz extension, and Edmonds' greedy construction \cite{Ed03}, to obtain a tight bound on $\sigma_j^{(k)}$ and efficient approximation algorithms.

The proof of Theorem~\ref{thm:inform_sparsify_general_submodular} has two steps. The first step uses the cardinality constraint to give an existential proof that $\sum_{j} \sigma_j^{(k)} \le kn $. The second step combines technical tools to analyze a bicriteria approximation of $\underset{S \in {[n] \choose \leq k}}{\min} F(S)$ and $\sigma_j^{(k)}$. An important new idea for the bicriteria approximation is to bound $\sum_j \rho_j^{(k)}$ by $2\sum_{j} \sigma_j^{(2k)}=4kn$ shown in the first step.

Thus, the time complexity in Theorem~\ref{thm:inform_sparsify_general_submodular} relies on the time to approximate $\underset{S \in {[n] \choose \leq k}}{\min} F(S)$. For convenience, let $\mathcal{T}_{min}(f)$ denote this time complexity for a submodular function $f$. Before we define it formally in Section~\ref{sec:prel}, we show the time complexity of Theorem~\ref{thm:inform_sparsify_general_submodular} and sparsifying several submodular families in Table~\ref{tab:time_complexity}. In this table, junta-$r$ submodular denotes the family of submodular functions $\big\{ f:\{0,1\}^n \rightarrow \mathbb{R}_{\ge 0} \big\}$ that depend on at most $r$ variables; Boolean submodular denotes the family of submodular functions $\big\{f :\{0,1\}^n \rightarrow \{0,1\} \big\}$. While we did not improve the size of their sparsifiers (i.e.~sparsity), we provide faster sparsification algorithms for these two families in Theorem~\ref{thm:rank_k_submodular_sensitivity} and Corollary~\ref{cor:sparsity_Boolean}, respectively.

\begin{table}[ht]
    \centering 
    \renewcommand{\arraystretch}{1.3}
    \begin{tabular}{l|c c c}
        \hline
        \textbf{Function Family} & \textbf{Sparsity} & \textbf{Time Complexity} &  \\

        General submodular & $\tilde{O}(nk^2/\epsilon^2)$ & $O(m n^2 \mathcal{T}_{\min}(f_i) +  n^2 \mathcal{T}_{min}(F))$ & Theorem~\ref{thm:total_sensitivity_general} \\
        Junta-$r$ submodular & $\tilde{O}(r^2 n k^2/\epsilon^2)$ & $\tilde{O}\left(m r^2 \mathcal{T}_{\min}(f_i) + \min(n^4, mr^2 n^2) \right)$ & Theorem~\ref{thm:rank_k_submodular_sensitivity}\\
        Boolean submodular & $\tilde{O}(n k^2/\epsilon^2)$ & $\tilde{O}(m  n^2 +  mn\mathcal{T}_{eval}(f_i) + n^2 \mathcal{T}_{min}(F))$ & Corollary~\ref{cor:sparsity_Boolean} \\
        \hline
    \end{tabular}
    \caption{Time complexity of our sparsifiers for several families of submodular functions.}
    \label{tab:time_complexity}
\end{table}

\paragraph{Lower bounds.}
Without any cardinality constraint, many families of submodular functions (such as symmetric submodular functions, monotone submodular functions, and directed cuts on hyperedges) admit sparsifiers of size $\tilde{O}(n)$, smaller than the general lower bound $\Omega(n^2)$. Hence, an intriguing question is: Are there families of submodular functions that admit sparsifiers of size $o(n)$ under the constraint $S \in {[n] \choose \le k}$? In particular, is it possible to have a sparsifier of size $(k \log n)^{O(1)}$ like the restricted isometry property? 

The second part of our results shows strong lower bounds for several natural families of submodular functions. First of all, a classical counter-example in matrix sparsification is $F(x):=\sum_{i=1}^n x_i$. Under the cardinality constraint that $x$ is $1$-sparse, any sparsifier of $F(x)$ has a size of at least $n$. Since each $x_i$ is a monotone submodular function, this also provides a counter-example for sparsifying submodular functions. So classical results \cite{RV,CheraghchiGV13,HR16,CD20} about the restricted isometry property (RIP) assume that every row $A_i$ of the matrix $A$ is regular like a normalized Hadamard matrix, i.e.~$|A(i,j)| = O(\frac{\|A_i\|_2}{\sqrt{n}})$ for every entry, and show any such matrix admits a sparsifier of size $k \cdot (\log n)^{O(1)}$. 

Motivated by the results in RIP, we consider which families of submodular functions admit sparsifiers of size $(k \log n)^{O(1)}$. Because submodular functions have diminishing marginal returns, we generalize the condition of regularity in RIP as follows.
\begin{definition}\label{def:regular_submodular}
    A submodular function $f$ is regular if $\underset{i \in [n]}{\max} f(\{i\})=O(1) \cdot \underset{i \in [n]}{\min} f(\{i\})$. A submodular function $f$ is almost-regular if $\underset{i \in [n]}{\max} f(\{i\})=O(1) \cdot \underset{i \in [n]}{\E} [f(\{i\})]$. 
\end{definition}
Informally, a regular submodular function guarantees that every variable has almost the same amount of influence, and an almost-regular submodular function guarantees that the largest influence of a variable is a constant times the average influence. We show several results for regular submodular functions and almost-regular submodular functions.

\begin{theorem}\label{thm:inform_lower_bounds}
\begin{enumerate}
    \item For any sum of regular and monotone submodular functions and any cardinality $k$, there exists a sparsifier of size $O(\frac{k^2 \log n}{\epsilon^2})$. 
    
    \item For regular and symmetric submodular functions under the cardinality constraint $k=2$, there is a lower bound $\Omega(n)$ on the size of sparsifiers. 

    \item For almost-regular and monotone submodular functions under the cardinality constraint $k=1$, there is a lower bound $\Omega(n)$ on the size of sparsifiers. 
\end{enumerate}    
\end{theorem}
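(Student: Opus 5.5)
We treat the three parts separately. Part 1 is an upper bound via importance sampling; parts 2 and 3 are explicit constructions.

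\textbf{Part 1.} We stay inside the importance-sampling framework with a union bound. The plan is to show that every sensitivity $\sigma_j^{(k)}=\max_{|S|\le k} f_j(S)/F(S)$ is $O(k/m)$ up to the natural normalization. We first normalize each $f_j$ so that its singleton values are $\Theta(1)$; rescaling by weights is absorbed later. Write $a_j:=\min_i f_j(\{i\})$. Regularity gives $f_j(\{i\})\le C a_j$ for all $i$. For $|S|\le k$ we then have two bounds:
\begin{itemize}
\item by subadditivity (submodularity with $f_j(\emptyset)\ge 0$), $f_j(S)\le \sum_{i\in S} f_j(\{i\})\le Ck a_j$;
\item by monotonicity, for any $i\in S$, $F(S)\ge F(\{i\})\ge \sum_{l} a_l$.
\end{itemize}
Hence $\sigma_j^{(k)}\le Ck\, a_j/\sum_l a_l$, and the total sensitivity is $\sum_j\rho_j = O(k)$. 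These bounds $\rho_j$ are computable from singleton queries alone.

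Sampling $f_j$ with probability $p_j=\min(1,\, c\,\rho_j\, k\log n/\epsilon^2)$ and reweighting by $1/p_j$ gives an expected size of $O(k^2\log n/\epsilon^2)$. For a fixed $S$, Chernoff's bound applies because each term satisfies $f_j(S)/p_j\le F(S)\,\epsilon^2/(ck\log n)$. The resulting failure probability is $n^{-\Omega(k)}$, which survives a union bound over the $\binom{n}{\le k}\le n^k$ sets. This is exactly the argument behind Theorem~\ref{thm:inform_sparsify_general_submodular}, with the total sensitivity $O(k)$ replacing $O(kn)$.

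\textbf{Part 2.} For symmetric regular functions with $k=2$, we use the $n/2$ functions $f_j(S)=\cut$ of a hyperedge-like gadget, chosen so that $F(\{a,b\})$ detects each $f_j$ individually.

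\emph{First attempt.} Index pairs $\{2j-1,2j\}$ and let $f_j(S)=1+g_j(S)$, where $g_j$ is symmetric and indicates whether $S$ separates $2j-1$ from $2j$ on the pair. Regularity is the concern here, since $f_j$ must have comparable singleton values over all $n$ coordinates. The additive constant achieves this: $f_j(S)=|S\cap\{2j-1,2j\}|\bmod 2 +1$ is not submodular, so instead I would use $f_j(S)=\cut_{K_n}(S)/n+\cut_{(2j-1,2j)}(S)$. This is a sum of cut functions, so it is symmetric and submodular. Each singleton has value $\approx 1+\{0,1\}$, so $f_j$ is regular.

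The weight is dominated by the $K_n$ part. To make one edge detectable, I would use a cut family where dropping $f_j$ changes $F(S)$ by a constant factor for some 2-set $S$. So I would instead take $f_j=\cut_{e_j}+\frac{1}{n}\cut_{\text{star}}$, scaled so the background total stays $O(1)$ relative to the edge. Then for $S=\{2j-1\}$ the edge term forms a constant fraction of $F(S)$. Any sparsifier omitting $f_j$ loses it, which forces all $\Omega(n)$ functions to be kept. Choosing the background so that every singleton value is $\Theta(1)$ while the total background contribution to $F$ is $O(1)$ is the main obstacle. I would try a symmetric background whose singleton marginals are constant but whose sum over $j$, after $1/n$ scaling, stays bounded, e.g.\ the normalized complete-graph cut $\cut_{K_n}(S)/n^2$ per function.

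\textbf{Part 3.} Take $f_j(S)=\mathbf{1}[j\in S]\cdot n+\mathbf{1}[S\ne\emptyset]$. This is monotone submodular (a nonnegative combination of a coverage-type and a budget function). Its largest singleton is $n+1$ and its average singleton is $2$. That fails almost-regularity, so I rescale to $f_j(S)=\mathbf{1}[j\in S]+\mathbf{1}[S\ne\emptyset]$: now the max singleton is $2$ and the average is $\approx 1$, so $f_j$ is almost-regular. For $S=\{i\}$ we get $F(\{i\})=1+n$, of which $f_i$ contributes only $2$. That gives no lower bound.

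The fix is to take background weight $1/n$: set $f_j=\mathbf{1}[j\in S]+\frac1n\mathbf{1}[S\neq\emptyset]$. Its maximum singleton is $1+1/n$ while its mean is about $2/n$, which again fails almost-regularity. So I would balance with $n$ copies, or a concave cover $\min(1,\cdot)$ construction, to reach a constant ratio. The crux in both parts 2 and 3 is finding this balance, and I expect it to be the main difficulty.
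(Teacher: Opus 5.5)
\textbf{Part 1} is correct and is essentially the paper's proof of Lemma~\ref{lem:algorithm_monotone_regular}: subadditivity plus regularity bound $f_j(S)\le Ck\,c_j$, monotonicity gives $F(S)\ge\sum_l c_l$, so the total sensitivity is $O(k)$, and importance sampling with a union bound over $n^k$ sets finishes.

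\textbf{Part 2} has a genuine gap. You never fix a construction, and the direction you pursue cannot work. You try to make a singleton $S=\{2j-1\}$ detect $f_j$. Under regularity this is impossible for every family, monotone or not. With $c_j:=\min_v f_j(\{v\})$ we have $f_j(\{v\})\le Cc_j$ and $F(\{v\})\ge\sum_l c_l$, so $\sum_j \max_v f_j(\{v\})/F(\{v\})\le C$. Importance sampling would then preserve all singletons with $O(\log n/\epsilon^2)$ functions. This is why every background you try (scaled $K_n$ cuts, stars, $\mathrm{cut}_{K_n}/n^2$) either breaks regularity or swamps the edge term. The missing idea is to use the second element together with non-monotonicity: choose a test pair on which every other function is exactly zero. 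The paper takes $n=3m$ with blocks $A,B,R$. It lets $f_i$ be the cut function of the perfect matching $\{a_jb_j: j\neq i\}\cup\{a_ir_1,\,b_ir_2\}\cup\{r_3r_4,\ldots,r_{m-1}r_m\}$. Every vertex has degree one, so $f_i(\{v\})=1$ for all $v$, and each $f_i$ is regular and symmetric. Meanwhile $f_j(\{a_i,b_i\})=2\,\mathbb{I}(j=i)$, because $a_ib_i$ is an uncut matching edge of every other matching. Hence $w_i>0$ is forced for all $m=n/3$ indices.

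\textbf{Part 3} has a similar gap. All your candidates place the background mass $\mathbb{I}(S\neq\varnothing)$ on the tested coordinates themselves, so every $f_j$ is nonzero on every test singleton. Shrinking the background to $1/n$ then destroys almost-regularity, as you noticed. Your mention of $\min(1,\cdot)$ points toward the fix, but the key point is different: put the padding on a disjoint block of coordinates that is never tested. Almost-regularity constrains only the average singleton value, unlike regularity, so half the coordinates can carry all the mass. Take $m=n/2$, $P=\{m+1,\ldots,n\}$ and $f_i(S)=\mathbb{I}\bigl(S\cap(\{i\}\cup P)\neq\varnothing\bigr)$. Each $f_i$ is a monotone coverage function with maximum singleton value $1$ and average $(m+1)/(2m)\ge 1/2$. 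Moreover $f_j(\{i\})=\mathbb{I}(j=i)$ for $i\le m$, so each of the $n/2$ functions must receive positive weight. This contrast also explains why the statement needs $k=2$ and non-monotonicity in Part 2 but only $k=1$ in Part 3.
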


Compared to the first upper bound in Theorem~\ref{thm:inform_lower_bounds}, the two lower bounds indicate that the two properties of the submodular function, regularity and monotonicity, are necessary to have a sparsifier of size $o(n)$. We refer to Lemma~\ref{lem:algorithm_monotone_regular} for the formal statement of the upper bound and Lemma~\ref{clm:lower_bound_regular} and Lemma~\ref{clm:lower_bound_monotone} for the two lower bounds.

\subsection{Related Works}\label{sec:related_works}
Graph sparsification has been extensively studied in the last few decades. Two notable examples are cut sparsification \cite{BK96} and spectral sparsification \cite{ST04}, which had found numerous applications in graph algorithms \cite{Vishnoi13}. Besides cut sparsification and spectral sparsification, other graph sparsifications include graph spanners \cite{Spanner_survey}, expander decomposition \cite{GRST21}, and quotient sparsification \cite{Qua24}.

Hyper-graph sparsification is a generalization of graph sparsifiation. Several variants have been extensively studied in the last few years. The first variant is cut sparsifiers of hyper-edges and directed hyper-edges (to name a few \cite{KK15, CKN20, BKV21, KhannaP024}). The second one is spectral sparsifiers of hyper-graphs \cite{SomaY19spectral, KapralovKTY21a,OST23,Lee23, JambulapatiLS23}.

Matrix sparsification is an important tool in machine learning and big data algorithms. Besides subspace embeddings and the restricted isometry property (RIP) mentioned above, other sparsifications include Johnson-Lindenstrauss projection \cite{JohnsonLindenstrauss:84} and low-rank approximation \cite{CW17}. Very recent work by Jambulapati, Lee, Liu, and Sidford~\cite{JLLS24} have constructed almost-optimal sparsifiers for general linear models such as the Huber and Tukey loss functions.

Sparsification of a summation of submodular functions is also called sparsification of a decomposable submodular function. Besides sparsification of hyper-edges mentioned above, Veldt, Benson, and Kleinberg \cite{BKV21,VBK21} have shown small size sparsifiers for summations of cardinality based submodular functions, i.e., $f_i(x)=g_i(|x \cap T_i|)$ for some subset $T_i \subseteq [n]$ and $g_i:\mathbb{Z} \rightarrow \mathbb{R}_{\ge 0}$. Rafiey and Yoshida \cite{RafieyY22submodular} studied the sparsification of general submodular functions and showed an existential bound $O(M \cdot n^2)$ and a constructive bound $O(M \cdot n^{2.5})$ for monotone submodular functions. Here $M$ denotes the largest number of extreme points in the base polytope of $f_i$. On the other hand,  directed cut functions provide a lower bound $\Omega(n^2)$ on the size of sparsifiers \cite{CohenKPPRSV17}.

Rafiey and Yoshida's results \cite{RafieyY22submodular} have been improved and generalized in the last few years. Kudla and Zivny improved the constructive bound to $O(M \cdot n^{2})$ for low-curvature submodular functions. For summations of symmetric submodular functions, Jambulapati, Lee, Liu, and Sidford ~\cite{JLLS23} showed efficient constructions of sparisiers of size $\tilde{O}(n)$. Based on this construction, Khanna, Putterman, and Sudan \cite{KhannaP024} showed that monotone submodular functions also have sparsifiers of size $\tilde{O}(n)$. 

For general submodular functions (including asymmetric ones), Kenneth and Krauthgamer \cite{KK24} showed an algorithm to approximate sensitivities $\sigma_i:=\max_S \frac{f_i(S)}{F(S)}$ with a summation $O(n^2)$, which is almost tight (by the same lower bound of directed cut functions). Based on this algorithm, Kenneth and Krauthgamer proved that any sum of submodular functions has a sparsifier of size $O(n^3)$, leaving a gap to the lower bound $\Omega(n^2)$. In the special case of directed cut on hyper-edges, Khanna, Putterman, and Sudan \cite{KhannaP024} improved the construction to $O(n^2)$. We note that this improvement also holds for Boolean submodular functions (see Theorem~\ref{thm:seminorm_reduction} in Section~\ref{sec:upper}).

At the same time, Rafiey and Yoshida \cite{RafieyY22submodular} also raised the question of sparsifying submodular functions under cardinality constraints or matroids. For monotone submodular functions, they showed smaller sparsifiers than their results for the general setting. As mentioned earlier, their constructions were subsumed by Khanna, Putterman, and Sudan~\cite{KhannaP024} for monotone submodular without any constraint. However, this leaves an open question: how much can a cardinality constraint reduce the size of sparsifiers?

\subsection{Discussion}
In this work, we show that a cardinality constraint can reduce the size of sparsifiers for any sum of submodular functions from $\Omega(n^2)$ to $\tilde{O}(n k^2)$. While our algorithms are based on the classical importance-sampling framework, our main contribution is efficient algorithms to approximate the sensitivity of each submodular function. In particular, the sum of these approximated scores is tight up to a constant. Our work leaves many intriguing open questions, and we list some of them here.

\begin{enumerate}
    \item Under a cardinality constraint $S \in {[n] \choose \le k}$, is it possible to improve the size of the sparsifier to $\tilde{O}(nk)$? A first step would be to obtain a sparsifier of size $O(n^2)$ for the general setting without any constraint.

    \item Are there natural families of submodular functions that admit a sparsifier of size $(k \log n)^{O(1)}$ under the cardinality constraint? While we show that the family of regular (see Definition~\ref{def:regular_submodular}) and monotone submodular functions admits such a sparsifier, we also provide lower bounds for several natural families in Theorem~\ref{thm:inform_lower_bounds}.

    \item Besides cardinality constraints, are there other constraints on $S$ that can reduce the size of sparsifiers?
\end{enumerate}

\paragraph{Organization.}
The rest of this work is organized as follows. We introduce notation and basic facts about submodular functions in Section~\ref{sec:prel}. We provide an overview of our algorithms in Section~\ref{sec:overview}.
Then we prove the results shown in Table~\ref{tab:time_complexity} for sparsifying Boolean submodular functions and junta-$r$ submodular functions under cardinality constraints in Section~\ref{sec:Boolean}. Next we prove Theorem~\ref{thm:inform_sparsify_general_submodular} about sparsifying general submodular functions under cardinality constraints in Section~\ref{sec:general_cardinality}. Finally, we prove the lower bounds of Theorem~\ref{thm:inform_lower_bounds} in Section~\ref{sec:lower_bounds} and the upper bound of Theorem~\ref{thm:inform_lower_bounds} in Section~\ref{sec:upper}.

\section{Overview}\label{sec:overview}
We provide a high-level overview of our algorithms in this section. As discussed in Section~\ref{sec:intro}, our algorithms are based on the importance-sampling framework \cite{RafieyY22submodular, WoodruffY23, JLLS23, KK24}. Given $F:=\sum_{i=1}^m f_i$ for $m$ submodular functions, recall that the \emph{sensitivity} of each $f_i$ under a cardinality constraint $k$ is
\[
    \sigma_i^{(k)} := \max_{S \in {[n] \choose \le k}} \frac{f_i(S)}{F(S)}.
\]
Our algorithm will compute approximations $(\rho_1^{(k)},\ldots,\rho_m^{(k)})$ that satisfy
$\rho_i^{(k)} \ge \sigma_i^{(k)}$ for every $i$ and $\sum_i \rho_i^{(k)}=O(kn)$. In this work, when the cardinality parameter is clear from the context, we write
$\sigma_i$ and $\rho_i$ and use $\mathfrak{S}:=\sum_i \sigma_i$ to denote the total sensitivity \cite{WoodruffY23}.

\paragraph{Sparsifying Boolean submodular functions.} We start with an easy case where each submodular function $f_i:\{0,1\}^n \to \{0,1\}$ is Boolean. First of all, for any non-negative submodular function $g$, $g(S)=0$ and $g(T)=0$ imply $g(S \cap T)=0$ and $g(S \cup T)=0$ by the definition of submodularity. In other words, the zero sets of $g$ form a ring in $\mathbf{F}_2^n$. The classical Birkhoff representation theorem (see Lemma~\ref{lem:ring_of_sets_digraph} for a formal statement) implies that there exists a directed edge set $E_g \subseteq [n] \times [n]$ such that $g(S) \neq 0$ if and only if $S$ cuts some directed edge in $E_g$: $\exists (u,v) \in E_g$ such that $u \in S$ and $v \notin S$. We show how to use this fact to prove $\sum_i \sigma^{(k)}_i \le kn$ and compute their approximations $(\rho^{(k)}_1,\ldots,\rho^{(k)}_m)$ efficiently.

An important idea for proving $\sum_i \sigma^{(k)}_i \le kn$ is to leverage the knowledge that the maximizer of $\sigma^{(k)}_i$ --- $S_i:=\arg\max_{S \in {[n] \choose \le k}} \frac{f_i(S)}{F(S)}$ --- is $k$-sparse. By the characterization of Birkhoff's theorem, $S_i$ must cut some edge in the directed edge set $E_i$ of $f_i$ (otherwise $f_i(S_i)=0$). Furthermore, in the Boolean case, $F(S_i)$ equals the number of sets $E_i$ that contain an edge cut by $S_i$. If we use $H_{a,b}:=\{i: (a,b) \in E_i\}$ to denote the submodular functions whose sets contain $(a,b)$, $F(S_i)=|\underset{a \in S_i, b\notin S_i}{\cup} H_{a,b}|$ by the definition. To lower bound $F(S_i)$, we arbitrarily assign each $f_i$ to one directed edge in $E_i$ that is cut by $S_i$ (instead of assigning $f_i$ to every edge in $E_i$). Let $A$ denote an assignment from $[m]$ to directed edges such that $\sum_{a \in S_i, b \notin S_i} |A^{-1}(a,b)|$ is a lower bound of $F(S_i)$.

Now we sketch a proof that $\sum_i \sigma^{(k)}_i \le kn$. In the Boolean case, given the assignment $A$, we upper bound each $\sigma^{(k)}_i$ as follows: let $(u_i,v_i):=A(i)$ be the assignment of $f_i$ such that 
\begin{equation}\label{eq:sigma_i_upper_bound}
\sigma^{(k)}_i \le \frac{1}{\underset{a \in S_i, b \notin S_i}{\sum} |A^{-1}(a,b)|} \le \frac{1}{\underset{b \notin S_i}{\sum} |A^{-1}(u_i,b)|}. 
\end{equation}
We replace the denominator by the smallest possible value, $\underset{S \in {[n] \choose \le k}: u_i \in S, v_i \notin S}{\min} \sum_{b \notin S} |A^{-1}(u_i,b)|$  
and use the fact that the number of $f_i$ assigned to $(u,v)$ is $|A^{-1}(u,v)|$:
\begin{equation}\label{eq:sum_i_sigma}
    \sum_i \sigma_i^{(k)} \le \sum_i \frac{1}{\underset{S \in {[n] \choose \leq k}: u_i \in S, v_i \notin S}{\min} \sum_{b \notin S} |A^{-1}(u_i,b)|}=\sum_{u \in [n]} \sum_{v \in [n]: v \neq u} \frac{|A^{-1}(u,v)|}{\underset{S \in {[n]\setminus \{u\} \choose \le k-1}:  v \notin S}{\min} \sum_{b \notin S,b\neq u} |A^{-1}(u,b)|}.
\end{equation}

We fix $u$ and consider the summation in the RHS of \eqref{eq:sum_i_sigma} over $v$. Intuitively, the minimizer over $S$ removes the $k-1$ largest terms in $|A^{-1}(u,b)|$ from the summation. Thus, the denominator is the same for all $v$ whose $|A^{-1}(u,v)|$ is not among the largest $k-1$ terms. This further implies that the contribution from such $v$ (for a fixed $u$) is at most $1$. Then the remaining $k-1$ terms contribute at most $k-1$ in this summation. 

We summarize this argument for sums of this form as follows, which will be extensively used in this work.
\begin{lemma}\label{lem:sum_over_min_k}
    $\forall w \in \mathbb{R}_{\ge 0}^N$, for any $k$,
    \[
    \sum_{i=1}^N \frac{w(i)}{\underset{S: |S| \le k,\, i\notin S}{\min} \sum_{j \notin S} w(j)} \le k + 1.
    \]
\end{lemma}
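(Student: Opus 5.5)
We plan to exploit that the minimizer in each denominator is explicit. Sort the coordinates so that $w(1)\ge w(2)\ge\cdots\ge w(N)$, and let $T=\{1,\ldots,k\}$ be the indices of the $k$ largest weights (if $N\le k$ the statement is trivial once we adopt the convention that terms with zero denominator and zero numerator vanish; we treat degenerate cases with $w(i)=0$ by noting such terms contribute $0$). For a fixed $i$, the quantity $\min_{|S|\le k,\, i\notin S}\sum_{j\notin S} w(j)$ is obtained by letting $S$ consist of the $k$ largest weights among indices other than $i$. This greedy choice is clearly optimal since removing a larger weight always decreases the remaining sum at least as much.

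We then split the sum according to whether $i\in T$. First take $i\notin T$, i.e.\ $i>k$. The optimal $S$ is exactly $T$, so every such term has the common denominator $D:=\sum_{j>k} w(j)$. Their total contribution is therefore $\sum_{i>k} w(i)/D = 1$, or $0$ if all those weights vanish.

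Next take $i\in T$. The optimal $S$ is $\{1,\ldots,k+1\}\setminus\{i\}$, so the denominator is $w(i)+\sum_{j>k+1} w(j)\ge w(i)$. Hence each such term is at most $w(i)/w(i)=1$, and there are at most $k$ of them. Adding the two parts gives a total of at most $k+1$.

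The only real subtlety is justifying the greedy description of the minimizer when there are ties, and handling zero denominators. Ties do not matter because only the value of the minimum enters the bound. For zero denominators, the term is $0/0$ only when $w(i)=0$, and we interpret such terms as $0$.
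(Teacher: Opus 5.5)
Your proof is correct and is essentially the paper's argument. Both sort the weights and note that every $i>k$ has the common denominator $R=\sum_{j>k}w(j)$, so those terms contribute exactly $1$. For each $i\le k$, the denominator $R+w(i)-w(k+1)=w(i)+\sum_{j>k+1}w(j)$ is at least $w(i)$, so each of these at most $k$ terms is at most $1$.
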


\begin{proof}
    Without loss of generality, assume $w(1) \ge w(2) \ge \dots \ge w(N)$. Define $R = \sum_{j=k+1}^N w(j)$. Then
    \begin{align*}
        \sum_{i=1}^N \frac{w(i)}{\underset{S: |S| \le k,\, i\notin S}{\min} \sum_{j \notin S} w(j)}
        &= \sum_{i=1}^N \frac{w(i)}{\sum_{j=1}^N w(j) - \underset{S: |S| \le k,\, i\notin S}{\max} \sum_{j \in S} w(j)} \\
        &= \sum_{i=1}^{k} \frac{w(i)}{R + w(i) - w(k+1)} + \sum_{i=k+1}^N \frac{w(i)}{R} \\
        & \le \sum_{i=1}^k 1 + \frac{\sum_{i=k+1}^N w(i)}{R} = k+1.
    \end{align*}

\end{proof}

Therefore, \eqref{eq:sum_i_sigma} is at most $kn$, which bounds the total sensitivity by $\mathfrak{S} \le kn$ in the Boolean case.

\paragraph{Efficient algorithm for $(\rho_1,\ldots,\rho_m)$.} 
In the Boolean case, we apply Birkhoff's characterization to approximate 
\begin{equation}\label{eq:approximate_Sigma}
    \sigma_i^{(k)}:=\frac{1}{\min_{S \in {[n] \choose \le k}:f_i(S)=1} F(S)} \text{ as } \max_{(u,v) \in E_i} \frac{1}{\min_{S \in {[n] \choose \le k}: u \in S, v \notin S} F(S)}.
\end{equation}
Because there is no efficient algorithm to compute $\underset{S \in {[n] \choose \leq k}: u \in S, v \notin S}{\min} F(S)$ \cite{SF11}, we discuss approximation algorithms for $S_{u,v} := \underset{S \in {[n] \choose \leq k}: u \in S, v \notin S}{\arg \min} F(S)$ and $\sigma_i^{(k)}$ next.

In particular, we consider bicriteria approximation algorithms \cite{SF11, IB13} for $S_{u,v}$, which return $\wh{S_{u,v}}$ with $|\wh{S_{u,v}}| \le \frac{k}{\alpha}$ and $F(\wh{S_{u,v}}) \le \frac{F(S_{u,v})}{1-\alpha}$ for any $\alpha \in (0,1)$. We state this guarantee in Lemma~\ref{lem:submodular_cardinality_bicriteria}. Back to the approximation in \eqref{eq:approximate_Sigma}, we could set $\alpha=\frac{k+0.01}{k+1}$ to guarantee $|\wh{S_{u,v}}|<k+1$ and $F(\wh{S_{u,v}}) \le 2k \cdot F(S_{u,v})$. This implies that $\rho_i=\max_{(u,v) \in E_i} \frac{2k}{F(\wh{S_{u,v}})}$ satisfies $\sigma^{(k)}_i\le \rho^{(k)}_i \le 2k \cdot \sigma^{(k)}_i$. Thus, $\sum_i \rho^{(k)}_i = O(k^2 n)$ leads to a sparsifier of size $O(k^3 n \log n)$.

Our key observation is that choosing $\alpha=1/2$ leads to a sparsifier of size $O(k^2 n \log n)$. Since $\alpha=1/2$ guarantees $|\wh{S}_{u,v}| \le 2k$, we have $\rho_i^{(k)} \in [\sigma_i^{(k)}, 2\sigma_i^{(2k)}]$, with $\rho_i^{(k)} = \max_{(u,v)\in E_i} \frac{2}{F(\wh{S}_{u,v})}$. Thus, $\sum_i \rho_i^{(k)} \le \sum_i 2\cdot\sigma_i^{(2k)}=4kn$. We defer the details to Theorem~\ref{thm:total_sensitivity_Boolean} in Section~\ref{sec:Boolean}.

\paragraph{Extension to general submodular functions.} The argument for general submodular functions is more involved and combines the above idea with the approach of Kenneth and Krauthgamer \cite{KK24} and the Lov\'{a}sz extension.

Given a submodular function $g$ defined over $\{0,1\}^n$, the Lov\'{a}sz extension \cite{Lovasz83} defines $\widehat g:\mathbb{R}_{\geq 0}^n\to\mathbb{R}$ as
$\widehat g(z):=\int_0^{\infty} g\big( \{i:z(i) \ge \lambda\} \big) \mathrm{d} \lambda$.
We refer to Section~\ref{sec:prel} for more properties of this extension. 

As in the proof above, let $S_i:=\arg\max_{S \in {[n] \choose \le k}} \frac{f_i(S)}{F(S)}$ and $\mathbf{1}_{S_i} \in \{0,1\}^n$ denote the indicator vector of subsets $S_i$, while $S_i$ is unknown. Because $\sigma_i=\frac{f_i(S_i)}{F(S_i)}=\wh{f_i}(\frac{\mathbf{1}_{S_i}}{F(S_i)})$ by their definitions, we consider how to approximate this vector $\frac{\mathbf{1}_{S_i}}{F(S_i)}$ in the Lov\'{a}sz extension. Motivated by the above argument for Boolean submodular functions, we consider a pair of vertices $(u,v)$ with $u \in S_i$ and $v \notin S_i$ and $\wh{S_{u,v}} \in {[n] \choose \le 2k}$ as a 2-approximation to $S_{u,v}:=\underset{S \in {[n] \choose \le k}: u \in S, v \notin S}{\arg\min} F(S)$. The crucial idea is to use vectors $(\frac{2}{F(\wh{S_{u,v}})})_{v \in [n]}$ instead of $\frac{\mathbf{1}_{S_i}}{F(S_i)}$ in the approximation algorithm (see Algorithm~\ref{alg:general_oversampling} for details). An important property in the analysis is $(\frac{2}{F(\wh{S_{u,v}})})_{v \in [n]} \ge \frac{\mathbf{1}_{S_i}}{F(S_i)}$ by the definition of $\wh{S_{u,v}}$. 

However, this does not guarantee $\rho_i:=\wh{f_i}\big( (\frac{2}{F(\wh{S_{u,v}})})_{v \in [n]} \big)$ is an upper bound on $\sigma_i:=\wh{f_i}\big( \frac{\mathbf{1}_{S_i}}{F(S_i)} \big)$ unless $f_i$ is monotone \cite{RafieyY22submodular, KZ23}. To turn $f_i$ into a monotone submodular function, we extend the approach of Kenneth and Krauthgamer based on $g^{u \to v}:=\min_{T: u \in T, v \notin T} g(T)$. In fact, the directed edge set for $\{T:g(T)>0\}$ from Birkhoff's representation theorem can be viewed as a special case of $g^{u \to v}$ in \cite{KK24}. Our new idea is to study properties of an extension of $g^{u \to v}$:
\[
g^{\to}(A,B):=\min_{T: A \subseteq T, T \cap B=\emptyset} g(T).\]
As shown in \cite{KK24}, $g^{\to}(A,B)$ is submodular with respect to $B$ when $A$ is fixed and, conversely, with respect to $A$ when $B$ is fixed. More importantly, $g^{\to}(A,B)$ is also monotone with respect to $B$ when $A$ is fixed. Our approximation algorithms would use evaluations of $f_i^{\to}(u,(\frac{2}{F(\wh{S_{u,v}})})_{v \in [n]})$ to generate $\rho_i$ and compare it with $\wh{f_i}(\frac{\mathbf{1}_{S_i}}{F(S_i)})$. Formal descriptions and details are in Section~\ref{sec:general_cardinality}.

\section{Preliminaries}\label{sec:prel}

We introduce several notations and basic results about submodular functions in this work. Given an error $\epsilon$, for two real numbers $a$ and $b$, we use $a=b \pm\epsilon$ to denote $a \in [(1-\epsilon)\cdot b, (1+\epsilon)\cdot b]$.

Let $[n]:=\{1,2,\ldots,n\}$ and ${[n] \choose k}$ denote the family of size-$k$ subsets in $[n]$. Similarly, let ${[n] \choose \le k}$ denote the family of subsets with maximum size most $k$.

We identify Boolean vectors in $\{0,1\}^n$ with subsets of $[n]$. For ease of exposition, we use $S$ both for a subset of $[n]$ and for its corresponding Boolean vector. For the vectors in $\mathbb{R}^n$, we use $x$ to denote a vector of real numbers and $\mathbf{1}_S$ to denote the corresponding indicator vector in $\mathbb{R}^n$. 
Specifically, we reserve $\mathbf{1}_S \in \mathbb{R}^n$ for the Lov\'{a}sz extension and use the set notation $S$ elsewhere.

For a vector $x \in \mathbb{R}^n$, let $x(i)$ denote its $i$-th coordinate and let $x(T)$ denote the sub-vector in $\mathbb{R}^T$ for $T \subseteq [n]$. For two vectors $x,y \in \mathbb{R}^n$, we write $x\geq y$ to mean $x(i)\geq y(i)$ for all $i\in [n]$.

For an event $E$, let $\mathbb{I}(E) \in \{0,1\}$ denote the indicator function. 

For any permutation $\pi:=\big( \pi(1),\ldots,\pi(n) \big)$ of $[n]$, let $\mathrm{Pre}_\pi(i):=\{\pi(1),\pi(2),\cdots,\pi(i)\}$ denote the first $i$ elements in its prefix and \(\mathrm{Pre}_\pi(0) = \varnothing\).

We state the classical Chernoff bound \citep{Pro05} for completeness.
\begin{theorem}[Chernoff bound] \label{Thm:Chernoff_bound}
Let $X_1, \ldots, X_n$ be independent random variables in the range $[0, a]$. Let $T = \sum_{i=1}^n X_i$. Then for any $\epsilon \in [0, 1]$ and $\mu \geq \mathbb{E}[T]$, we have
\[
\mathbb{P}\left[ |T - \mathbb{E}[T]| \geq \epsilon \mu \right] \leq 2 \exp\left( -\frac{\epsilon^2 \mu}{3a} \right).
\]
\end{theorem}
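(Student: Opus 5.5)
We will prove the Chernoff bound with the standard exponential-moment (Bernstein–Chernoff) method. We handle the two tails separately and combine them with a union bound, which gives the factor $2$. First we rescale: set $Y_i:=X_i/a\in[0,1]$, $T':=T/a$ and $\mu':=\mu/a\ge \E[T']$. The event $|T-\E[T]|\ge \epsilon\mu$ is the same as $|T'-\E[T']|\ge \epsilon\mu'$, and the target bound becomes $2\exp(-\epsilon^2\mu'/3)$. So it suffices to treat the case $a=1$. Write $p_i:=\E[Y_i]$ and $\nu:=\sum_i p_i=\E[T']\le \mu'$.

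The key per-variable estimate comes from convexity of $y\mapsto e^{ty}$ on $[0,1]$: $e^{ty}\le 1-y+ye^{t}$. Hence $\E[e^{tY_i}]\le 1+p_i(e^t-1)\le \exp(p_i(e^t-1))$ for every real $t$. Multiplying over the independent $Y_i$ and centering gives
\[
\E\bigl[e^{t(T'-\nu)}\bigr]\le \exp\bigl(\nu\,(e^t-1-t)\bigr).
\]
The point that handles $\mu'\ge \nu$ rather than $\mu'=\nu$ is that $e^t-1-t\ge 0$ for all $t$. So we may replace $\nu$ by $\mu'$ in this exponent, for $t>0$ and for $t<0$ alike.

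For the upper tail, Markov's inequality with $t>0$ gives $\Pr[T'-\nu\ge \epsilon\mu']\le \exp\bigl(-t\epsilon\mu'+\mu'(e^t-1-t)\bigr)$. Choosing $t=\ln(1+\epsilon)$ yields $\exp\bigl(-\mu'[(1+\epsilon)\ln(1+\epsilon)-\epsilon]\bigr)$. For the lower tail we apply the same argument to $-T'$ with $t=-s$, $s=-\ln(1-\epsilon)$; the case $\epsilon=1$ follows by a limit, or directly since $T'\ge 0$. This gives $\exp\bigl(-\mu'[\epsilon+(1-\epsilon)\ln(1-\epsilon)]\bigr)$.

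The only step needing care is the pair of elementary inequalities that turn these exponents into $\epsilon^2/3$ for $\epsilon\in[0,1]$. A naive second-derivative argument for the upper one breaks down for $\epsilon>1/2$. Instead we use $\ln(1+\epsilon)\ge \frac{2\epsilon}{2+\epsilon}$ (true at $0$, and comparing derivatives gives $\frac1{1+\epsilon}\ge\frac{4}{(2+\epsilon)^2}$). This yields $(1+\epsilon)\ln(1+\epsilon)-\epsilon\ge \frac{\epsilon^2}{2+\epsilon}\ge\frac{\epsilon^2}{3}$. For the lower one, the series expansion $(1-\epsilon)\ln(1-\epsilon)+\epsilon=\sum_{j\ge2}\frac{\epsilon^j}{j(j-1)}\ge \frac{\epsilon^2}{2}\ge\frac{\epsilon^2}{3}$ suffices. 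Adding the two tail bounds completes the proof.
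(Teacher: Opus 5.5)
Your proof is correct. The paper gives no proof of this statement: it is quoted ``for completeness'' as the classical Chernoff bound, with a citation to \cite{Pro05}. So there is no argument to compare against, and what you wrote is the standard exponential-moment proof such a reference contains. Its value here is that it makes explicit two features of this particular form that the paper relies on in Lemma~\ref{lemma:general_importance_sampling}. First, the deviation is measured against an upper bound $\mu\ge\mathbb{E}[T]$ rather than the mean itself; the sparsity estimate there takes $\mu=\mathbb{E}\|w\|_0+3\ln(4/\delta)$. You handle this cleanly through $e^t-1-t\ge 0$, which lets $\nu$ be replaced by $\mu'$ for both signs of $t$. Second, the endpoint $\epsilon=1$ is exactly the value used there.

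Two small remarks. For the lower tail at $\epsilon=1$, ``directly since $T'\ge 0$'' only disposes of the case $\mu'>\nu$. When $\mu'=\nu$ the event is $\{T'=0\}$, and you would additionally need $\mathbb{P}[T'=0]\le\prod_i(1-p_i)\le e^{-\nu}$. So the limit argument is the one to write down: the event shrinks as $\epsilon$ grows, and your bound is continuous as $\epsilon\to 1^-$.

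Also, the textbook second-derivative argument does not really break down on $(1/2,1]$. For $h(\epsilon):=(1+\epsilon)\ln(1+\epsilon)-\epsilon-\epsilon^2/3$ one has $h'(0)=0$, $h'(1)=\ln 2-2/3>0$, and $h''$ changes sign only once, so $h'\ge 0$ on $[0,1]$. Your route via $\ln(1+\epsilon)\ge 2\epsilon/(2+\epsilon)$ is nonetheless slicker.
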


\paragraph{Submodular functions.} A function $f:\{0,1\}^n \xrightarrow{} \mathbb{R}$ is submodular iff $f(S) + f(T) \ge f(S \cup T)+ f(S \cap T)$ for any $S \subseteq [n]$ and $T \subseteq [n]$. Equivalently, $f$ has diminishing marginal returns: $f(S \cup \{i\})-f(S) \ge f(T \cup \{i\})-f(T)$ for any $S \subseteq T$ and $i\notin T$.

For a submodular function $f$, let its base polytope in $\mathbb{R}^n$ be 
\[
        \mathcal{B}(f) := \left\{ y \in \mathbb{R}^n \;\middle|\; \sum_{v \in A} y(v) \le f(A)\ \forall A \subseteq [n],\ \sum_{v \in [n]} y(v) = f([n]) \right\}.\]

In this work, we consider the following classes of submodular functions.
\begin{enumerate}
  
    \item Boolean submodular functions: the value of each $f(S)$ is either 0 or 1.

    \item Directed cut of hyper-edges: each $f$ is associated with two subsets $L$ and $R$ such that $f(S)=1$ iff there exist $S \cap L \neq \varnothing$ and $\overline{S} \cap R \neq \varnothing$.    

    \item $r$-junta submodular functions: each submodular function $f$ depends on at most $r$ variables.

    \item Monotone submodular functions: each $f$ has $f(S) \le f(T)$ whenever $S \subseteq T$.

\end{enumerate}

\paragraph{Sparsification and sensitivity.} For the sake of generality, we consider weighted summations of submodular functions in the rest of this work. Given $m$ submodular functions $f_1,\ldots,f_m:\{0,1\}^n \rightarrow \mathbb{R}_{\ge 0}$ with initial weights $b_1,\ldots,b_m$, let $F(S):=\sum_{j=1}^m b_j \cdot f_j(S)$ be their weighted summation. So an $s$-sparsification of $F(x)$ under the cardinality constraint $S \in {[n] \choose \le k}$ is a weight vector $w \in \mathbb{R}^m_{\ge 0}$ with at most $s$ non-zero entries such that
\[
\forall S \in {[n] \choose \le k}, F(S) = (1\pm \epsilon) \cdot \left( \sum_{j=1}^m w_j \cdot f_j(S) \right).
\]
Then $\sigma_i^{(k)} := \max_{S \in {[n] \choose \le k}} \frac{f_i(S)}{F(S)}$ given initial weights $b_1,\ldots,b_m$.

We remark that one can always reduce the sparsification of a weighted summation like \eqref{eq:sparsification} and \eqref{eq:sparsification_cardinality} to the sparsification of a unweighted summation. 
This is because as $\epsilon \to 0^+$, the term $b_i f_i$ can be approximated by $\lceil b_i/\epsilon\rceil$ copies of $\epsilon f_i$, and the limit recovers $F(S)$. While we state our formal results for weighted summations of submodular functions, we only need to prove the unweighted case by the above reduction.

\paragraph{Lov\'{a}sz Extension.} 
The Lov\'{a}sz extension \cite{Lovasz83} of a submodular function $f$ with $f(\varnothing)=0$ is
\begin{equation}\label{eq:Lovasz_ext}
    \widehat f(z):=\int_0^{\infty} f\big( \{i:z(i) \ge \lambda\} \big) \mathrm{d} \lambda.
\end{equation}
Equivalently, let $\pi$ denote the permutation of $[n]$ with $z\big( \pi(1) \big) \ge z\big( \pi(2) \big) \ge \cdots \ge z\big( \pi(n) \big)$. Recall that $\mathrm{Pre}_\pi(k)=\{\pi(1),\pi(2),\cdots,\pi(k)\}$ such that

\begin{equation}\label{eq:Lovasz_ext_per}
\widehat f(z)=\sum_{i=1}^{n-1} \bigg( z\big( \pi(i) \big)-z\big( \pi(i+1) \big) \bigg) \cdot f\big( \mathrm{Pre}_\pi(i) \big) + z\big( \pi(n) \big) \cdot f\big( \mathrm{Pre}_\pi(n) \big).
\end{equation}

We will use these two equivalent definitions alternatively. Here are basic properties of the Lov\'{a}sz Extension $\widehat{f}$.

\begin{enumerate}
    \item If $f\geq 0$, then $\widehat{f}\geq 0$.
    \item $\widehat f(\mathbf{1}_S)=f(S), \forall S\subseteq [n]$.
    \item The Lov\'{a}sz extension $\widehat{f}$ is convex iff $f$ is submodular.
    \item If $f$ is monotone, $\widehat{f}$ is monotone: $\widehat{f}(x) \ge \widehat{f}(y)$ for $x \ge y$ (i.e. $\forall i,x_i\geq y_i$).
    \item $\widehat{f}(cx) = c\widehat{f}(x)$ for any scalar $c \ge 0$ and vector $x$.
\end{enumerate}

\paragraph{Submodular Minimization.} We use  submodular minimization algorithms as a primitive to compute the sampling weight $\rho_i$. Because there is no efficient algorithm for the exact submodular minimization under a cardinality constraint~\cite{SF11}, we consider minimization in its Lov\'{a}sz extension for a fractional solution.

In the rest of this work, besides $\mathcal{T}_{eval}(f)$, we use $\mathcal{T}_{min}(f)$ to denote the time complexity of minimizing the Lov\'{a}sz Extension $\widehat{f}(x)$ under $O(1)$ linear constraints of $x$. Specifically, if the optimization goal is to find $x^* := \arg \min_{x \in [0,1]^n} \widehat{f}(x)$ without any extra constraint, it finds a subset $S^* \in\{0,1\}^n$ in  $\mathcal{T}_{min}(f)$ time. If the optimization goal is to find a minimizer with extra constraints (like a cardinality constraint $\sum_i x_i \le k$), it finds a fractional minimizer $x \in [0,1]^n$ for its Lov\'{a}sz extension $\widehat{f}$ in $\mathcal{T}_{min}(f)$ time.

We state the best bounds \cite{JLSZ24,LSS15} on $\mathcal{T}_{min}(f)$.
\begin{lemma}\label{lem:submodular_min}    
    For any submodular function $f$, 
    $\mathcal{T}_{min}(f) \le O(\min(n^3 \log n \cdot \mathcal{T}_{eval}(f) + n^4 \log n, n^2 \log(nM) \cdot \mathcal{T}_{eval}(f) + n^3 \log^{O(1)}(nM)))$, where $M=\max_S f(S)$.

\end{lemma}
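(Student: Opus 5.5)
The plan is to derive the stated bound from the two cited minimization algorithms, using the Lov\'{a}sz extension as the common interface. Throughout I take $f(\varnothing)=0$ after shifting, which changes neither minimizers nor the form of $\widehat{f}$.

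First, I would set up the oracle that both algorithms need. By property~3 of the Lov\'{a}sz extension, $\widehat{f}$ is convex, so minimizing it over $[0,1]^n$ intersected with $O(1)$ extra linear constraints is a convex program over a polytope. The needed oracle returns $\widehat{f}(x)$ together with a subgradient at any point $x$. I would compute it by Edmonds' greedy algorithm:
\begin{itemize}
\item sort the coordinates of $x$ to get the permutation $\pi$, in $O(n\log n)$ time;
\item evaluate $f(\mathrm{Pre}_\pi(i))$ for $i=0,\dots,n$, using $O(n)$ evaluations;
\item take $y(\pi(i))=f(\mathrm{Pre}_\pi(i))-f(\mathrm{Pre}_\pi(i-1))$.
\end{itemize}
This $y$ is a vertex of $\mathcal{B}(f)$ and satisfies $\widehat{f}(z)\ge\langle y,z\rangle$ for all $z$, with equality at $z=x$ by \eqref{eq:Lovasz_ext_per}. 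Hence $y$ is a subgradient, and one oracle call costs $O(n\,\mathcal{T}_{eval}(f)+n\log n)$.

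Second, I would run two solvers with this oracle and keep the faster one.
\begin{itemize}
\item \emph{Weakly polynomial bound.} Run the cutting-plane method of \cite{LSS15}. It needs $O(n\log(nR/\varepsilon))$ oracle calls plus $O(n^3\log^{O(1)})$ additional arithmetic. Since the values of $f$ are bounded by $M$, an additive accuracy of $\varepsilon=1/\mathrm{poly}(nM)$ suffices, assuming integer values or a polynomial-precision encoding. This gives $O(n^2\log(nM)\,\mathcal{T}_{eval}(f)+n^3\log^{O(1)}(nM))$.
\item \emph{Strongly polynomial bound.} Use the strongly polynomial algorithm of \cite{JLSZ24}. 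It combines cutting planes with arguments that progressively fix coordinates or reduce the dimension, and it gives $O(n^3\log n\,\mathcal{T}_{eval}(f)+n^4\log n)$. The $O(1)$ additional linear constraints only add $O(1)$ facets to the feasible region, so the same analysis should go through.
\end{itemize}

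Third, in the unconstrained case I would turn the fractional output into a set. By \eqref{eq:Lovasz_ext}, $\widehat{f}(x)$ is an average of the values $f(\{i:x(i)\ge\lambda\})$ over $\lambda\in[0,1]$. So the best threshold level set satisfies $f(S)\le\widehat{f}(x)$. Scanning all $n+1$ thresholds costs only $O(n)$ further evaluations, and it yields an exact minimizer $S^*$ once the additive error is below the gap between distinct values of $f$.

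The main obstacle is the constrained case in the strongly polynomial regime. I need to check that the dimension-reduction arguments of \cite{JLSZ24} still work with the added constraints $\sum_i x_i\le k$. If they do not, I would fall back on the weakly polynomial term, which already accommodates arbitrary linear constraints, so the overall minimum bound still holds.
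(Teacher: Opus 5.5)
\textbf{Gap: the constrained case.} For unconstrained minimization your argument matches what the paper relies on. The paper gives no proof beyond citing \cite{JLSZ24} for the strongly polynomial term and \cite{LSS15} for the weakly polynomial one, and your greedy subgradient oracle ($O(n)$ evaluations per call), the $O(n\log(nM))$-call accounting and the threshold rounding (exact for integer-valued $f$) correctly explain how those citations give the two terms. The problem is the constrained case. The paper's definition of $\mathcal{T}_{min}$ explicitly includes it, and it is the case Algorithm~\ref{alg:bicriteria_min} needs. You say the analysis of \cite{JLSZ24} ``should go through'' because only $O(1)$ facets are added. That overlooks the hypothesis the algorithm rests on: it minimizes convex functions \emph{with an integer minimizer}, and its lattice-based dimension reduction uses that integrality. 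Under a cardinality constraint the minimizer of $\widehat F$ is genuinely fractional. Take $F=2\,\mathrm{cut}_{1\to 2}+2\,\mathrm{cut}_{1\to 3}+\mathrm{cut}_{2\to 3}+\mathrm{cut}_{3\to 2}$ on $\{1,2,3,4\}$ with $u=1$, $v=4$, $k=2$. On the feasible region $\widehat F(x)=4-2(x_2+x_3)+|x_2-x_3|$, whose unique minimizer is $(1,\tfrac12,\tfrac12,0)$ with value $2$, while every feasible set has $F$-value at least $3$. This is exactly why the bicriteria rounding is needed at all.

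\textbf{Your fallback does not close it.} An $O(B)$ bound for the weakly polynomial term $B$ does not give $O(\min(A,B))$; you need both $O(A)$ and $O(B)$. When $\log M\gg n\log n$, the term $B$ is much larger than $A$. Moreover, cutting-plane methods return only an $\varepsilon$-approximate minimizer. In the unconstrained case your rounding recovers exactness, but for a fractional target there is no such step. So you do not obtain the exact $x^*$ that the definition of $\mathcal{T}_{min}$ asks for, and that the inequality $\widehat F(x^*)\le \mathrm{OPT}_{u,v}^{(k)}$ in the proof of Lemma~\ref{lem:submodular_cardinality_bicriteria} uses. For integer-valued $F$ and $\alpha=1/2$, an additive error below $1/2$ would suffice downstream, but that has to be argued.

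\textbf{A possible exact route.} Lagrangify $\sum_i x(i)\le k$. For fixed $\lambda\ge 0$ the problem becomes unconstrained minimization of $F(S)+\lambda|S|$ over sets with $u\in S$, $v\notin S$. The dual is a lower envelope of at most $n$ lines indexed by $|S|$, so a Newton-type search finds the optimal $\lambda^*$ with $O(n)$ such calls. Then $\theta\mathbf{1}_{S_1}+(1-\theta)\mathbf{1}_{S_2}$ is an exact minimizer for suitable $\theta$. Here $S_1\subseteq S_2$ are nested minimizers at $\lambda^*$ with $|S_1|\le k\le |S_2|$, namely the intersection and union of the two minimizers found on either side of $\lambda^*$. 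This costs an extra factor of about $n$, so it does not recover the stated bound either. As far as the cited results go, the strongly polynomial term is justified only for unconstrained minimization.
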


A natural threshold rounding provides a bicriteria approximation (e.g.,  \cite{SF11, IB13}). Motivated by Kenneth and Krauthgamer's approach \cite{KK24} and the connection between submodular functions and directed cut functions (see Lemma~\ref{lem:ring_of_sets_digraph} and Lemma~\ref{lem:Boolean_value} below), we will apply this bicriteria approximation by enforcing one element $u \in S$ and another element $v \notin S$, like minimizing a directed cut from $u$ to $v$. We state the bicriteria approximation for completeness. Since its proof is straightforward, we defer it to Appendix~\ref{sec:proof_bicriterial}.

\begin{lemma}\label{lem:submodular_cardinality_bicriteria}
     Let $F:\{0,1\}^{n} \to \mathbb{R}_{\ge 0}$ be a non-negative submodular function, let $k$ be the cardinality parameter, and $\alpha\in(0,1)$. For any $u \in [n]$ and $v \in [n]$, let
    \[
        \mathrm{OPT}_{u,v}^{(k)}
        :=\min\{F(S): |S|\le k,u\in S,v\notin S\}.
    \]
    Procedure~\textsc{BicriteriaApproximate} in
    Algorithm~\ref{alg:bicriteria_min} returns a set $\widehat{S}$ such that
    \[
        |\widehat{S}|\le \frac{k}{\alpha},\ F(\widehat{S})\le \frac{1}{1-\alpha}\,\mathrm{OPT}_{u,v}^{(k)}, \ u\in \widehat{S}, \text{ and } v\notin \widehat{S}.
    \]
\end{lemma}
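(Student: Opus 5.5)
We plan to prove Lemma~\ref{lem:submodular_cardinality_bicriteria} by relaxing to the Lov\'{a}sz extension and then applying threshold rounding. Procedure~\textsc{BicriteriaApproximate} first solves the convex program
\[
\min \ \widehat{F}(x) \quad \text{s.t. } x \in [0,1]^n,\ x(u)=1,\ x(v)=0,\ \textstyle\sum_i x(i) \le k.
\]
This has $O(1)$ extra linear constraints, so it can be solved in $\mathcal{T}_{min}(F)$ time, and it returns a fractional minimizer $x^*$. Every feasible set $S$ of $\mathrm{OPT}_{u,v}^{(k)}$ gives a feasible point $\mathbf{1}_S$ with $\widehat{F}(\mathbf{1}_S)=F(S)$. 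Hence $\widehat{F}(x^*) \le \mathrm{OPT}_{u,v}^{(k)}$.

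The rounding considers the level sets $T_\lambda:=\{i : x^*(i) \ge \lambda\}$ for $\lambda \in (0,1]$. There are at most $n$ distinct sets among them, namely the prefixes $\mathrm{Pre}_\pi(i)$ of the sorted order. The procedure restricts to thresholds $\lambda \in (\alpha,1]$ and outputs the level set $\widehat{S}=T_\lambda$ of smallest $F$-value among those. Three properties hold for every such $\lambda$:
\begin{enumerate}
\item Since $x^*(u)=1 \ge \lambda$, we have $u \in T_\lambda$.
\item Since $x^*(v)=0<\lambda$, we have $v \notin T_\lambda$.
\item By Markov's inequality, $|T_\lambda| \le \sum_i x^*(i)/\lambda < k/\alpha$.
\end{enumerate}
So size and membership hold for every candidate.

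For the value bound, we use the integral form \eqref{eq:Lovasz_ext}. Since $x^* \le 1$ and $F \ge 0$ with $F(\varnothing)$ treated as $0$,
\[
\widehat{F}(x^*)=\int_0^1 F(T_\lambda)\,d\lambda \ \ge\ \int_\alpha^1 F(T_\lambda)\,d\lambda \ \ge\ (1-\alpha)\min_{\lambda\in(\alpha,1]} F(T_\lambda).
\]
This gives $F(\widehat{S}) \le \widehat{F}(x^*)/(1-\alpha) \le \mathrm{OPT}_{u,v}^{(k)}/(1-\alpha)$.

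The only delicate point is the normalization $F(\varnothing)$. The Lov\'{a}sz extension formula is stated for $f(\varnothing)=0$. If $F(\varnothing)>0$, we would either work with $F-F(\varnothing)$, or note directly that the integral over $[0,1]$ is still a lower bound because all terms are nonnegative. In the second case, since every $x^*$ lies in $[0,1]^n$, we would define $\widehat{F}(x)=\int_0^1 F(T_\lambda)\,d\lambda$. This agrees with $F$ on indicator vectors and remains convex, so the optimization is still valid. A second minor point is that the minimizer over $\lambda$ is attained: $T_\lambda$ takes finitely many values on $(\alpha,1]$, so we just enumerate the distinct prefixes with threshold above $\alpha$.
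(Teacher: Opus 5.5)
Your proof is correct and follows the paper's argument: you use the feasibility of $\mathbf{1}_{S^*}$ to get $\widehat{F}(x^*) \le \mathrm{OPT}_{u,v}^{(k)}$, then threshold-round on $[\alpha,1]$, with the Markov-type bound giving the size guarantee and averaging $\int_\alpha^1 F(S_\theta)\,d\theta$ giving the value guarantee. There are two small differences, neither of which matters: the algorithm's candidate pool also contains $S_\alpha$, which only helps the $\arg\min$ and still satisfies $|S_\alpha|\le k/\alpha$; and you explicitly treat $F(\varnothing)>0$ (your second option is the valid one), a case the paper implicitly assumes away.
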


\begin{algorithm}
    \caption{Minimization Under a cardinality constraint}\label{alg:bicriteria_min}
    \begin{algorithmic}
        \Procedure{BicriteriaApproximate}{$F:\{0,1\}^n \rightarrow \mathbb{R}_{\ge 0}, k, \alpha, u, v$}      
        \State $x^* \leftarrow \arg\min \left\{ \widehat{F}(x) \;\Big|\; \sum_{j \in [n]} x(j) \le k, \; x(u)=1, \; x(v)=0, \;x \in [0, 1]^n \right\}$ \Comment{convex optimization by Lemma~\ref{lem:submodular_min}}
        \State $\mathcal{S} \leftarrow \varnothing$
        \For{$\theta \in \{x^*(j) \mid x^*(j) \ge \alpha, j \in V\} \cup \{\alpha\}$}
            \State $S_\theta \leftarrow \{j \in V \mid x^*(j) \ge \theta\}$
            \State $\mathcal{S} \leftarrow \mathcal{S} \cup \{S_\theta\}$
        \EndFor
        \State \Return $\arg\min_{S \in \mathcal{S}} F(S)$
    \EndProcedure
    \end{algorithmic}
\end{algorithm}

\paragraph{Submodular functions and directed cuts.}
We will apply the classical Birkhoff’s representation theorem of the Boolean cube \cite{Birkhoff_representation_37} to the zero set $\{x:f(x)=0\}$ of a non-negative submodular function $f$ in $\{0,1\}^n$. Because it would be more convenient to use it when $f([n])=0$, we extend the ground set to $V:=[n] \cup \{v_{\varnothing}\}$ such that $f(V)=0$. 

\begin{claim} \label{claim:adding_dummy_node}
    Let $f: \{0,1\}^{n} \to \mathbb{R}_{\ge 0}$ be a submodular function with $f(\varnothing) = 0$. Let $v_{\varnothing}$ be a dummy node, and let $V := [n] \cup \{v_{\varnothing}\}$ be the new ground set. We extend the definition of $f$ to $\{0,1\}^{V}$ as $f(V)=0$ and $f(S \cup \{v_{\varnothing}\}):=f(S)$ for $S \neq [n]$. 
    Then $f$ is still a submodular function. 
\end{claim}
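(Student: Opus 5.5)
We verify submodularity of the extended $f$ on $V=[n]\cup\{v_\varnothing\}$ through the diminishing-returns characterization. Write every set in $2^V$ as $S$ or $S\cup\{v_\varnothing\}$ with $S\subseteq[n]$. The extended function is $\tilde f(S)=f(S)$ and $\tilde f(S\cup\{v_\varnothing\})=f(S)$ for $S\ne[n]$, while $\tilde f(V)=0$. So $\tilde f$ coincides with the function $g(S,b)=f(S)-b\cdot\mathbb{I}(S=[n])\, f([n])$, where $b\in\{0,1\}$ indicates whether $v_\varnothing$ is present. Equivalently, $g=h-\phi$ with $h(S,b)=f(S)$ and $\phi(T)=f([n])\cdot\mathbb{I}(T=V)$.

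The first step is to note that $h$, which ignores $v_\varnothing$, is submodular, since $S\mapsto S\cap[n]$ commutes with union and intersection. The second step is to check that $-\phi$ is submodular, i.e.\ $\phi$ is supermodular. Since $f([n])\ge 0$, this means verifying $\mathbb{I}(T_1=V)+\mathbb{I}(T_2=V)\le \mathbb{I}(T_1\cup T_2=V)+\mathbb{I}(T_1\cap T_2=V)$. If both $T_1$ and $T_2$ equal $V$, both sides are $2$. If exactly one does, say $T_1=V$, then $T_1\cup T_2=V$, so the right side is at least $1$, which equals the left side. If neither does, the left side is $0$. Hence $\phi$ is supermodular.

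The sum of two submodular functions is submodular, so $\tilde f=h-\phi$ is submodular. The agreements $\tilde f(S)=f(S)$ and $\tilde f(\varnothing)=f(\varnothing)=0$ show that this is an extension of $f$. Non-negativity is also preserved: $\tilde f(V)=0$, and every other value is some $f(S)\ge 0$.

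There is no real obstacle here. The only care needed is the sign: $f([n])\ge 0$ is exactly what makes $\phi$ supermodular rather than submodular. A direct case check on whether $v_\varnothing\in T_1,T_2$ would also work but is more tedious, so we use the decomposition instead.
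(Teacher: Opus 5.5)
Your proof is correct, but it takes a different route from the paper.

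The paper checks $f(A)+f(B)\ge f(A\cup B)+f(A\cap B)$ directly. If $A=V$ or $B=V$, the inequality is an identity. Otherwise, with $S_A=A\setminus\{v_\varnothing\}$ and $S_B=B\setminus\{v_\varnothing\}$, one has $f(A)=f(S_A)$, $f(B)=f(S_B)$ and $f(A\cap B)=f(S_A\cap S_B)$. The only possible discrepancy is $f(A\cup B)\le f(S_A\cup S_B)$, which is strict only when $A\cup B=V$, where it reads $0\le f([n])$.

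You instead write the extension as $h-\phi$, where $h(T)=f(T\cap[n])$ is the lift that ignores the dummy node and $\phi(T)=f([n])\,\mathbb{I}(T=V)$, and you show that $\phi$ is supermodular. The mathematical content is the same. In both arguments the extension agrees with $h$ off $V$, lies below it at $V$, and the only property of $f$ used beyond submodularity is $f([n])\ge 0$. Your decomposition makes that structure explicit, and it proves a slightly more general fact: lowering any submodular function on $2^V$ at the top element $V$ by a nonnegative amount preserves submodularity.

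The paper's direct check is shorter than you expected, because it splits on whether $A$ or $B$ equals $V$ rather than on whether each contains $v_\varnothing$.

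One cosmetic point: your opening sentence announces the diminishing-returns characterization, but your argument actually uses the union--intersection inequality.
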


\begin{proof}
Verify $f(A) + f(B) \ge f(A \cup B)+f(A \cap B)$ in this proof. If $A = V$ or $B = V$, the submodular inequality $f(A) + f(B) \ge f(A \cup B) + f(A \cap B)$ is clearly established. 

Then we consider $A \neq V$ and $B \neq V$. Let $S_A := A \setminus \{v_{\varnothing}\}$ and $S_B := B \setminus \{v_{\varnothing}\}$. By the definition of $f$, $f(A) = f(S_A)$, $f(B) = f(S_B)$, and $f(A \cap B) = f(S_A \cap S_B)$. 

On the other hand, the union $f(A \cup B) \le f(S_A \cup S_B)$ always holds. If $A \cup B = V$, then $f(A \cup B) = 0 \le f(S_A \cup S_B)$ since $f$ is non-negative. Otherwise, $f(A \cup B) = f(S_A \cup S_B)$.  

So $f(A) + f(B) = f(S_A) + f(S_B) \ge f(S_A \cup S_B) + f(S_A \cap S_B) \ge f(A \cup B) + f(A \cap B)$.
\end{proof}

To use Birkhoff's representation theorem, we extend the ground set such that each $f_i$ satisfies $f_i(V)=0$ from the above lemma. Notice that this does not affect our sparsification goal because ${V \choose k}$ is essentially equivalent to ${[n] \choose k}$. For convenience, we still use $[n]$ to represent the ground and assume $f_i([n])=0$ for each $i$ in the rest of this work.

The following lemma is a restatement of Birkhoff's theorem, which will be useful in characterizing the zero set of a submodular function \cite{submodualr_book_Fujishige_05}. One remark is that this set of directed edges is a special case of the directed cut function defined in \cite{KK24}. For completeness, we provide a proof in Section~\ref{sec:proof_lemma_ring}.

\begin{lemma} \label{lem:ring_of_sets_digraph}
Given a non-negative submodular function $f$ in $\{0,1\}^n$ with $f(0)$ and $f([n])=0$, 
there exists a directed edge set $E$ in $[n]$ such that a subset $S$ has $f(S) = 0$ iff  $\cut_{u \to v}(S)=0$ for every directed $(u,v) \in E$. Moreover, Procedure~\textsc{ConstructDirectedEdges} in Algorithm~\ref{alg:construct_directed_edges} computes $E$ in time $O(n^2\cdot \mathcal{T}_{min}(f))$, where $\mathcal{T}_{min}$ denotes the time of submodular minimization oracle.
\end{lemma}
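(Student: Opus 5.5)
The plan has two parts: a structural argument that the zero set $\mathcal{Z}:=\{S: f(S)=0\}$ is a ring of sets and therefore equals the family of sets closed under a directed edge relation, followed by an algorithm that finds that relation with $O(n^2)$ calls to a minimization oracle.

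\textbf{Step 1: $\mathcal{Z}$ is a ring of sets.} Note $\varnothing,[n]\in\mathcal{Z}$ by hypothesis. If $S,T\in\mathcal{Z}$, submodularity and non-negativity give
\[
0=f(S)+f(T)\ge f(S\cup T)+f(S\cap T)\ge 0,
\]
so both $S\cup T$ and $S\cap T$ lie in $\mathcal{Z}$.

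\textbf{Step 2: define the edges.} For each ordered pair $(u,v)$ with $u\neq v$, put $(u,v)\in E$ exactly when no zero set contains $u$ but not $v$, i.e.\ when every $S\in\mathcal{Z}$ with $u\in S$ also contains $v$. With this choice every $S\in\mathcal{Z}$ satisfies $\cut_{u\to v}(S)=0$ for all $(u,v)\in E$, which is one direction of the claim.

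\textbf{Step 3: the converse, via Birkhoff's argument.} Suppose $S$ cuts no edge of $E$. For each $u\in S$ and each $v\notin S$, the pair $(u,v)$ is not in $E$ (otherwise $S$ would cut it). Hence there is some $Z_{u,v}\in\mathcal{Z}$ with $u\in Z_{u,v}$ and $v\notin Z_{u,v}$. Set
\[
Z_u:=\bigcap_{v\notin S} Z_{u,v},
\]
taking $Z_u=[n]$ when $S=[n]$. By Step 1, $Z_u\in\mathcal{Z}$. Also $u\in Z_u$, and $Z_u\subseteq S$ because every $v\notin S$ is excluded by some $Z_{u,v}$. Therefore $S=\bigcup_{u\in S}Z_u$, which lies in $\mathcal{Z}$ by closure under unions (with $S=\varnothing$ handled by $f(\varnothing)=0$). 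So $f(S)=0$.

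\textbf{Step 4: the algorithm.} The pair $(u,v)$ is \emph{not} in $E$ iff $\min\{f(T): u\in T,\ v\notin T\}=0$. This restricted minimization is unconstrained submodular minimization of $f$ on the ground set $[n]\setminus\{u,v\}$, with $u$ fixed in and $v$ fixed out; restricting a submodular function this way keeps it submodular. Equivalently, it minimizes $\widehat f$ under the two linear constraints $x(u)=1$ and $x(v)=0$. Since the unconstrained Lov\'asz minimum is attained at an integral point, one oracle call of cost $\mathcal{T}_{min}(f)$ returns the exact minimum value. Procedure~\textsc{ConstructDirectedEdges} runs this test for all $n(n-1)$ ordered pairs and adds $(u,v)$ to $E$ whenever the minimum is positive, for total time $O(n^2\cdot\mathcal{T}_{min}(f))$.

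\textbf{Main obstacle.} The step that needs care is the converse in Step 3: the intersection over $v$ followed by the union over $u$ must use ring closure correctly, including the boundary cases $S=\varnothing$ and $S=[n]$. A secondary point is checking that the oracle in Step 4 returns an exact value rather than a fractional approximation. This holds because, with only the two coordinate-fixing constraints and no cardinality constraint, the problem is unconstrained submodular minimization on the remaining ground set and so has an integral optimum.
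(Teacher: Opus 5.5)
Your proposal is correct and follows essentially the same Birkhoff-style argument as the paper: the same ring-of-sets observation, the same edge set $E$, and the same union-of-intersections converse. The differences are small. You intersect, for each $u\in S$, one witness zero set $Z_{u,v}$ per excluded $v\notin S$, whereas the paper intersects all zero sets containing $u$ and then uses the non-cut property to show that this minimal set $A_u$ lies in $S$. Your Step 4 also justifies that the constrained oracle call returns the exact integral minimum, a point the paper leaves implicit.
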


\begin{algorithm}
\caption{Construct Directed Edge Set for Zero Sets of a Submodular function}\label{alg:construct_directed_edges}
\begin{algorithmic}
        \Procedure{ConstructDirectedEdges}{$f:\{0,1\}^{n} \rightarrow \mathbb{R}_{\ge 0}$}
        \State $E \leftarrow \varnothing$
        \For{$(u, v) \in [n] \times [n]$ with $u \neq v$}
            \State Compute $\mu := \min \left\{ f(S) \mid u \in S, v \notin S \right\}$ by Lemma~\ref{lem:submodular_min} 
            \If{$\mu > 0$}
                \State $E \leftarrow E \cup \{(u, v)\}$
            \EndIf
        \EndFor
        \State \Return $E$
        \EndProcedure
\end{algorithmic}    
\end{algorithm}

Furthermore, if $f$ is boolean, the directed edge set $E$ becomes a complete directed bipartite graph $L \times R$ for some $L \subseteq [n]$ and $R \subseteq [n]$. In other word, any Boolean submodular function $f:\{0,1\}^{n} \to \{0,1\}$ can be represented as a directed cut of a hyper-edge. Although this fact was known before (e.g. \cite{CCJK05}), prior results focus on existence and do not consider efficient constructions. Here, we provide a constructive proof, explicitly specifying how to obtain the directed cut representation. Its proof is deferred to Appendix~\ref{sec:proof_lemma_Boolean}.

\begin{lemma}\label{lem:Boolean_value}
    Let $f: \{0,1\}^{n} \to \{0, 1\}$ be a submodular function with $f(\varnothing) = f([n]) = 0$. Let $L := \{i\in [n]:f(\{i\})=1\}$ and $R :=\{j\in [n]:f([n]\setminus\{j\})=1\}$. 
    Then $f\equiv \cut_{L \to R}$.

    Equivalently, the set of directed edges $E$ in Lemma~\ref{lem:ring_of_sets_digraph} is a complete bipartite graph between $L$ and $R$: $E:=\{(u,v): \text{ for all } u \in L \text{ and } v \in R\}$.

    In particular, Algorithm~\ref{alg:construct_boolean_hyper_edge} constructs the directed hyperedge $E$ in $O(n \cdot \mathcal{T}_{eval}(f))$ time. 
\end{lemma}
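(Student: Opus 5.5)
We need to prove Lemma~\ref{lem:Boolean_value}: a Boolean submodular $f$ with $f(\varnothing)=f([n])=0$ equals $\cut_{L\to R}$, where $L=\{i: f(\{i\})=1\}$ and $R=\{j: f([n]\setminus\{j\})=1\}$. The plan is to show both implications of $f(S)=1 \iff (S\cap L\neq\varnothing \text{ and } \overline{S}\cap R\neq\varnothing)$ directly from submodularity and the closure of the zero set under union and intersection.

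\emph{Step 1: $f(S)=0$ whenever $S\cap L=\varnothing$.} If $S\cap L=\varnothing$, then every $i\in S$ has $f(\{i\})=0$. The zero set $\{T: f(T)=0\}$ is closed under union, since $f(A)+f(B)\ge f(A\cup B)+f(A\cap B)\ge 0$ with $f\ge 0$. Hence $f(S)=f(\bigcup_{i\in S}\{i\})=0$; the case $S=\varnothing$ is given.

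\emph{Step 2: $f(S)=0$ whenever $\overline{S}\cap R=\varnothing$.} Symmetrically, every $j\notin S$ has $f([n]\setminus\{j\})=0$. Closure under intersection gives $f(S)=f(\bigcap_{j\notin S}([n]\setminus\{j\}))=0$, with $S=[n]$ given.

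\emph{Step 3 (main step): if $u\in S\cap L$ and $v\in\overline{S}\cap R$, then $f(S)=1$.} Suppose instead $f(S)=0$. Apply submodularity to $S$ and $\{u\}$: since $u\in S$, we get $f(S)+f(\{u\})\ge f(S)+f(\{u\})$, which is trivial. A better pairing is needed. Use $A=S$ and $B=[n]\setminus\{v\}$, with $A\cup B=[n]\setminus\{v\}$ (because $v\notin S$) and $A\cap B=S$. This also gives nothing.

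Instead use the contrapositive via marginals. Set $g(T)=f(T)$ and consider $A=\{u\}$, $B=S$, so $A\cup B=S$ and $A\cap B=\{u\}$; again trivial. So take $A=\{u\}$ and $B=[n]\setminus\{v\}$. Then $A\subseteq B$ and the inequality is again trivial.

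The right tool is the chain $\{u\}\subseteq S\subseteq[n]\setminus\{v\}$ combined with Boolean values. Submodularity on $S$ and $\{u\}\cup\overline{S}\setminus\{v\}$ fails to give anything useful, so instead consider $X=\{u\}$ and $Y=S\cup\ldots$. The cleanest route is through Lemma~\ref{lem:ring_of_sets_digraph}: the zero set is a ring, so it equals the sets closed under an edge set $E$, and $f(S)=0$ iff $S$ cuts no edge of $E$. Since $f(\{u\})=1$, $\{u\}$ cuts some edge $(u,w)\in E$. Since $f([n]\setminus\{v\})=1$, some edge $(w',v)\in E$. Taking $E$ transitively closed (zero sets are preserved by closure), it remains to show $(u,v)\in E$; this is the obstacle.

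I would handle it with Boolean submodularity. Assume the ring's minimal closed set containing $u$, call it $C_u$, satisfies $v\notin C_u$. Then $f(C_u)=0$ and $f(C_u\cup\{\text{other}\})$\ldots{} Concretely, apply submodularity to $A=\{u\}\cup D$ and $B=[n]\setminus\{v\}$ chosen so that $A\cup B=[n]$ and $A\cap B=C_u$. Then $f(A)+f(B)\ge 0+0$, which is not a contradiction by itself. For the contradiction, use $A=\{u\}$ and $B=C_u\setminus\{u\}\cup(\text{complement stuff})$ so that $f(A)=1$ and $f(B)=0$ while both $f(A\cup B)=1$ and $f(A\cap B)=1$ are forced, since $f$ is Boolean and these sets contain $u$ but lie in a region where $f=1$. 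Pinning down this exact pair, and using the $\{0,1\}$ range so that $1+0\ge f+f$ forces one side to be $0$, is the key difficulty. I expect the edge-set formulation to show that every $u$ with $f(\{u\})=1$ reaches every $v$ with $f([n]\setminus\{v\})=1$.

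\emph{Step 4: the algorithm.} This part is routine. Computing $L$ and $R$ takes $2n$ evaluations, giving $O(n\cdot\mathcal{T}_{eval}(f))$ time.
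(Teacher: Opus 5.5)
Your Steps 1, 2 and 4 are correct and match the paper: the zero set is closed under union and intersection, and the algorithm needs $2n$ oracle calls. Step 3 is the only nontrivial direction, and it is never proved. Both routes you sketch for it are dead ends.

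\emph{Pairs involving $\{u\}$.} Any submodularity inequality with $A=\{u\}$ is useless. If $u\in B$, it reduces to the identity $f(\{u\})+f(B)\ge f(B)+f(\{u\})$. If $u\notin B$, then $f(A\cap B)=f(\varnothing)=0$, so you can never force $f(A\cup B)=f(A\cap B)=1$.

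\emph{The edge-set route.} The ring structure of the zero set cannot give $(u,v)\in E$ on its own. Take $g=\cut_{1\to 2}+\cut_{3\to 4}$ on $[4]$. It is submodular with $g(\varnothing)=g([4])=0$, and its edge set from Lemma~\ref{lem:ring_of_sets_digraph} is $E=\{(1,2),(3,4)\}$. We have $g(\{1\})>0$ and $g([4]\setminus\{4\})>0$, yet $g(\{1,2\})=0$, although $1\in\{1,2\}$ and $4\notin\{1,2\}$. So Step 3 genuinely depends on the $\{0,1\}$ range, and your proposal never says how that range is used.

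The missing ingredient is the interval property, which the paper proves first. If $A\subseteq B$ and $f(A)=f(B)=1$, then $f(S)=1$ for every $A\subseteq S\subseteq B$. The proof runs as follows.
\begin{itemize}
\item Take an inclusion-minimal $S$ in the interval with $f(S)=0$. Since $S\neq A$, you can pick $i\in S\setminus A$.
\item By minimality $f(S\setminus\{i\})=1$, so the marginal of $i$ at $S\setminus\{i\}$ is $-1$.
\item Its marginal at $B\setminus\{i\}\supseteq S\setminus\{i\}$ is $1-f(B\setminus\{i\})\ge 0$.
\item This contradicts diminishing returns.
\end{itemize}
Applying this with $A=\{u\}$ and $B=[n]\setminus\{v\}$ gives Step 3.

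Equivalently, the single pair you were searching for does exist, but it depends on a transition set that only a chain or minimality argument locates. Walk along a chain from $\{u\}$ up to $S$, and let $T\to T\cup\{i\}$ be a step where $f$ drops from $1$ to $0$. The sets $T\cup\{i\}$ and $[n]\setminus\{v,i\}$ have union $[n]\setminus\{v\}$ and intersection $T$. Submodularity would then give $0+f([n]\setminus\{v,i\})\ge 1+1$, which is impossible since $f\le 1$.
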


\begin{algorithm}[h]
    \caption{Construct directed hyper-edge for Boolean submodular}\label{alg:construct_boolean_hyper_edge}
    \begin{algorithmic}   
        \Procedure{ConstructHyperEdge}{$f$} \Comment{Construct a hyper-edge in $O(n)$ queries via Lemma~\ref{lem:Boolean_value}}
           \State $L \leftarrow \varnothing, \ R \leftarrow \varnothing$
            \For{$u \in V$}
                \If{$f(\{u\}) = 1$}
                    \State $L \leftarrow L \cup \{u\}$
                \EndIf
                \If{$f([n] \setminus \{u\}) = 1$}
                    \State $R \leftarrow R \cup \{u\}$
                \EndIf
            \EndFor
            \State Return $E \leftarrow L \times R$ 
        \EndProcedure
    \end{algorithmic}
\end{algorithm}

\section{Sparsifying Boolean submodular functions and its Generalization}\label{sec:Boolean}

In this section, we consider how to sparsify two simple submodular families: Boolean submodular functions and $r$-junta submodular functions. 
Compared to Theorem~\ref{thm:inform_sparsify_general_submodular} for general submodular functions (whose proof is shown in Section~\ref{sec:general_cardinality}), the algorithms and analyses of these two families are much simpler, although the size of the sparsifier is almost the same. In particular, the algorithms of these two families are faster than the algorithm of Theorem~\ref{thm:inform_sparsify_general_submodular} for the general case; and the analyses shed more intuition about bounding and approximating the sensitivity $\sigma_i^{(k)}$. So we present a complete proof for these two families in this section.

For the sake of generality, we state our main technical result for sparsifying \emph{weighted} summations of Boolean submodular functions.     
\begin{theorem}\label{thm:total_sensitivity_Boolean}
    Given $m$ Boolean submodular functions $f_i: \{0,1\}^n \to \{0,1\}$ and initial weights $b_1,\ldots,b_m$, let $F(S) := \sum_{i=1}^m b_i f_i(S)$ and sensitivity $\sigma_i^{(k)} := \max_{|S| \leq k} \frac{b_i f_i(S)}{F(S)}$. Then, the total sensitivity $\mathfrak{S}:=\sum_{i=1}^m \sigma_i$ is at most $kn$.

    More importantly, Algorithm~\ref{alg:boolean_oversampling} computes $\rho_i^{(k)}$ in time $O\big(mn^2 + mn \cdot \mathcal{T}_{eval}(f_i) + n^2 \cdot \mathcal{T}_{\min}(F) \big)$ such that $\rho_i^{(k)}\geq \sigma_i^{(k)}$ and $\sum_{i=1}^m \rho_i^{(k)} \leq 4kn$. 
\end{theorem}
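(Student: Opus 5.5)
We prove the two parts separately, reducing weighted to unweighted only where convenient; the argument below works directly with weights $b_i$.

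\emph{Existential bound.} By Claim~\ref{claim:adding_dummy_node} we may assume $f_i(\varnothing)=f_i([n])=0$. By Lemma~\ref{lem:Boolean_value}, each $f_i=\cut_{L_i\to R_i}$, so its directed edge set is $E_i=L_i\times R_i$.

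For each $i$ with $\sigma_i^{(k)}>0$, fix a maximizer $S_i\in{[n]\choose \le k}$ with $f_i(S_i)=1$. Then $S_i$ cuts some edge of $E_i$; pick one and call it $(u_i,v_i)$, so $u_i\in S_i$ and $v_i\notin S_i$. This defines an assignment $A(i)=(u_i,v_i)$. Define the weighted counts $w_u(b):=\sum_{i:A(i)=(u,b)} b_i$.

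We now lower bound $F(S_i)$. Every $j$ with $A(j)=(u_i,b)$ and $b\notin S_i$ satisfies $u_i\in S_i\cap L_j$ and $b\in R_j\setminus S_i$, so $f_j(S_i)=1$. Distinct $j$ are counted once each, hence
\[
F(S_i)\ge \sum_{b\notin S_i} w_{u_i}(b).
\]
Since $u_i\in S_i$, the set $S_i\setminus\{u_i\}$ has size at most $k-1$ and avoids $v_i$. Therefore $F(S_i)\ge \min_{T:|T|\le k-1,\,v_i\notin T}\sum_{b\notin T,\,b\ne u_i} w_{u_i}(b)$, and
\[
\sigma_i=\frac{b_i}{F(S_i)}\le \frac{b_i}{\min_{T:|T|\le k-1,\,v_i\notin T}\sum_{b\notin T,\,b\ne u_i} w_{u_i}(b)}.
\]

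Group the $i$ by their assigned edge $(u,v)$. Summing $b_i$ over the group gives $w_u(v)$. For fixed $u$, apply Lemma~\ref{lem:sum_over_min_k} to the vector $(w_u(b))_{b\ne u}$ with parameter $k-1$; this bounds the sum over $v$ by $k$. Summing over the $n$ choices of $u$ gives $\mathfrak S\le kn$. The identity \eqref{eq:sum_i_sigma} in the overview is exactly this computation, made rigorous with weights.

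\emph{Algorithm.}
\begin{enumerate}
\item For each $i$, compute $L_i$ and $R_i$ via Algorithm~\ref{alg:construct_boolean_hyper_edge}. This costs $O(mn\,\mathcal T_{eval})$.
\item For each pair $(u,v)$, run \textsc{BicriteriaApproximate} on $F$ with $\alpha=1/2$ to get $\wh{S_{u,v}}$. This costs $n^2\,\mathcal T_{min}(F)$ in total.
\item Set $\rho_i:=\max_{u\in L_i,v\in R_i} 2b_i/F(\wh{S_{u,v}})$. Computing these maxima takes $O(mn^2)$ time. Evaluating $F(\wh{S_{u,v}})$ for all $n^2$ pairs costs $n^2$ evaluations of $F$, which is dominated by the minimization cost.
\end{enumerate}

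\emph{Upper bound $\rho_i\ge\sigma_i$.} Take the maximizer $S_i$ and an edge $(u,v)\in E_i$ cut by it. Then $F(S_i)\ge \mathrm{OPT}^{(k)}_{u,v}\ge F(\wh{S_{u,v}})/2$ by Lemma~\ref{lem:submodular_cardinality_bicriteria}, which gives $\rho_i\ge \sigma_i$.

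\emph{Sum bound.} By Lemma~\ref{lem:submodular_cardinality_bicriteria}, $|\wh{S_{u,v}}|\le 2k$ and $\wh{S_{u,v}}$ cuts $(u,v)$. For the pair $(u,v)\in E_i$ attaining the maximum in $\rho_i$, this means $f_i(\wh{S_{u,v}})=1$. Hence
\[
\rho_i=\frac{2b_i f_i(\wh{S_{u,v}})}{F(\wh{S_{u,v}})}\le 2\sigma_i^{(2k)}.
\]
Applying the existential bound with cardinality $2k$ yields $\sum_i\rho_i\le 2\cdot 2kn=4kn$.

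\emph{Main obstacle.} The delicate step is the lower bound on $F(S_i)$ through the assignment. It must use only functions assigned to edges leaving the same tail $u_i$, so that the weights $w_{u_i}(\cdot)$ are fixed independently of $i$ and Lemma~\ref{lem:sum_over_min_k} applies per tail vertex. Replacing $S_i$ by the worst set $T$ is valid because it only decreases the denominator.
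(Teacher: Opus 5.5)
Your proposal is correct and follows the paper's proof essentially step for step: the same assignment of each $f_i$ to a cut edge of its maximizer $S_i$, the same per-tail application of Lemma~\ref{lem:sum_over_min_k} with parameter $k-1$ to get $\mathfrak{S}\le kn$, and the same sandwich $\sigma_i^{(k)}\le\rho_i^{(k)}\le 2\sigma_i^{(2k)}$ from the $\alpha=1/2$ bicriteria approximation. The differences are cosmetic, since you carry the weights $b_i$ directly instead of reducing to the unweighted case; one small fix is that the maximum defining $\rho_i$ should range only over pairs $u\neq v$, because $\wh{S_{u,u}}$ is undefined.
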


\begin{algorithm}[h]
    \caption{Sensitivity Oversampling for Boolean Submodular Functions}\label{alg:boolean_oversampling}
    \begin{algorithmic}
        \Procedure{ApproximateSensitivity}{$f_1,\ldots,f_m:\{0,1\}^n \rightarrow \mathbb{R}_{\ge 0},k$}
        \For{$i \in [m]$}

                \State Call Algorithm~\ref{alg:construct_boolean_hyper_edge} to construct the directed edge set $E_i=L_i\times R_i$ for each $f_i$

        \EndFor
        
        \State $F(S) \leftarrow \sum_{i=1}^m f_i(S)$

        \For{every $u \in [n]$ and $v \in [n] \setminus \{u\}$}
            \State $\widehat{S_{u,v}} \leftarrow \textsc{BicriteriaApproximate}(F, k, 1/2, u,v)$ in Algorithm~\ref{alg:bicriteria_min}
        \EndFor

        \For{$i \in [m]$}     
        \State $\widehat{S_i} \leftarrow \arg\min_{(u,v) \in E_i} F(\widehat{S_{u,v}})$ \Comment{for each $f_i$, enumerate the cut-edge in $E_i$}
        
            \State $\rho_i^{(k)} \leftarrow \frac{2}{F(\widehat{S_i})}$
        \EndFor
        
        \State \Return $\{\rho_i^{(k)}\}_{i=1}^m$
        \EndProcedure
    \end{algorithmic}
\end{algorithm}

A direct application of Theorem~\ref{thm:total_sensitivity_Boolean} is to sample a sparsifier with weights proportional to $\rho_i^{(k)}$. Given $\rho_i^{(k)} \ge \sigma_i^{(k)}$ and $\sum_{i} \rho_i^{(k)}\leq 4kn$, this bounds the size of the sparsifier by $O(\frac{n k^2 \log n}{\epsilon^2})$ by a union bound.

\begin{corollary}\label{cor:sparsity_Boolean}
    Given any $m$ Boolean submodular functions $f_1,\ldots,f_m:\{0,1\}^n \rightarrow \{0,1\}$ and initial weights $b_1,\ldots,b_m$, for their summation $F(S):=\sum_i b_i f_i(S)$, there exists an sampling algorithm that computes a sparsification $w_1,\ldots,w_m$ of sparsity $O(\frac{n k^2 \log n}{\epsilon^2})$ in time $O\big(mn^2 + mn \cdot \mathcal{T}_{eval}(f_i) + n^2 \cdot \mathcal{T}_{\min}(F)\big)$ such that with probability $0.99$,    \begin{equation}\label{eq:sparsify_guarantee}
        F(S) = (1 \pm \epsilon) \sum_i w_i f_i(S), \qquad \forall S \in {[n] \choose \le k}.        
    \end{equation}
\end{corollary}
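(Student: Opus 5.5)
The plan is to run the importance-sampling framework on top of Theorem~\ref{thm:total_sensitivity_Boolean}. First I would call Algorithm~\ref{alg:boolean_oversampling} to obtain scores $\rho_i:=\rho_i^{(k)}$ with $\rho_i \ge \sigma_i^{(k)}$ and $\sum_i \rho_i \le 4kn$. The running time of this call is exactly the time claimed in the corollary. The sampling step afterwards takes only $O(m)$ time.

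Next, fix an oversampling factor $C = c\,\frac{k\log n}{\epsilon^2}$ for a suitable constant $c$. Set $p_i := \min(1, C\rho_i)$ and keep each $f_i$ independently with probability $p_i$. Give each kept function the weight $w_i := b_i/p_i$, and set all other weights to $0$. The expected number of nonzero weights is at most $C\sum_i \rho_i = O(\frac{nk^2\log n}{\epsilon^2})$. By Markov's inequality, or by a Chernoff bound, the size is $O(\frac{nk^2\log n}{\epsilon^2})$ with probability at least $0.999$.

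For correctness, fix $S \in {[n] \choose \le k}$ and let $T := \sum_i w_i f_i(S)$, so that $\E[T] = F(S)$. Each summand with $p_i<1$ satisfies
\[
\frac{b_i f_i(S)}{p_i} \le \frac{\sigma_i^{(k)} F(S)}{C\rho_i} \le \frac{F(S)}{C},
\]
which uses the definition of $\sigma_i^{(k)}$ with the weights $b_i$. The summands with $p_i=1$ are deterministic, so they can be dropped from the deviation. Applying Theorem~\ref{Thm:Chernoff_bound} with $a = F(S)/C$ and $\mu = F(S)$ gives
\[
\Pr\big[|T-F(S)| \ge \epsilon F(S)\big] \le 2\exp(-\epsilon^2 C/3) = 2 n^{-ck/3}.
\]

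The last step is the union bound, and this is where the $k^2$ in the size comes from. There are at most $\sum_{j\le k}\binom{n}{j} \le (n+1)^k$ sets to control. Choosing $c$ large enough, say $c=12$, makes the total failure probability $2(n+1)^k n^{-4k} \le 0.001$. Combined with the size bound, this gives success probability $0.99$.

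A deviation of at most $\epsilon F(S)$ gives $F(S) = (1\pm\epsilon)T$ only up to the factor $1/(1\pm\epsilon)$. I would fix this by running the argument with $\epsilon/2$.

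I do not expect a real obstacle here. The only care needed is to get the union bound count right, so that the exponent $k\log n$ in the failure probability matches the factor $k$ in $C$. Sets $S$ with $F(S)=0$ need a separate word. For such $S$, every $f_i(S)$ with $b_i>0$ is $0$, so $T=0$ and the guarantee holds trivially.
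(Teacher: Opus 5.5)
Your proposal is correct and matches the paper's proof: the paper runs Algorithm~\ref{alg:boolean_oversampling} and then applies Lemma~\ref{lemma:general_importance_sampling} with $\mathcal{S}=\binom{[n]}{\le k}$. That is, it uses $p_i=\min\{1,\kappa\rho_i\}$ with $\kappa=\Theta(\log(|\mathcal{S}|/\delta)/\epsilon^2)=\Theta(k\log n/\epsilon^2)$, the same Chernoff bound with $a=F(S)/\kappa$ and $\mu=F(S)$, and a union bound over $\mathcal{S}$. Your deviations are cosmetic: Markov instead of Chernoff for the size bound, and the $\epsilon/2$ rescaling that gets $F(S)=(1\pm\epsilon)T$ in the stated direction, which the paper glosses over. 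One small numerical slip: $c=12$ does not give $2(n+1)^k n^{-4k}\le 0.001$ for small $n$ (for $n=2$, $k=1$ it gives $0.375$), but a larger absolute constant such as $c=40$ does, so the argument stands.
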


Because the proof of Corollary~\ref{cor:sparsity_Boolean} is similar to previous proofs \cite{RafieyY22submodular, KK24}, we defer it to Appendix~\ref{sec:proof_sparsfy_boolean}. Next, we combine Kenneth and Krauthgamer's method \cite{KK24} with our sparsification of Boolean submodular functions to give an efficient sparsification of $r$-junta submodular functions.

\begin{theorem}\label{thm:rank_k_submodular_sensitivity}
    Let $f_1, \dots, f_m: \{0,1\}^n \to \mathbb{R}_{\ge 0}$ be a collection of $r$-junta submodular functions and $F(S)=b_1 \cdot f_1(S) + \cdots + b_m \cdot f_m(S)$ be their summation. Then there exists a sparsification algorithm that finds $w$ of sparsity ${O}(r^2 \cdot \frac{n k^2 \log n}{\epsilon^2})$ in time $\tilde{O}\left(mr^2 \cdot \mathcal{T}_{min}(f_i)+\min(n^2,mr^2) \cdot n^2 \right)$ such that with probability $0.99$,    
    \[
    F(S) = (1 \pm\epsilon) \cdot \sum_i w_i f_i(S) \qquad \forall S \in {[n] \choose \le k}.
    \]

\end{theorem}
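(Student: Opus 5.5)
We reduce $r$-junta submodular functions to Boolean submodular (directed hyper-edge) functions, following Kenneth and Krauthgamer's decomposition~\cite{KK24}, and then invoke Theorem~\ref{thm:total_sensitivity_Boolean} and the sampling argument of Corollary~\ref{cor:sparsity_Boolean}. As in the preliminaries, we assume each $f_i$ has $f_i(\varnothing)=0$ and $f_i([n])=0$ after adding the dummy node (Claim~\ref{claim:adding_dummy_node}). Write $J_i$ for the at most $r$ (plus dummy) relevant coordinates of $f_i$.

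\textbf{Step 1 (sandwich by directed cuts).} For each ordered pair $(u,v)\in J_i\times J_i$ with $u\neq v$, set $c_i(u,v):=f_i^{u\to v}=\min\{f_i(T): u\in T, v\notin T\}$. Only $O(r^2)$ such minimizations are needed per function, which gives the $m r^2\cdot\mathcal{T}_{min}(f_i)$ term. Define $g_i(S):=\sum_{(u,v)} c_i(u,v)\,\cut_{u\to v}(S)$. We aim for
\[
\max_{(u,v)\in J_i^2} c_i(u,v)\,\cut_{u\to v}(S)\;\le\; f_i(S)\;\le\; g_i(S)\le r^2 f_i(S).
\]
The lower bound is immediate from the definition of $f_i^{u\to v}$. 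For the upper bound, we use submodularity and $f_i(\varnothing)=f_i(J_i)=0$ to show $f_i(S)\le \max_{u\in S,\,v\notin S} c_i(u,v)$, following~\cite{KK24}. The argument writes $S=\bigcup_{u\in S}\bigcap_{v\notin S} T_{u,v}$, where $T_{u,v}$ is a minimizer for $(u,v)$, and then applies the fact that non-negative submodular functions that vanish at the endpoints are subadditive under unions and intersections in this sense. If the $\max$ form fails, we fall back on the sum form $f_i(S)\le\sum_{u\in S,v\notin S}c_i(u,v)$, which suffices here.

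\textbf{Step 2 (sensitivity bound).} Consequently $\sigma_i^{(k)}(f_i;F)\le \max_{(u,v)}\sigma^{(k)}(c_i(u,v)\cut_{u\to v}; F)$. Since $F\ge r^{-2}\sum_i g_i=:r^{-2}G$, each such sensitivity is at most $r^2$ times the sensitivity of the edge $(u,v)$ with weight $c_i(u,v)$ within the weighted directed graph $G$. Directed edges are Boolean submodular functions, so Theorem~\ref{thm:total_sensitivity_Boolean} applied to $G$ gives total sensitivity at most $kn$, and its algorithm yields computable $\rho$'s with sum at most $4kn$. Define $\rho_i:=r^2\sum_{(u,v)\in J_i^2}\rho_{i,(u,v)}$; this gives $\sum_i\rho_i\le 4r^2kn$.

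\textbf{Step 3 (sampling and running time).} Sampling each $f_i$ with probability $\min(1,\rho_i\cdot O(k\log n/\epsilon^2))$ and weighting by the inverse probability, the Chernoff bound plus a union bound over $\binom{n}{\le k}\le n^k$ sets (as in Corollary~\ref{cor:sparsity_Boolean}) gives sparsity $O(r^2nk^2\log n/\epsilon^2)$. For the time bound, the graph $G$ is obtained by merging parallel edges, which leaves at most $\min(n^2,mr^2)$ distinct edges. Computing $\widehat{S_{u,v}}$ for all pairs is done once on $G$, which is a weighted directed cut function, so its constrained Lovász minimization is an LP/flow computation costing $\tilde O(n^2)$ per pair over the edge set. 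This yields the $\min(n^2,mr^2)\cdot n^2$ term, and each $\rho_{i,(u,v)}$ is then a lookup.

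\textbf{Main obstacle.} The main obstacle is Step 1: establishing that $f_i$ is dominated by the directed-cut approximation with only an $r^2$ loss. The key is to check the union/intersection decomposition carefully using $f_i([n])=0$. The rest is bookkeeping on top of Theorem~\ref{thm:total_sensitivity_Boolean}.
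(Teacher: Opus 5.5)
This is essentially the paper's proof. You use the KK24 sandwich via $f_i^{u\to v}$, the aggregate weighted directed cut function ($\Phi$ in the paper, your $G$) with $r^{-2}\Phi\le F\le\Phi$, and Theorem~\ref{thm:total_sensitivity_Boolean} on the $\le mr^2$ weighted edges to get $\rho_i=r^2\sum_{u,v}\rho_{i,u,v}$ with $\sum_i\rho_i=O(r^2kn)$; the bicriteria minimization of $\Phi$ is done by flow over the at most $\min(n^2,mr^2)$ distinct pairs.

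One correction is needed: the max form $f_i(S)\le\max_{u\in S,v\notin S}c_i(u,v)$ is false. For $f=\cut_{1\to 3}+\cut_{2\to 3}$ on $\{1,2,3\}$ and $S=\{1,2\}$, we get $f(S)=2$ while $c(1,3)=c(2,3)=1$. So you must commit to the sum form, which is what your union/intersection subadditivity argument actually proves and what the paper uses. The first inequality of Step 2 should then read $\sigma_i^{(k)}\le\sum_{(u,v)}\sigma^{(k)}(c_i(u,v)\cut_{u\to v};F)$ rather than a max, and your definition of $\rho_i$ already sums over pairs.
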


\begin{remark}
    One could generalize Corollary~\ref{cor:sparsity_Boolean} to $B$-bounded value submodular functions (i.e., $f(x) \in [1,B]$ when $x \in {[n] \choose \le k}$ has $f(x) \neq 0$), which provides a sparsification algorithm of sparsity $O(\frac{B \cdot n k^2 \log n}{\epsilon^2})$ and time $O(mn^2\cdot \mathcal{T}_{min}(f_i)+n^2 \cdot \mathcal{T}_{min}(F))$. The algorithm would replace Algorithm~\ref{alg:construct_boolean_hyper_edge} by Algorithm~\ref{alg:construct_directed_edges} to construct $E_i$ and re-set the sampling weights $\rho_i^{(k)} \gets \frac{2B}{F(\wh{S_i})}$. This increases the time complexity to construct the edge sets $E_1,\ldots,E_m$ from $mn\cdot\mathcal{T}_{eval}(f_i)$ to $m n^2 \cdot \mathcal{T}_{min}(f_i)$. Because this result will be subsumed by Theorem~\ref{thm:total_sensitivity_general}, we omit this part.

    On the other hand, Theorem~\ref{thm:rank_k_submodular_sensitivity} would be faster than Theorem~\ref{thm:total_sensitivity_general} for $r$-junta submodular functions, although its sparsity is worse by a factor of $r^2$. 
\end{remark}

We defer the proofs of Theorem~\ref{thm:total_sensitivity_Boolean} and Theorem~\ref{thm:rank_k_submodular_sensitivity} to Section~\ref{sec:Boolean_submodular_cardinality} and Section~\ref{sec:proof_rank_r} separately. Because the proof of Corollary~\ref{cor:sparsity_Boolean} is straightforward (similar to \cite{SomaY19spectral,KK24}), we defer it to Section~\ref{sec:proof_sparsfy_boolean}.

For ease of exposition, we assume two properties in their proofs: (1)  each $f_i$ satisfies $f_i([n])=0$ (by Claim~\ref{claim:adding_dummy_node}); (2) every initial weight $b_i=1$ such that $F(x)=\sum_i f_i(x)$ (by the reduction described below equation \eqref{eq:sparsification_cardinality}).

\subsection{Proof of Theorem~\ref{thm:total_sensitivity_Boolean}}\label{sec:Boolean_submodular_cardinality}

In this section, we finish the proof of Theorem~\ref{thm:total_sensitivity_Boolean} . The first part of this proof bounds $\sum \sigma_i^{(k)}$. 
\begin{claim}\label{clm:upper_bound_sigma}
Given any non-negative submodular functions $f_1,\ldots,f_m$, for any cardinality constraint $k$, let $F(S)=f_1(S)+\cdots+f_m(S)$ and $\sigma_i^{(k)}:=\max_{S \in {[n] \choose \le k}} \frac{f_i(S)}{F(S)}$. Then $\sum_{i=1}^m \sigma_i^{(k)} \le kn$.
\end{claim}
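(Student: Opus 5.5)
The plan is to follow the Boolean sketch of Section~\ref{sec:overview}: first upper bound each $f_i(S)$ by a sum of ``directed-cut'' quantities $f_i^{u\to v}:=\min_{T: u\in T, v\notin T} f_i(T)$, then lower bound $F(S)$ by a sum that is additive over the pairs cut by $S$, and finally apply Lemma~\ref{lem:sum_over_min_k} separately for each source $u$. As elsewhere in the paper we assume $f_i(\varnothing)=f_i([n])=0$ (Claim~\ref{claim:adding_dummy_node}) and unit weights.

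\emph{Step 1 (upper bound).} For any $S$ and any $u\in S$, pick for every $v\notin S$ a minimizer $T_{uv}$ of $f_i^{u\to v}$, and set $A_u:=\bigcap_{v\notin S}T_{uv}$. Then $u\in A_u\subseteq S$, hence $S=\bigcup_{u\in S}A_u$. Nonnegativity plus submodularity give $f(A\cap B)\le f(A)+f(B)$ and $f(A\cup B)\le f(A)+f(B)$. Therefore
\[
f_i(S)\le \sum_{u\in S}\sum_{v\notin S} f_i^{u\to v}.
\]
Conversely, $f_j(S)\ge f_j^{u\to v}$ for every pair $(u,v)$ cut by $S$.

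\emph{Step 2 (additive lower bound via an assignment).} This is the step I expect to be the main obstacle. In the Boolean case each $f_j$ is charged to a single cut edge, which makes $F(S)\ge\sum_{a\in S,b\notin S}|A^{-1}(a,b)|$. For real-valued $f_j$ I would first try to split $f_j$ along a chain of level sets. The idea is to decompose $f_i(S_i)$ through the chain $A_{u_1}\subseteq A_{u_1}\cup A_{u_2}\subseteq\cdots$ using marginal values. This mimics Edmonds' greedy order, so that the charges $c_j(u,v)\ge 0$ satisfy two properties:
\[
\sum_{u\in S,v\notin S} c_j(u,v)\le f_j(S) \quad \text{for every } S \text{ with } u\in S,\ v\notin S,
\]
and $f_i(S_i)\le \sum_{u\in S_i}\max_{v\notin S_i}(\text{charge of } i \text{ at } (u,v))$. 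If a single pair $(u_i,v_i)$ per function suffices, as in \eqref{eq:sigma_i_upper_bound}, the reduction to the Boolean computation is immediate. The fallback is a fractional assignment: write $f_i(S_i)/F(S_i)\le\sum_{u\in S_i}\lambda_{i,u}$, where the weights $\lambda_{i,u}$ are distributed over sources. Each source $u$ then plays the role of the fixed $u$ in \eqref{eq:sum_i_sigma}.

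\emph{Step 3 (summing).} With the charges in hand, bound
\[
\sum_i\sigma_i^{(k)}\le\sum_{u}\sum_{v\ne u}\frac{W_u(v)}{\min_{S\in{[n]\setminus\{u\}\choose \le k-1},\, v\notin S}\sum_{b\notin S,\,b\ne u}W_u(b)},
\]
where $W_u(b)$ is the total charge placed on the pair $(u,b)$. For each fixed $u$, Lemma~\ref{lem:sum_over_min_k} with parameter $k-1$ bounds the inner sum by $k$. Summing over the $n$ choices of $u$ gives $\sum_i\sigma_i^{(k)}\le kn$.

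Everything therefore hinges on making Step 2 genuinely additive over cut pairs without losing more than a constant factor per source. That is where I would spend most of the effort, trying the greedy/base-polytope chain decomposition first.
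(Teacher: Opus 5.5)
\textbf{Gap.} Your Step~2 is the whole content of the general argument, and the proposal leaves it open. Worse, the targets you set for it cannot be met.

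A single sink per source fails for non-Boolean functions. This covers both the $\max_{v\notin S_i}$ in your second property and your fallback, which only splits over \emph{sources}. Take $f=\cut_{u\to v_1}+\cut_{u\to v_2}$ and $S_i=\{u\}$. Then $f(\{u\})=2$, so you need $c(u,v_1)\ge 2$ or $c(u,v_2)\ge 2$. But $f(\{u,v_2\})=f(\{u,v_1\})=1$, so your first property is violated at $\{u,v_2\}$ or at $\{u,v_1\}$. The value must instead be split fractionally over the sinks $v\in\overline{S_i}$. Your chain $A_{u_1}\subseteq A_{u_1}\cup A_{u_2}\subseteq\cdots$ lives inside $S_i$, so its marginals are indexed by sources, and they need not be nonnegative since $f_i$ is not monotone.

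Step~1 is correct but too lossy to serve as the numerator. For a hyperedge cut $\cut_{L\to R}$, all $f^{u\to v}=1$ while $f\le 1$. So $\sum_{v\notin S}f^{u\to v}=|R\setminus S|$ cannot be matched by charges obeying your first property. Replace the intersection $\bigcap_v T_{uv}$ by a minimizer $A_u$ of $f_i$ over $\{T:u\in T\subseteq S\}$. The same union argument then gives $f_i(S)\le\sum_{u\in S}f_i^{\to}(u,\overline S)$, where $f_i^{\to}(u,B):=\min_{T\ni u,\,T\cap B=\varnothing}f_i(T)$.

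Finally, Step~3 only uses a per-source lower bound $\sum_j\sum_{w\notin S_i}c_j(u,w)\le F(S_i)$ for $u\in S_i$. Your first property additionally sums over all $u\in S$, which asks for more than you need.

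\textbf{Missing idea.} Run Edmonds' greedy construction (Lemma~\ref{lem:edmonds_greedy}) on the sink side, which is what the paper does in Claim~\ref{clm:general_sensitivity_bound}.
For fixed $u$, the map $B\mapsto f_j^{\to}(u,B)$ is monotone submodular with $f_j^{\to}(u,\varnothing)=0$ (Lemma~\ref{lem:block_submodular_monotone}). Take its greedy vector $z^u_{f_j}$ for an order listing $\overline{S_j}$ first. This vector is nonnegative, lies in the base polytope, and is tight on the prefix $\overline{S_j}$.
\begin{itemize}
\item Combined with the sharpened Step~1, tightness gives $f_i(S_i)\le\sum_{u\in S_i}\sum_{v\in\overline{S_i}}z^u_{f_i}(v)$.
\item Base-polytope membership holds for every $B$, regardless of which $S_j$ fixed the order. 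For $u\in S_i$ it gives $\sum_{w\in\overline{S_i}}z^u_{f_j}(w)\le f_j^{\to}(u,\overline{S_i})\le f_j(S_i)$. Summing over $j$, $F(S_i)\ge\sum_{w\notin S_i}Z^u(w)$ for the fixed vector $Z^u:=\sum_j z^u_{f_j}$.
\end{itemize}
With $W_u:=Z^u$, your Step~3 then goes through verbatim and yields $kn$.

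Note that the paper's own proof of this claim in Section~\ref{sec:Boolean_submodular_cardinality} actually uses that each $f_i$ is Boolean ($f_i(S_i)=1$, one Birkhoff edge assigned per function). That is exactly your single-pair branch; only the real-valued case needs the sink-side split.
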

\begin{proof}
For each $i$ with $\sigma_i^{(k)}>0$, let $S_i:=\arg\max_{S \in {[n] \choose \leq k}} \frac{f_i(S)}{F(S)}$ be the maximizer achieving $\sigma_i^{(k)}$. 
Since $f_i$ is Boolean, $f_i(S_i)=1$ and $\sigma_i^{(k)}=1/F(S_i)$. Recall that $E_i$ is the set of directed edges of $f_i$ provided by Lemma~\ref{lem:ring_of_sets_digraph} such that $S_i$ cuts at least one edge $(u,v) \in E_i$. We define a map $A:[m] \to [n] \times [n]$ for these submodular functions $f_1,\ldots,f_m$ such that $A(i)=(u,v)$ denotes a direct edge in $E_i$ that is cut by $S_i$. Also, let $A^{-1}(u,v) \subseteq [m]$ denote the set of functions assigned to $(u,v)$ in $A$.

Now, let us fix $f_i$ and upper bound $\sigma_i^{(k)}=\frac{1}{F(S_i)}$. A lower bound is $F(S_i) \ge \sum_{w' \in S_i, w \notin S_i} |A^{-1}(w',w)|$ because $A^{-1}(w',w)$ are disjoint. Since $u \in S_i$ and $A(i)=(u,v)$ provides a direct edge cut by $S_i$, we simplify the lower bound to $\sum_{w \notin S_i} |A^{-1}(u,w)|$. For $\sigma_i^{(k)}=1/F(S_i)$, we know $\sigma_i^{(k)} \le \frac{1}{\sum_{w \notin S_i} |A^{-1}(u,w)|}$ so that 
\begin{equation}\label{eq:upper_bound_sigma}
\sum_{i} \sigma_i^{(k)} \le \sum_{i \in [m]:A[i]=(u,v)} \frac{1}{\sum_{w \notin S_i} |A^{-1}(u,w)|} = \sum_{(u,v)} \sum_{i:A[i]=(u,v)} \frac{1}{\sum_{w \notin S_i} |A^{-1}(u,w)|}.    
\end{equation}

Although the number of $i$'s with $A[i]=(u,v)$ is exactly $|A^{-1}(u,v)|$, different $S_i$ might give different fractions $\frac{1}{\sum_{w \notin S_i} |A^{-1}(u,w)|}$. To simplify this summation in \eqref{eq:upper_bound_sigma}, let $Z^u(v):=|A^{-1}(u,v)|$ be such that $\sum_{w \notin S_i} |A^{-1}(u,w)| = \sum_{w \notin S_i} Z^u(w)$. Because each $S_i$ is in ${[n] \choose \le k}$, the smallest possible value of $\sum_{w \notin S_i} Z^u(w)$ over $S_i$ is 
$\underset{S \in {[n] \choose \le k}: u \in S, v \notin S}{\min} \sum_{w \notin S} Z^u(w)$ given $A[i]=(u,v)$ and $S_i$ cuts this edge $(u,v)$. This upper bounds
\[
\sum_{i:A[i]=(u,v)} \frac{1}{\sum_{w \notin S_i} |D_{u,w}|} \le \frac{Z^u(v)}{\underset{S \in {[n] \choose \le k}: u \in S, v \notin S}{\min} \sum_{w \notin S} Z^u(w)}.
\]

Now Lemma~\ref{lem:sum_over_min_k} implies that its summation over $v$ (for any fixed $u$) in \eqref{eq:upper_bound_sigma} is
\[
\sum_{v \neq u} \frac{Z^u(v)}{\underset{S \in {[n] \choose \le k}: u \in S, v \notin S}{\min} \sum_{w \notin S} Z^u(w)} = \sum_{v \neq u} \frac{Z^u(v)}{\underset{S \in {[n]\setminus \{u\} \choose \le k-1}: v \notin S}{\min} \sum_{w \notin S,w\neq u} Z^u(w)}\le k.
\]
So \eqref{eq:upper_bound_sigma} is at most $kn$.  
\end{proof}

The second part bounds $\rho_i$ calculated in Algorithm~\ref{alg:boolean_oversampling}. The key observation is that while $\wh{S_{u,v}}$ (in Algorithm~\ref{alg:boolean_oversampling})  with $F(\wh{S_{u,v}})\le 2 F(S_i)$ is not in ${[n] \choose \le k}$, it is in ${[n] \choose \le 2k}$ such that $\rho_i^{(k)}$ is upper-bounded by $\sigma_{i}^{(2k)}:=\min_{S \in {[n] \choose \le 2k}} \frac{f_i(S)}{F(S)}$. Moreover, $\sum_{i} \sigma_i^{(2k)} \le 2kn$ from the above proof of $\sigma_i$. We summarize this in the following claim

\begin{claim}\label{claim:boolean_rho_sandwich}
For every $i$,
\begin{align*}
     \sigma_i^{(k)}\le \rho_i^{(k)}\le 2\sigma_i^{(2k)}.
\end{align*}
\end{claim}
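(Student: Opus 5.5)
Both inequalities follow directly from the guarantees of Lemma~\ref{lem:submodular_cardinality_bicriteria} with $\alpha=1/2$ and from Lemma~\ref{lem:Boolean_value}. Recall that $E_i=L_i\times R_i$ and that $f_i(S)=1$ iff $S$ cuts some $(u,v)\in E_i$. By construction $\rho_i^{(k)}=\max_{(u,v)\in E_i} 2/F(\widehat{S_{u,v}})$, where each $\widehat{S_{u,v}}$ satisfies
\[
|\widehat{S_{u,v}}|\le 2k,\qquad u\in\widehat{S_{u,v}},\qquad v\notin\widehat{S_{u,v}},\qquad F(\widehat{S_{u,v}})\le 2\,\mathrm{OPT}^{(k)}_{u,v}.
\]
If $\sigma_i^{(k)}=0$ the lower bound is trivial, so assume $\sigma_i^{(k)}>0$. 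We also take $\sigma_i^{(2k)}$ to mean the maximum, as in the definition, despite the ``min'' typo preceding the claim.

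\textbf{Lower bound.} Let $S_i$ be a maximizer of $f_i(S)/F(S)$ over ${[n]\choose\le k}$. Then $f_i(S_i)=1$, so $\sigma_i^{(k)}=1/F(S_i)$ and $S_i$ cuts some edge $(u,v)\in E_i$. Since $S_i$ is feasible for $\mathrm{OPT}^{(k)}_{u,v}$, we get $\mathrm{OPT}^{(k)}_{u,v}\le F(S_i)$. Hence
\[
F(\widehat{S_{u,v}})\le 2F(S_i),
\]
and therefore $\rho_i^{(k)}\ge 2/F(\widehat{S_{u,v}})\ge 1/F(S_i)=\sigma_i^{(k)}$.

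\textbf{Upper bound.} Let $(u,v)\in E_i$ attain the maximum in $\rho_i^{(k)}$. The set $\widehat{S_{u,v}}$ contains $u$ and excludes $v$, so it cuts the edge $(u,v)\in E_i$. By Lemma~\ref{lem:ring_of_sets_digraph} and Lemma~\ref{lem:Boolean_value}, this gives $f_i(\widehat{S_{u,v}})=1$. Because $|\widehat{S_{u,v}}|\le 2k$, we have
\[
\sigma_i^{(2k)}\ge \frac{f_i(\widehat{S_{u,v}})}{F(\widehat{S_{u,v}})}=\frac{1}{F(\widehat{S_{u,v}})}=\frac{\rho_i^{(k)}}{2}.
\]

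\textbf{Degenerate case.} The one point needing care is $F(\widehat{S_{u,v}})=0$ while $f_i(\widehat{S_{u,v}})=1$. This is impossible, since $F\ge f_i$ under unit weights. So $F(\widehat{S_{u,v}})\ge 1$ and all quantities above are finite.
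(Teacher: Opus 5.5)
Your proof is correct and follows essentially the same route as the paper. For the lower bound, $S_i$ is feasible for $\mathrm{OPT}^{(k)}_{u,v}$ on a cut edge of $E_i$, and the factor-2 bicriteria guarantee does the rest. For the upper bound, the selected $\widehat{S_{u,v}}$ has size at most $2k$ and cuts an edge of $E_i$, so $f_i(\widehat{S_{u,v}})=1$. Your deviations are cosmetic: you use only $\mathrm{OPT}^{(k)}_{u,v}\le F(S_i)$ instead of the paper's equality $F(S_i)=\min_{(u,v)\in E_i}\mathrm{OPT}^{(k)}_{u,v}$, and you explicitly check $F(\widehat{S_{u,v}})\ge 1$.
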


\begin{proof}

Let $S_i:=\arg\max_{S \in {[n] \choose \leq k}} \frac{f_i(S)}{F(S)}$ be a set that attains $\sigma_i^{(k)}$ again. For every directed edge $(u,v)$, let
\begin{align*}
    \mathrm{OPT}_{u,v}^{(k)}
    :=\min\{F(S): S\subseteq V,\ u\in S,\ v\notin S,\ |S|\le k\}.
\end{align*}
Then $F(S_i):=\min_{(u,v) \in E_i} \mathrm{OPT}_{u,v}^{(k)}$ by their definitions. Hence, let $(u_i,v_i) \in E_i$ denote the edge satisfying $F(S_i)=\mathrm{OPT}_{u_i,v_i}^{(k)}$. On the other hand, in Algorithm~\ref{alg:boolean_oversampling}, $F(\widehat{S_{u,v}})$ is the bicriteria approximation (of constant 2) for each $\mathrm{OPT}_{u,v}^{(k)}$: $F(\widehat{S_{u,v}}) \le 2 \mathrm{OPT}_{u,v}^{(k)}$ and $\widehat{S_{u,v}} \le 2k$. Because $\widehat{S_i}= \arg\min_{(u,v) \in E_i} F(\widehat{S_{u,v}})$ in algorithm~\ref{alg:boolean_oversampling} takes the minimal value over all edges,
\begin{align*}
    F(\widehat{S_i})\le F(\widehat{S_{u_i,v_i}})
    \le 2\,\mathrm{OPT}_{u_i,v_i}^{(k)}
    = 2F(S_i).
\end{align*}
Thus, we have a lower bound on $\rho_i^{(k)}$:
\begin{align*}
    \rho_i^{(k)}=\frac{2}{F(\widehat{S_i})}
    \geq \frac{1}{F(S_i)}
    =\sigma_i^{(k)}.
\end{align*}

For the upper bound, because each $\widehat{S_{u,v}}$ is of size $\le 2k$, $\widehat{S_i}$ as the minimizer among all $(u,v)\in E_i$ has $f_i(\wh{S}_{u,v})=1$ and $|\wh{S}_{u,v}| \le 2k$. 
By definition, 
\begin{align*}
    \sigma_i^{(2k)} := \max_{|S|\le 2k}\frac{f_i(S)}{F(S)}
\end{align*}
is greater than
\begin{align*}
    \frac{f_i(\wh{S}_i)}{F(\wh{S}_i)} = \rho_i^{(k)}/2.
\end{align*}

Now we apply Claim~\ref{clm:upper_bound_sigma} with the cardinality constraint parameter $2k$ to the bound 
\begin{align*}
    \sum_{i=1}^m \rho_i^{(k)} \le 2\sum_{i=1}^m\sigma_i^{(2k)} \leq 4kn.
\end{align*}
\end{proof}

It remains to account for the oracle calls and the non-oracle steps.
According Lemma~\ref{lem:ring_of_sets_digraph}, building all edge sets $E_1,\ldots,E_m$ takes $O(mn\cdot\mathcal{T}_{eval}(f_i))$ time.

The bicriteria minimization step depends only on the ordered pair $(u,v)$ and the aggregate function $F$, so Algorithm~\ref{alg:boolean_oversampling} makes a call for each ordered pair of distinct augmented vertices.
This needs $n(n-1)$ calls for total time $O(n^2\cdot\mathcal{T}_{\min}(F))$.

Outside of these calls, the algorithm evaluates $F$ once for each candidate associated with an edge in some $E_i$ to choose the best edge for each $i$.
Since $\sum_i |E_i|\le mn(n-1)$, this contributes to the $O(mn^2)$ evaluations of $F$ and arithmetic operations. If we store all $F(\widehat{S_{u,v}})$, this step only needs $O(n^2)$ to evaluate oracle for $F$.

\subsection{Sparsification of $r$-junta Submodular Functions}\label{sec:proof_rank_r}

The proof relies on Claim~2.1 of \cite{KK24}, which provides a sandwiching approximation on a submodular function via directed cut functions. Let $T_i$ denote the variable set of function $f_i$ in this section.
\begin{lemma}[Claim 2.1 of \cite{KK24}]\label{lem:KK24_cut}
    For any $u$ and $v \in T_i$, we define 
    $$ f_i^{u \to v} := \min_{\substack{S \subseteq V\\ u \in S,\ v \notin S}} f_i(S). $$
    For any subset $W \subseteq T_i$,
    $$ \max_{u \in W, v \notin W} f_i^{u \to v} \le f_i(W) \le \sum_{u \in W, v \notin W} f_i^{u \to v}. $$
\end{lemma}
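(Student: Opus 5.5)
The statement to prove is Lemma~\ref{lem:KK24_cut}. Its two inequalities only use the definition of $f_i^{u\to v}$ and submodularity, so I will prove them separately. Throughout, $f_i$ is non-negative with $f_i(\varnothing)=f_i([n])=0$ by Claim~\ref{claim:adding_dummy_node}, and it depends only on the coordinates in $T_i$. So I treat $T_i$ as the ground set and write $v\notin W$ for $v\in T_i\setminus W$.

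\textbf{Lower bound.} Fix $u\in W$ and $v\notin W$. The set $W$ is itself feasible in the minimization defining $f_i^{u\to v}$, because $u\in W$ and $v\notin W$. Hence $f_i^{u\to v}\le f_i(W)$, and taking the maximum over all such pairs gives the left inequality. If $W=\varnothing$ or $W=T_i$, the maximum is empty, and both sides are $0$ by convention, since $f_i(\varnothing)=f_i(T_i)=0$.

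\textbf{Upper bound.} For each $u\in W$ and $v\notin W$, let $S_{u,v}$ be a minimizer with $u\in S_{u,v}$, $v\notin S_{u,v}$, and $f_i(S_{u,v})=f_i^{u\to v}$. I express $W$ by a two-level lattice formula. For a fixed $u\in W$, set $A_u:=\bigcap_{v\notin W} S_{u,v}$. Then $u\in A_u$, and $A_u$ contains no $v\notin W$, so $A_u\subseteq W$. It follows that $\bigcup_{u\in W}A_u=W$.

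The tool is the standard consequence of submodularity plus non-negativity: for any sets $X_1,\dots,X_t$,
\[
f\Bigl(\bigcap_j X_j\Bigr)\le \sum_j f(X_j)
\qquad\text{and}\qquad
f\Bigl(\bigcup_j X_j\Bigr)\le \sum_j f(X_j).
\]
For the intersection inequality I induct on $t$. Submodularity gives $f(X\cap Y)\le f(X)+f(Y)-f(X\cup Y)\le f(X)+f(Y)$, using $f(X\cup Y)\ge 0$. The union inequality is the same argument, dropping the $f(X\cap Y)\ge0$ term. Applying the union bound and then the intersection bound gives
\[
f_i(W)\le\sum_{u\in W} f_i(A_u)\le \sum_{u\in W}\sum_{v\notin W} f_i(S_{u,v})=\sum_{u\in W,\,v\notin W} f_i^{u\to v}.
\]

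\textbf{Obstacle.} The only delicate point is the degenerate cases and the role of the ground set. If $W$ contains all of $T_i$, or if $W=\varnothing$, the index sets are empty. These cases are handled by $f_i(\varnothing)=f_i([n])=0$ and by the junta property: $f_i(W)=f_i(W\cap T_i)$, so I may restrict to $W\subseteq T_i$. The minimizers $S_{u,v}$ may range over all of $V$, but that is harmless, because the intersection and union manipulations only use submodularity on $V$.
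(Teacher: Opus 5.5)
Your proof is correct. It is the standard argument behind Claim 2.1 of \cite{KK24}, which the paper cites without reproving. It is also the unrolled form of the paper's own machinery: on ground set $T_i$, $f_i(W)=f_i^{\to}(W,\overline{W})\le\sum_{u\in W}f_i^{\to}(u,\overline{W})\le\sum_{u\in W}\sum_{v\notin W}f_i^{\to}(u,\{v\})=\sum_{u\in W,\,v\notin W}f_i^{u\to v}$ by subadditivity from Lemmas~\ref{lem:block_submodular_monotone_sourse} and~\ref{lem:block_submodular_monotone}, whose proofs take exactly your unions and intersections of minimizers.

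One justification needs correcting. If the minimizers $S_{u,v}$ are allowed to leave $T_i$, then $A_u\subseteq W$ can fail. What makes this harmless is the junta property, not submodularity on $V$: replace each $S_{u,v}$ by $S_{u,v}\cap T_i$, or note that $(\bigcup_{u}A_u)\cap T_i=W$ and hence $f_i(\bigcup_u A_u)=f_i(W)$.
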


\begin{proof}[Proof of Theorem ~\ref{thm:rank_k_submodular_sensitivity}]

We will reduce the sparsification of junta-$r$ submodular functions to weighted Boolean submodular functions (by losing a factor of $r^2$). Because the number of directed cut edges is at most $r^2$ in $T_i$ and each $f_i^{u \to v} \le f_i(W)$ for any subset $W$ that cuts $u$ and $v$, by Lemma~\ref{lem:KK24_cut},

\begin{equation}\label{eq:rank_r_cut}
    \sum_{u \in T_i \cap W, v \in T_i \setminus W} f_i^{u \to v} \le \big|\{(u, v) \in T_i : u \in W, v \notin W \}\big| \cdot f_i(W) \le r^2 f_i(W). 
\end{equation}

For each $i \in [m]$ and each ordered pair $(u, v) \in T_i \times T_i$, we define a single-edge directed cut function based on $f_i^{u \to v}$ to be
\begin{align*}
    \forall S \subset [n], \cut^{f_i}_{u \to v}(S)
    &:= f_i^{u \to v} \cdot \mathbb{I}\!\left[u \in S \cap T_i,\, v \notin S \cap T_i\right].
\end{align*}
We remark that $ \cut^{f_i}_{u \to v}(S)$ is a weighted Boolean submodular function.

Let $\Phi(S) := \sum_{i} \sum_{u \in T_i, v \in T_i} \cut^{f_i}_{u \to v}(S)$ be the summation of these $m r^2$ cut functions, which is a directed cut. Then, by equation~\eqref{eq:rank_r_cut} and Lemma~\ref{lem:KK24_cut}, for all $S \subseteq V$,
\[
    r^{-2} \sum_{u, v \in T_i} \cut^{f_i}_{u \to v}(S) \le f_i(S) \le \sum_{u, v \in T_i} \cut^{f_i}_{u \to v}(S)
\]
such that their summations over $i$ have
\[
     r^{-2} \Phi(S) \leq F(S)\leq \Phi(S).
\]
Therefore,

\begin{equation}\label{eq:rank_r_inequality}
\sum_i \sigma_i = \sum_i \max_{|S| \le k} \frac{f_i(S)}{F(S)}
\le r^2 \sum_i \max_{|S| \le k} \frac{\sum_{u, v \in T_i} \cut^{f_i}_{u \to v}(S)}{\Phi(S)}
\le r^2 \sum_i \sum_{u, v \in T_i} \max_{|S| \le k} \frac{ \cut^{f_i}_{u \to v}(S)}{\Phi(S)}.
\end{equation}

Now we discuss how to approximate $\sigma_i$. Let $\rho^{(k)}_{i,u,v}:=\max_{|S| \le k} \frac{ \cut^{f_i}_{u \to v}(S)}{\Phi(S)}$ and $\rho^{(k)}_i:=r^2\sum_{u \in T_i,v \in T_i} \rho^{(k)}_{i,u,v}$ such that $\rho^{(k)}_i\geq \sigma_i$ from Equation~\eqref{eq:rank_r_inequality}. Recall that $ \cut^{f_i}_{u \to v}(S)$ is a weighted Boolean submodular function such that Theorem~\ref{thm:total_sensitivity_Boolean} implies the total sensitivity of $\cut^{f_i}_{u \to v}$ is at most $O(kn)$.
So the total sensitivity $O(kn)$ of $\cut^{f_i}_{u \to v}$  bounds $\sum_i \rho^{(k)}_i \leq O(r^2nk)$. 

Next we show $\rho^{(k)}_{i,u,v}$ can be computed efficiently via min-cut maximum-flow algorithm instead of submodular minimization because it is the min-cut value of $\Phi(S)$. Specifically, since each $\cut^{f_i}_{u \to v}$ is weighted Boolean, $ \rho^{(k)}_{i,u,v} = \frac{f_i^{u\to v}}{\min_{|S| \le k, u \in S, v \not\in S} \Phi(S)}$ for $u\in S\cap T_i$ and $v\in T_i\setminus S$, and $0$ otherwise. Although computing the exact \(\min_{|S| \le k, u \in S, v \not\in S} \Phi(S)\) is NP-hard, Lemma~\ref{lem:submodular_cardinality_bicriteria} guarantees we can get a bicriteria approximation $\widehat{S_{u,v}}$ s.t. $|\widehat{S_{u,v}}|\leq 2k$ and $\Phi(\widehat{S_{u,v}})\leq 2\min_{|S| \le k, u \in S, v \not\in S} \Phi(S)$ by Algorithm~\ref{alg:bicriteria_min}. Thus, $A_{i,u,v}:=\frac{2f_i^{u\to v}}{\Phi(\widehat{S_{u,v}})}$ satisfies $\rho^{(k)}_{i,u,v}\leq A_{i,u,v}\leq 2{\rho^{(2k)}_{i,u,v}}$, which is a valid approximation of $\rho^{(k)}_{i,u,v}$. Moreover, $\sum_{i,u,v} A_{i,u,v}\leq 2\sum_{i,u,v}{\rho^{(2k)}_{i,u,v}}=O(kn)$.

Now we discuss how to speed up the $\arg\min$ step in Algorithm~\ref{alg:bicriteria_min}. For fixed $u$ and $v$, we formulate the problem with the Lovász extension $\widehat{\Phi}(x)$:
$$
\min_{x} \widehat{\Phi}(x) \quad \text{s.t.} \quad x(u) = 1,\, x(v) = 0,\, \sum_{i \in [n]} x(i) \le k,\, x \in [0,1]^{n}.
$$

By the convexity of the Lov\'asz extension, strong duality holds. 
Thus we dualize the constraint by introducing a Lagrangian multiplier $\lambda \ge 0$. We define the Lagrangian $L(x, \lambda) := \widehat{\Phi}(x) + \lambda \left(\sum_{i \in [n]} x(i) - k\right)$ and the dual function $g(\lambda) := \min_x L(x, \lambda)$. Strong duality guaranties $\min_x \widehat{\Phi}(x)=\max_{\lambda\geq 0} g(\lambda)$. Since \(g\) is concave, its global maximum can be  found efficiently if we can evaluate $g(\lambda)$ quickly.
 
We now show how to evaluate $g(\lambda)$. Because $x(v) = 0$, we can write $\lambda x(i) $ as $ \lambda \max(0, x(i) - x(v))$. Recall that the Lov\'asz extension of a cut function $\cut_{i \to j}$ takes the form $\max(0, x(i) - x(j))$. Ignoring the constant $-\lambda k$ and adding auxiliary edges of capacity $\lambda$ from $i\in [n]$ to the sink $v$, calculating $g(\lambda)$ is exactly an unconstrained minimum $(u,v)$-cut on the augmented graph. 

Let $B:=\max_{S} \Phi(S)$. Employing a binary search over $\lambda$ with the maximum flow algorithm of time $\tilde{O}(n^2\log B)$ \cite{BBL25}, we can solve this dual problem with only an additional factor $O(\log B)$. Therefore, we get $\arg\min_x \widehat{\Phi}(x)$ by strong duality. Since the dual problem is equivalent to the minimum cut, we know that the optimal solution to the dual problem is attained at an extreme point. By strong duality, this guaranties that the optimal solution to the primal problem is also attained at an extreme point. Hence, the solution we obtain, \(\arg\min_x \widehat{\Phi(x)}\), is an integer solution, which is precisely the desired \(\widehat{S_{u,v}}\). Then we can efficiently calculate $A_i:=\sum_{u,v}A_{i,u,v}$ as an approximation of $\rho_i^{(k)}$. We have shown $\rho_i^{(k)}\leq A_i$ and $\sum_i A_i=O(kn)$.

Accounting for the preprocessing cost of $O(mr^2 \cdot \mathcal{T}_{min}(f_i))$ required to compute the edge weights $f_i^{u \to v}$ and ignoring logarithmic factor, the overall time complexity is given by:
$$
\tilde{O}\left(mr^2 \cdot \mathcal{T}_{min}(f_i) + \min(n^4, mr^2 n^2)\right).
$$

\end{proof}

\section{Sparsifying under cardinality constraint}\label{sec:general_cardinality}
In this section, we prove our main result, Theorem~\ref{thm:inform_sparsify_general_submodular}, .
\begin{theorem}\label{thm:total_sensitivity_general}
    Given $m$ submodular functions $f_i: \{0,1\}^{n} \to \mathbb{R}_{\ge 0}$ and initial weights $b_1,\ldots,b_m$, let $F(S) := \sum_{i=1}^m b_i f_i(S)$ and $\sigma_i := \max_{|S| \leq k} \frac{f_i(S)}{F(S)}$. 
Then, the total sensitivity $\mathfrak{S}:=\sum_{i=1}^m \sigma_i$ is at most $kn$. Algorithm~\ref{alg:general_oversampling} computes $\rho_i$ such that $\rho_i \ge \sigma_i$ and $\sum_{i=1}^m \rho_i \le 4kn$, with a running time of $O(m n^2 \cdot \mathcal{T}_{min}(f_i) + n^2 \cdot \mathcal{T}_{\min}(F))$. Moreover, this approach yields a sparsifier of size $O(\frac{n k^2 \log n}{\epsilon^2})$ within the same time complexity.
\end{theorem}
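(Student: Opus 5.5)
The plan has three stages: an existential bound $\mathfrak{S}\le kn$ modelled on Claim~\ref{clm:upper_bound_sigma}, a computable $\rho_i$ sandwiched as $\sigma_i^{(k)}\le\rho_i\le 2\sigma_i^{(2k)}$, and then standard importance sampling. Throughout I assume $b_i=1$ and $f_i([n])=0$, by Claim~\ref{claim:adding_dummy_node} and the reduction below \eqref{eq:sparsification_cardinality}.

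\textbf{Existential bound.} For each $i$, fix a maximizer $S_i\in{[n]\choose\le k}$. Lemma~\ref{lem:KK24_cut} (applied with $T_i=[n]$) gives
\[
f_i(S_i)\le \sum_{u\in S_i,v\notin S_i} f_i^{u\to v}.
\]
Hence there is a pair $(u_i,v_i)$ with $u_i\in S_i$, $v_i\notin S_i$ and $f_i^{u_i\to v_i}\ge f_i(S_i)/(|S_i|(n-|S_i|))$. This loses a factor $kn$, which is too much.

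Instead I would use the monotone extension $g^{\to}(A,B)=\min_{A\subseteq T,\,T\cap B=\varnothing} g(T)$. Fix $u_i\in S_i$. Since $f_i([n])=0$, we have $f_i(S_i)= f_i^{\to}(\{u_i\},\overline{S_i})$ up to the choice of $u_i$. A cleaner route is to peel $S_i$ greedily, in Edmonds' order: write $f_i(S_i)$ as a telescoping sum over prefixes and charge each step to a single element $u\in S_i$. Then the charge on $u$ is a monotone submodular function of $B=\overline{S_i}$.

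Next, assign each $f_i$ to one $u_i$ with a fractional weight in $B$. Repeat the Boolean argument with $Z^u(w)$ replaced by the contribution of the functions charged to $u$ against the single vertex $w$; this is made legitimate by monotonicity of $f_i^{\to}(\{u\},\cdot)$ in $B$ together with its Lovász extension. Then invoke Lemma~\ref{lem:sum_over_min_k} for each fixed $u$ to get at most $k$ per $u$, hence $kn$ in total.

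\textbf{Algorithm and sandwich.} Algorithm~\ref{alg:general_oversampling} would run as follows.
\begin{enumerate}
\item For every ordered pair $(u,v)$, compute $\widehat{S_{u,v}}=\textsc{BicriteriaApproximate}(F,k,1/2,u,v)$. This costs $n^2$ calls of $\mathcal{T}_{\min}(F)$.
\item For each $u$, form the vector $z_u=(2/F(\widehat{S_{u,v}}))_{v\in[n]}$.
\item Set $\rho_i=\max_u \widehat{h_{i,u}}(z_u)$, where $h_{i,u}(B)=f_i^{\to}(\{u\},B)$ is monotone submodular in $B$. Its Lovász extension needs $n$ evaluations, each a submodular minimization; this gives the $mn^2\mathcal{T}_{\min}(f_i)$ term.
\end{enumerate}

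\emph{Lower bound $\rho_i\ge\sigma_i$.} Take $u\in S_i$ attaining the charge. For every $v\notin S_i$, $S_i$ is feasible for $\mathrm{OPT}^{(k)}_{u,v}$, so by Lemma~\ref{lem:submodular_cardinality_bicriteria}
\[
z_u(v)\ge \frac{2}{2F(S_i)}=\frac{1}{F(S_i)}.
\]
Thus $z_u\ge \mathbf 1_{\overline{S_i}}/F(S_i)$ on the relevant coordinates. Monotonicity and homogeneity of $\widehat{h_{i,u}}$ then give
\[
\rho_i\ge \frac{h_{i,u}(\overline{S_i})}{F(S_i)}\ge \frac{f_i(S_i)}{F(S_i)}=\sigma_i.
\]

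\emph{Upper bound $\rho_i\le 2\sigma_i^{(2k)}$.} Expand $\widehat{h_{i,u}}(z_u)$ through \eqref{eq:Lovasz_ext_per} as a combination of values $h_{i,u}(\mathrm{Pre}(j))$ with coefficients given by gaps of $2/F(\widehat{S_{u,v}})$. Each value $h_{i,u}(\mathrm{Pre}(j))$ is bounded by $f_i$ at a set of size at most $2k$ built from the $\widehat{S_{u,v}}$ appearing in that prefix, and the coefficient is at most $2/F$ of that set.

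\emph{This is the main obstacle.} The prefix of the ordering need not correspond to a single bicriteria set, so the bound $\rho_i\le 2\sigma^{(2k)}_i$ may fail termwise. I would first try to prove the weaker aggregate inequality
\[
\sum_i\rho_i\le 2\sum_i\sigma^{(2k)}_i
\]
directly. The idea is to rerun the existential charging argument with the actual sets $\widehat{S_{u,v}}$ (all of size at most $2k$) in place of the $S_i$, and apply Lemma~\ref{lem:sum_over_min_k} with parameter $2k$. This yields $\sum_i\rho_i\le 2\cdot 2kn=4kn$.

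\textbf{Sparsifier.} Sample each $f_i$ with probability $p_i=\min(1,C\rho_i k\log n/\epsilon^2)$ and give it weight $1/p_i$. For a fixed $S$, each sampled term is at most $F(S)\epsilon^2/(Ck\log n)$, so the Chernoff bound (Theorem~\ref{Thm:Chernoff_bound}) gives failure probability $n^{-\Omega(Ck)}$. A union bound over the at most $n^k$ sets in ${[n]\choose\le k}$ then gives high-probability success with expected size $O(nk^2\log n/\epsilon^2)$, exactly as in Corollary~\ref{cor:sparsity_Boolean}.
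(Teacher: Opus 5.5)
You have a genuine gap. It is in the passage from the Boolean case to general $f_i$, where you charge all of $f_i(S_i)$ to a single source $u$. The step
\[
\frac{h_{i,u}(\overline{S_i})}{F(S_i)}\ge\frac{f_i(S_i)}{F(S_i)}
\]
is false, and the inequality actually runs the other way. Since $S_i$ is feasible in $h_{i,u}(\overline{S_i})=\min_{u\in T\subseteq S_i}f_i(T)$, you get $h_{i,u}(\overline{S_i})\le f_i(S_i)$, and the gap can be a factor $|S_i|$. Your claim that $f_i(S_i)=f_i^{\to}(\{u_i\},\overline{S_i})$ ``up to the choice of $u_i$'' is wrong for the same reason, so your existential sketch, which assigns each $f_i$ to one $u_i$, inherits the defect.

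As a result, $\rho_i=\max_u\widehat{h_{i,u}}(z_u)$ need not dominate $\sigma_i$. Concretely, take:
\begin{itemize}
\item $f_1=\cut_{a\to d}+\cut_{b\to d}+\cut_{c\to d}$;
\item $g(S)=1$ iff $S\cap\{a,b,c\}\neq\varnothing$ and $d\notin S$;
\item $F=f_1+10g$ and $k=3$.
\end{itemize}
Then $\sigma_1=3/13$, attained at $\{a,b,c\}$. For $u\in\{a,b,c\}$ and $u\notin B$ one has $h_{1,u}(B)=\mathbb{I}(d\in B)$. Hence $\widehat{h_{1,u}}(z_u)=z_u(d)=2/F(\widehat{S_{u,d}})\le 2/11$, because every set containing $u$ and missing $d$ has $F\ge 11$. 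For every other $u$, $h_{1,u}$ vanishes. So your $\rho_1\le 2/11<3/13=\sigma_1$.

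The missing idea is Property~4 of Claim~\ref{clm:subadditive}: $f_i(S_i)=f_i^{\to}(S_i,\overline{S_i})\le\sum_{u\in S_i}f_i^{\to}(u,\overline{S_i})$, by subadditivity of $f_i^{\to}(\cdot,B)$ in the source argument. So $f_i$ must be split across all sources $u\in S_i$, and the algorithm must use $\rho_i=\sum_u\widehat{h_{i,u}}(z_u)$. With your normalization $z_u=2X_u$, this is exactly the paper's $\rho_i$. The existential bound then splits each $f_i^{\to}(u,\overline{S_i})$ over sinks $v\in\overline{S_i}$. It does so via the Edmonds greedy vector of $f_i^{\to}(u,\cdot)$ in an order placing $\overline{S_i}$ first. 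After that, $Z^u=\sum_j z^u_{f_j}$ satisfies $\sum_{w\notin S_i}Z^u(w)\le F(S_i)$, and Lemma~\ref{lem:sum_over_min_k} applies for each $u$.

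Your upper-bound plan (``rerun the charging argument with $\widehat{S_{u,v}}$'') also needs its key tool named, because the Lov\'asz extension couples different $\widehat{S_{u,v}}$ through the sorted order of $z_u$. The tool has three parts:
\begin{itemize}
\item $\sum_i\widehat{h_{i,u}}(z_u)$ is the Lov\'asz extension of the single monotone submodular function $h_u:=\sum_i h_{i,u}$.
\item Hence it equals $\sum_v W_u(v)z_u(v)$, where $W_u\ge 0$ is the Edmonds vector of $h_u$ in the decreasing order of $z_u$.
\item The base-polytope constraint gives $\sum_{w\notin\widehat{S_{u,v}}}W_u(w)\le h_u(\overline{\widehat{S_{u,v}}})\le F(\widehat{S_{u,v}})$.
\end{itemize}
Lemma~\ref{lem:sum_over_min_k} with parameter $2k$ then gives $4kn$. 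Your sparsification step and running-time accounting are fine once $\rho_i$ is corrected.
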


In the rest of this section, we finish the proof of Theorem~\ref{thm:total_sensitivity_general}. Similarly to Section~\ref{sec:Boolean_submodular_cardinality}, this proof focuses on the unweighted case (i.e., each $b_i=1$) and assumes that each $f_i$ satisfies $f_i([n])=0$ (by Claim~\ref{claim:adding_dummy_node}).

Our algorithm and analysis are based on the minimum directed block cut function of a submodular $f$, defined as follows.
This extends Kenneth and Krauthgamer's definition $f^{u \to v}:=\min_{T\subseteq [n],u\in T,v\notin T} f(T)$ \cite{KK24} and Fujishige's 
definition $g(S):=\min_{T \subseteq S} f(T)$ \cite{submodualr_book_Fujishige_05} to the subsets $A$ and $B$.

\begin{definition}[Minimum directed block cut function]\label{def:directed_block_mincut_function}
    Given a submodular function $f: \{0,1\}^{n} \to \mathbb{R}_{\ge 0}$, for $A$ and $B \subseteq [n]$, define
    \begin{align*}
        f^{\to}(A, B) :=
        \left\{
        \begin{array}{cl}
            \displaystyle \min_{\substack{T \subseteq [n],\, A \subseteq T,\, T \cap B = \varnothing}} f(T), & \text{if } A \cap B = \varnothing, \\[1.2ex]
            M, & \text{if } A \cap B \neq \varnothing,
        \end{array}
        \right.
    \end{align*}
    where $M>\max_{S \subseteq [n]} f(S)$. For convenience, when $A$ is a singleton set $\{u\}$, we will write $f^{\to}(u, B)$ instead of $f^{\to}(\{u\}, B)$.
\end{definition}

The key of our definition is that $f^{\to}(A,B)$ is submodular and monotone such that the base polytope of $f^{\to}$ is in $\mathbb{R}_{\ge 0}^{n}$ \cite{RafieyY22submodular,KZ23}.

Specifically, fixing $A$ or $B$ ensures that $f^{\to}(A, B)$ remains monotone and submodular with respect to the other part.

\begin{claim}\label{clm:subadditive}
    For a submodular function $f:\{0,1\}^n$ with $f(\varnothing)=0$ and $f([n])=0$, $f^{\to}(A,B)$ (in Definition~\ref{def:directed_block_mincut_function}) satisfies the following properties.
    \begin{enumerate}
        \item $f^{\to}(A,\varnothing)=0$ and $f^{\to}(\varnothing,B)=0$.
        \item For a fixed $A$, $f^{\to}(A,B)$ is monotone, non-negative, and submodular monotone on $B \subseteq [n]$.
        \item For a fixed $B$, $f^{\to}(A,B)$ is monotone, non-negative, and submodular on $A \subseteq [n]$.
        \item  For any set $S\subseteq [n]$,       $f(S)$ equals $f^{\to}(S , \overline{S})$, which is at most              $\sum_{u \in S} f^{\to}(u, \overline{S})$ (by Property 3).
    \end{enumerate}
\end{claim}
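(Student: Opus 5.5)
We verify the four properties directly from Definition~\ref{def:directed_block_mincut_function}, using $f([n])=0$ and $f(\varnothing)=0$ together with non-negativity. One convention matters throughout: when $A\cap B\neq\varnothing$ we have $f^{\to}(A,B)=M$, which exceeds every feasible value. So every inequality has to be checked separately in the degenerate cases where some argument pair intersects.

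\textbf{Property 1.} Take $T=[n]$, which contains $A$ and is disjoint from $\varnothing$. This gives $0\le f^{\to}(A,\varnothing)\le f([n])=0$. Similarly, $T=\varnothing$ gives $f^{\to}(\varnothing,B)\le f(\varnothing)=0$.

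\textbf{Monotonicity in Properties 2 and 3.} Enlarging $B$ (resp.\ $A$) shrinks the feasible family of $T$, so the minimum can only increase. If the enlarged pair becomes intersecting, the value jumps to $M$, which is still larger. Non-negativity is inherited from $f\ge 0$ and $M>0$.

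\textbf{Submodularity.} This is the main step. For fixed $A$ and sets $B_1,B_2$, we need
\[
f^{\to}(A,B_1)+f^{\to}(A,B_2)\ge f^{\to}(A,B_1\cup B_2)+f^{\to}(A,B_1\cap B_2).
\]
\emph{Both pairs disjoint.} If $A\cap B_1=A\cap B_2=\varnothing$, let $T_1,T_2$ be minimizers for the two terms on the left. Then $T_1\cap T_2$ contains $A$ and avoids $B_1\cup B_2$, while $T_1\cup T_2$ contains $A$ and avoids $B_1\cap B_2$. Submodularity of $f$ gives
\[
f(T_1)+f(T_2)\ge f(T_1\cap T_2)+f(T_1\cup T_2)\ge f^{\to}(A,B_1\cup B_2)+f^{\to}(A,B_1\cap B_2).
\]
\emph{Degenerate cases.} If, say, $A\cap B_1\neq\varnothing$, the left side is $M+f^{\to}(A,B_2)$. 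The right side is $M+f^{\to}(A,B_1\cap B_2)$, and $f^{\to}(A,B_1\cap B_2)\le f^{\to}(A,B_2)$ by monotonicity. If both $B_1$ and $B_2$ meet $A$, the left side is $2M$ and the right side is at most $2M$.

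\emph{Fixed $B$.} The argument is symmetric. Given minimizers $T_1\supseteq A_1$ and $T_2\supseteq A_2$ avoiding $B$, the set $T_1\cup T_2$ is feasible for $(A_1\cup A_2,B)$ and $T_1\cap T_2$ is feasible for $(A_1\cap A_2,B)$. The degenerate cases are handled by monotonicity exactly as above.

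\textbf{Property 4.} The only set $T$ with $S\subseteq T$ and $T\cap\overline S=\varnothing$ is $T=S$, so $f(S)=f^{\to}(S,\overline S)$. For the inequality, fix $B=\overline S$. Property 3 makes $A\mapsto f^{\to}(A,\overline S)$ submodular with value $0$ at $\varnothing$, hence subadditive:
\[
f^{\to}(A_1\cup A_2,\overline S)\le f^{\to}(A_1,\overline S)+f^{\to}(A_2,\overline S)-f^{\to}(A_1\cap A_2,\overline S)\le f^{\to}(A_1,\overline S)+f^{\to}(A_2,\overline S),
\]
using non-negativity of the subtracted term. Induction over the singletons $\{u\}$ for $u\in S$ then yields $f(S)\le\sum_{u\in S}f^{\to}(u,\overline S)$. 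All these arguments are disjoint from $\overline S$, so no $M$-values appear.
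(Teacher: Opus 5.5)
Your proof is correct and follows essentially the same route as the paper. It uses the lattice argument on minimizers $T_1,T_2$ (intersection feasible for $B_1\cup B_2$, union for $B_1\cap B_2$), absorbs the intersecting cases into the $M$ convention together with monotonicity, mirrors the argument for fixed $B$, and obtains Property 4 from submodularity plus non-negativity in $A$. You also supply details the paper omits: the paper only says ``the proof is the same'' for Property 3 and leaves the subadditivity induction for Property 4 implicit, and your monotonicity sentence gets the inclusion of feasible families in the right direction, where the paper's wording has it reversed.
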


Algorithm~\ref{alg:general_oversampling} and its proof relies on $f^{\to}$ and its  Lov\'{a}sz extension $\wh{f^{\to}}$. Similar to the proof of Theorem~\ref{thm:total_sensitivity_Boolean}, the first step is an existential proof that $\sum_i \sigma_i^{(k)}\leq kn$. In particular, let $S_i:=\arg\max_{S \in {[n] \choose k}} \frac{f_i(S)}{f(S)}$ in this section such that $\sigma_i^{(k)}=\frac{f_i(S_i)}{F(S_i)}$. We will use $f^{\to}(A,B)$ in the following way: By Definition~\ref{def:directed_block_mincut_function}, $f_i^{\to}(S_i,\overline{S_i})=f_i(S_i)$. Then Claim~\ref{clm:subadditive} implies $\sum_{u \in S_i} f_i^{\to}(u,\overline{S_i}) \ge f_i(S_i)$, which provides an upper bound on the enumerator of $\sigma_i^{(k)}$. To lower bound $F(S_i)$, we observe that $f_j(S_i) \ge f^{\to}(u,\overline{S_i})$ by the definition. Combining these with Lemma~\ref{lem:sum_over_min_k} shows $\sum_i \sigma_i^{(k)}\leq nk$.

The next step is to show $\rho_i$ calculated in Algorithm~\ref{thm:total_sensitivity_general} has $\rho_i \ge \sigma_i^{(k)}$ and $\sum_i \rho_i \leq 4nk$. The key is the definition of vectors $X_u(v):=\frac{1}{F(\wh{S_{u,v}})}$ where $\wh{S_{u,v}}$ is a 2-approximation of $\underset{S \in {[n] \choose k}: u \in S, v \notin S}{\min} F(S)$. Its main property is that if $S_i:=\arg\max_{S \in {[n] \choose k}} \frac{f_i(S)}{f(S)}$ has $u \in S_i$ and $v \notin S_i$, then $X_u \ge \frac{\mathbf{1}_{\overline{S_i}}}{2F(S_i)}$. On the other hand, $\wh{f_i^{\to}}(u,\frac{\mathbf{1}_{\overline{S_i}}}{2F(S_i)})=\frac{f_i^{\to}(u,\overline{S_i})}{2F(S_i)}$ by the  Lov\'{a}sz extension. Because $\rho_i := \sum_{u} 2 \widehat{f_i^{\to}}(u, X_u)$ and $\sum_u \wh{f_i^{\to}}(u,\frac{\mathbf{1}_{\overline{S_i}}}{2F(S_i)}) \ge \sigma_i^{(k)}$ from the first step, these provide the upper bound $\rho_i \ge \sigma_i^{(k)}$.

In the rest of this section, we prove Claim~\ref{clm:subadditive} in Section~\ref{sec:block_minimum_cut_function} and show the full proof of Theorem~\ref{thm:total_sensitivity_general} in Section~\ref{sec:submodular_cardinality_sensitivity}.

\begin{algorithm}[h]
    \caption{Sensitivity Oversampling for General Submodular Functions}
    \label{alg:general_oversampling}
    \begin{algorithmic}
        \Procedure{ApproximateSensitivity}{$f_1,\ldots,f_m:\{0,1\}^{n} \rightarrow \mathbb{R}_{\ge 0},k$}
        \State $F(S) \leftarrow \sum_{i=1}^m f_i(S)$
        \For{$(u, v) \in [n] \times [n]$ \text{ with } $u \neq v$}
            \State $\widehat{S_{u,v}} \leftarrow \textsc{BicriteriaApproximate}(F, k, 1/2, u, v)$ in Algorithm \ref{alg:bicriteria_min}
            \State $X_u(v) \leftarrow \frac{1}{F(\widehat{S_{u,v}})}$ \Comment{Define a vector $X_u \in \mathbb{R}^n_{\ge 0}$}
        \EndFor
        \For{$u \in [n]$}
            \State $X_u(u) \leftarrow 0$
            \State $\pi_u \leftarrow \text{sort}(X_u, \text{descending})$   \Comment{A permutation of $[n]$ with a descending order on $X_u$}
        \EndFor
        \For{$i = 1$ \textbf{to} $m$}
            \State $\rho_i \leftarrow 0$
            \For{$u \in [n]$}
               \State Define $f_i^{\to}(u, B) := \underset{\substack{T \subseteq [n] ,\, u \in T,\, T \cap B = \varnothing}}{\min} f_i(T)$ \Comment{a monotone submodular function by Lemma~\ref{lem:block_submodular_monotone}}
               \State 
               By Definition~\ref{eq:Lovasz_ext_per} of the Lov\'asz extension, compute \[\widehat{f_i^{\to}}(u, X_u) := \sum_{j=1}^{n-1} \big(X_u(\pi_u(j)) - X_u(\pi_u(j+1))\big) \cdot f_i^{\to}(u, \{\pi_u(1),\ldots,\pi_u(j)\}) + X_u(\pi_u(n)) \cdot f_i^{\to}(u, [n])\]
               \State $\rho_i \leftarrow \rho_i + 2 \widehat{f_i^{\to}}(u, X_u)$ \Comment{Equivalently, $\rho_i := \sum_{u} 2 \widehat{f_i^{\to}}(u, X_u)$}
            \EndFor
        \EndFor
        \State \Return $\{\rho_i\}_{i=1}^m$
    \EndProcedure
    \end{algorithmic}
\end{algorithm}

\subsection{Minimum Directed Block Cut Function}\label{sec:block_minimum_cut_function}
    We finish the proof of Claim~\ref{clm:subadditive} in this section. We first prove that for a fixed $A\subseteq [n]$, $f^{\to}(A,B)$ is a monotone submodular function on subsets $B\subseteq [n]$.

    \begin{lemma}\label{lem:block_submodular_monotone}
        For any fixed $f: \{0,1\}^{n} \to \R_{\ge 0}$ and $A \subseteq [n]$, $f^{\to}(A,B)$ is a monotone, non-negative, submodular function on $B \subseteq [n]$. Moreover, if $f([n])=0$, then $f^{\to}(A,\varnothing) = 0$.
    \end{lemma}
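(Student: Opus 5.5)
The plan is to fix $A$, write $g(B):=f^{\to}(A,B)$, and check each property straight from Definition~\ref{def:directed_block_mincut_function}, using only submodularity of $f$ and the choice $M>\max_S f(S)\ge 0$.

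\emph{Non-negativity, the value at $\varnothing$, and monotonicity.} Every value of $g$ is either some $f(T)\ge 0$ or $M>0$, so $g\ge 0$. If $f([n])=0$, then $T=[n]$ is feasible for $g(\varnothing)$ (it contains $A$ and avoids $\varnothing$), so $0\le g(\varnothing)\le f([n])=0$. For monotonicity, take $B\subseteq B'$. If $A\cap B'\neq\varnothing$, then $g(B')=M$, which exceeds every value of $f$ and hence is at least $g(B)$. Otherwise $A\cap B=\varnothing$ as well. Every $T$ feasible for $B'$ (with $A\subseteq T$ and $T\cap B'=\varnothing$) is then feasible for $B$, so the minimum for $B$ is over a larger family and $g(B)\le g(B')$.

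\emph{Submodularity, main case.} Fix $B_1,B_2$ with $A\cap B_1=A\cap B_2=\varnothing$, and let $T_1,T_2$ be minimizers for $g(B_1)$ and $g(B_2)$. Then:
\begin{itemize}
\item $T_1\cap T_2$ contains $A$ and avoids $B_1\cup B_2$, so it is feasible for $g(B_1\cup B_2)$;
\item $T_1\cup T_2$ contains $A$ and avoids $B_1\cap B_2$, since any element of $B_1\cap B_2$ lies in neither $T_1$ nor $T_2$, so it is feasible for $g(B_1\cap B_2)$.
\end{itemize}
Note that $A\cap(B_1\cup B_2)=\varnothing$, so neither of these two values is $M$. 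Submodularity of $f$ then gives
\[
g(B_1\cup B_2)+g(B_1\cap B_2)\le f(T_1\cap T_2)+f(T_1\cup T_2)\le f(T_1)+f(T_2)=g(B_1)+g(B_2).
\]

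\emph{Submodularity, degenerate cases.} The only step needing care is when the value $M$ appears. Suppose $A\cap B_1\neq\varnothing$ (the case of $B_2$ is symmetric). Then $A\cap(B_1\cup B_2)\neq\varnothing$ too, so $g(B_1)=g(B_1\cup B_2)=M$. The inequality reduces to $g(B_1\cap B_2)\le g(B_2)$, which is exactly the monotonicity proved above. Combining the cases shows $g$ is monotone, non-negative and submodular on $B$, with $g(\varnothing)=0$ whenever $f([n])=0$.
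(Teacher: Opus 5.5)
Your proof is correct and follows the paper's argument: in the main case you use the same feasibility of $T_1\cap T_2$ for $B_1\cup B_2$ and of $T_1\cup T_2$ for $B_1\cap B_2$, and you reduce the degenerate cases to monotonicity. Your version is slightly tidier than the paper's. You check monotonicity explicitly when the value $M$ appears, which the paper's one-line monotonicity argument glosses over. You also merge the paper's two degenerate subcases into a single case.
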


\begin{proof}
    If $f([n])=0$, then $f^{\to}(A,\varnothing)=\min_{T:A\subseteq T} f(T)=f([n])=0$.
    
    Monotonicity follows because any $T$ feasible for $B'$ is feasible for a super set $B$, so $f^{\to}(A,B) \le f^{\to}(A,B')$.

We now prove submodularity. First, consider $B_1, B_2 \subseteq [n] \setminus A$. Let $T_1$ and $T_2$ be the minimizers of $f^{\to}(A, B_1)$ and $f^{\to}(A, B_2)$, respectively. Thus, we have $A \subseteq T_1 \subseteq \overline{B_1}$ and $A \subseteq T_2 \subseteq \overline{B_2}$. Consequently, $A \subseteq T_1 \cap T_2$ and $A \subseteq T_1 \cup T_2$. Note that
    \begin{align*}
        T_1 \cap T_2 &\subseteq \overline{B_1} \cap \overline{B_2} = \overline{B_1 \cup B_2}, \\
        T_1 \cup T_2 &\subseteq \overline{B_1} \cup \overline{B_2} = \overline{B_1 \cap B_2}.
    \end{align*}

    Since $f^{\to}(A,B_1 \cup B_2)$ is the minimum of $f(T)$ in all $A \subseteq T$ with $T\cap (B_1 \cup B_2)=\varnothing$, $f(T_1 \cap T_2)\ge f^{\to}(A,B_1 \cup B_2)$ given the above property. Similarly, $f(T_1 \cup T_2)\ge f^{\to}(A,B_1 \cap B_2)$.
    
    Thus, by the submodularity of $f$ and the definition of $f^{\to}$, we have
    \begin{align*}
        f^{\to}(A,B_1) + f^{\to}(A,B_2)
        &= f(T_1) + f(T_2) \\
        &\ge f(T_1 \cap T_2) + f(T_1 \cup T_2) \\
        &\ge f^{\to}(A,B_1 \cup B_2) + f^{\to}(A,B_1 \cap B_2).
    \end{align*}
    Hence, $f^{\to}(A,\cdot)$ is submodular on $[n]\setminus A$. 

    Now consider arbitrary $B_1, B_2 \subseteq [n]$ in which $A$ intersects with at least one, implying $A \cap (B_1 \cup B_2) \neq \varnothing$ and $f^{\to}(A, B_1 \cup B_2) = M$. 
    If both intersect with $A$, then $f^{\to}(A, B_1) + f^{\to}(A, B_2) = 2M \ge M + f^{\to}(A, B_1 \cap B_2)$, since $M$ bounds all possible values. 
    If exactly one intersects $A$, say $B_1$, then $B_2$ and $B_1 \cap B_2$ are disjoint from $A$. By our proven monotonicity, $f^{\to}(A, B_2) \ge f^{\to}(A, B_1 \cap B_2)$, which implies $f^{\to}(A, B_1) + f^{\to}(A, B_2) = M + f^{\to}(A, B_2) \ge M + f^{\to}(A, B_1 \cap B_2)$. 
    
    Therefore, submodularity holds for all $B_1, B_2 \subseteq [n]$.
\end{proof}

    We have the following property for $f^{\to}(\cdot,B)$. The proof is the same, which is omitted here.
    \begin{lemma}\label{lem:block_submodular_monotone_sourse}
        For any fixed $f: \{0,1\}^{n} \to \R_{\ge 0}$ and $B \subseteq [n]$, $f^{\to}(A,B)$ is a monotone, non-negative, submodular function on $A \subseteq [n]$. Moreover, if $f(\varnothing)=0$, then $f^{\to}(\varnothing,B) = 0$.
    \end{lemma}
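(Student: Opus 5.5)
The argument mirrors the proof of Lemma~\ref{lem:block_submodular_monotone}, with the roles of intersection and union swapped. I would handle three things in order: the value at $A=\varnothing$, monotonicity in $A$, and submodularity in $A$. Non-negativity is immediate, since every value of $f^{\to}(A,B)$ is either some $f(T)\ge 0$ or $M>0$.

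\emph{Value at $\varnothing$.} If $f(\varnothing)=0$, then $f^{\to}(\varnothing,B)=\min_{T\cap B=\varnothing} f(T)$. The set $T=\varnothing$ is feasible and gives $0$, and the minimum is at least $0$ by non-negativity, so it equals $0$.

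\emph{Monotonicity.} Take $A'\subseteq A$. If $A\cap B=\varnothing$, then every $T$ feasible for $(A,B)$ satisfies $A'\subseteq A\subseteq T$ and $T\cap B=\varnothing$, so it is feasible for $(A',B)$. Hence $f^{\to}(A',B)\le f^{\to}(A,B)$. If $A\cap B\neq\varnothing$, then $f^{\to}(A,B)=M$, and $M$ exceeds every possible value.

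\emph{Submodularity.} First take $A_1,A_2\subseteq [n]\setminus B$. Let $T_1,T_2$ be minimizers for $(A_1,B)$ and $(A_2,B)$. Then $A_1\cap A_2\subseteq T_1\cap T_2$ and $A_1\cup A_2\subseteq T_1\cup T_2$, and both sets avoid $B$ because each $T_j$ avoids $B$. So $T_1\cup T_2$ is feasible for $(A_1\cup A_2,B)$ and $T_1\cap T_2$ is feasible for $(A_1\cap A_2,B)$. Submodularity of $f$ then gives
\[
f^{\to}(A_1,B)+f^{\to}(A_2,B)=f(T_1)+f(T_2)\ge f(T_1\cup T_2)+f(T_1\cap T_2)\ge f^{\to}(A_1\cup A_2,B)+f^{\to}(A_1\cap A_2,B).
\]

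The step needing the most care is the boundary case where some $A_j$ meets $B$, because the sentinel value $M$ is involved. In that case $A_1\cup A_2$ meets $B$, so the right-hand side equals $M+f^{\to}(A_1\cap A_2,B)$.
\begin{itemize}
\item If both $A_1$ and $A_2$ meet $B$, the left-hand side is $2M$. This is at least the right-hand side because $f^{\to}(A_1\cap A_2,B)\le M$; that value is either $M$ itself or a minimum of $f$, which is below $M$.
\item If only $A_1$ meets $B$, then $A_1\cap A_2\subseteq A_2$ is disjoint from $B$. Monotonicity, already proven, gives $f^{\to}(A_2,B)\ge f^{\to}(A_1\cap A_2,B)$. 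So the left-hand side $M+f^{\to}(A_2,B)$ dominates the right-hand side.
\end{itemize}
This completes all cases.
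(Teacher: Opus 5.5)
Your proof is correct and is essentially the paper's argument: the paper omits this proof as ``the same'' as that of Lemma~\ref{lem:block_submodular_monotone}, and yours is that proof with union and intersection swapped in the feasibility step ($T_1\cup T_2$ is feasible for $A_1\cup A_2$, $T_1\cap T_2$ for $A_1\cap A_2$), together with the identical case analysis for the sentinel value $M$. You also handle the $A\cap B\neq\varnothing$ case explicitly in the monotonicity step, which the paper's version of the argument leaves implicit.
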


\subsection{Proof of Theorem~\ref{thm:total_sensitivity_general}}\label{sec:submodular_cardinality_sensitivity}
We will use Edmonds' Greedy Construction to analyze the effect to permutation $\pi_u$ in Algorithm~\ref{alg:general_oversampling}. We summarize several properties of this construction and provide a proof.
\begin{lemma}[Edmonds' Greedy Construction, \cite{Ed03}]\label{lem:edmonds_greedy}
    Let $f: \{0,1\}^{n} \to \mathbb{R}_{\ge 0}$ be a submodular function with $f(\varnothing) = 0$. Recall its base polytope is
    \[
\mathcal{B}(f) := \left\{ y \in \mathbb{R}^n : \forall S \subseteq [n], \sum_{v \in S} y(v) \le f(S) \text{ and } \sum_{v \in [n]} y(v) = f([n]) \right\}.
    \]
    Let us fix any permutation $\pi = (v_1, v_2, \ldots, v_n)$ of $[n]$ and define its prefix set $\mathrm{Pre}_\pi(k) := \{v_1, \ldots, v_k\}$ for $k \le n$. We construct vector $y^{\pi}_f \in \mathbb{R}^n$ as
    \[
        y^{\pi}_f(v_i) := f(\mathrm{Pre}_\pi(i)) - f(\mathrm{Pre}_\pi(i-1)),
    \]
    for $i \in [n]$. Then $y^{\pi}_f$ satisfies the following properties:
    \begin{enumerate}
        \item $y^{\pi}_f \in \mathcal{B}(f)$.
        \item For any $k \in [n]$,  $\sum_{v \in \mathrm{Pre}_\pi(k)} y^{\pi}_f(v) = f(\mathrm{Pre}_\pi(k))$.
        \item If $f$ is monotone, all entries of $y^{\pi}_f$ are non-negative.
    \end{enumerate}
\end{lemma}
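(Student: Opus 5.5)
The three properties follow from a telescoping identity plus the diminishing-returns form of submodularity, so I would prove them in the order 2, 3, 1.

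\textbf{Property 2 (prefix sums).} For any $k$, the sum $\sum_{v \in \mathrm{Pre}_\pi(k)} y^{\pi}_f(v) = \sum_{i=1}^k \big(f(\mathrm{Pre}_\pi(i)) - f(\mathrm{Pre}_\pi(i-1))\big)$ telescopes to $f(\mathrm{Pre}_\pi(k)) - f(\varnothing) = f(\mathrm{Pre}_\pi(k))$, using $f(\varnothing)=0$. Taking $k=n$ gives the equality constraint $\sum_{v\in[n]} y^\pi_f(v) = f([n])$ in $\mathcal{B}(f)$.

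\textbf{Property 3 (nonnegativity).} If $f$ is monotone, then $\mathrm{Pre}_\pi(i-1)\subseteq \mathrm{Pre}_\pi(i)$ gives $y^\pi_f(v_i)\ge 0$ immediately.

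\textbf{Property 1 (the inequalities $\sum_{v\in S} y^\pi_f(v)\le f(S)$ for every $S\subseteq[n]$).} This is the main step. I would argue by induction on $|S|$, with the base case $S=\varnothing$ trivial.
\begin{itemize}
\item For nonempty $S$, let $v_j$ be the element of $S$ with the largest index in $\pi$, and set $S' := S\setminus\{v_j\}$.
\item Then $S'\subseteq \mathrm{Pre}_\pi(j-1)$ and $v_j\notin \mathrm{Pre}_\pi(j-1)$.
\item Diminishing marginal returns (the equivalent form of submodularity in Section~\ref{sec:prel}) gives
\[
y^\pi_f(v_j) = f(\mathrm{Pre}_\pi(j-1)\cup\{v_j\}) - f(\mathrm{Pre}_\pi(j-1)) \le f(S'\cup\{v_j\}) - f(S').
\]
\item Adding the inductive hypothesis $\sum_{v\in S'} y^\pi_f(v)\le f(S')$ yields $\sum_{v\in S} y^\pi_f(v)\le f(S)$.
\end{itemize}

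The only care needed is the ordering trick of removing the $\pi$-last element of $S$, which makes $S'$ a subset of the prefix so that diminishing returns applies in the right direction. Everything else is bookkeeping.
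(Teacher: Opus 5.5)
Your proof is correct and matches the paper's argument: you get Properties 2 and 3 from telescoping and monotonicity, and Property 1 by applying diminishing returns to each element of $S$ against the $\pi$-prefix preceding it. Your induction on $|S|$, which peels off the $\pi$-last element, is just the unrolled form of the paper's telescoping sum with $U_j \subseteq V_j$.
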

\begin{proof}
    Property 3 is obvious. Next, we prove Property 2. For any prefix set $\mathrm{Pre}_k$ of the permutation, summing the vector components yields:
    \[ \sum_{i=1}^k y^{\pi}_f(v_i) = \sum_{i=1}^k (f(\mathrm{Pre}_\pi(i)) - f(\mathrm{Pre}_\pi(i-1))) = f(\mathrm{Pre}_\pi(k)) - f(\mathrm{Pre}_\pi(0)) = f(\mathrm{Pre}_\pi(k)).
\]
    In particular, when $k = n$, $\mathrm{Pre}_\pi(n) = [n]$ and $\sum_{v \in [n]} y^{\pi}_f(v) = f([n])$.
    Thus, $y^{\pi}_f$ satisfies the equality constraint of $\mathcal{B}(f)$.
    
    To prove Property 1, it remains to show that $\sum_{v \in A} y^{\pi}_f(v) \le f(A)$ for every subset $A \subseteq [n]$.
    Let $A = \{v_{i_1}, v_{i_2}, \dots, v_{i_m}\}$ according to the order of the permutation, where $i_1 < i_2 < \dots < i_m$.
    Expanding the sum over $A$ gives
    \[ \sum_{j=1}^m y^{\pi}_f(v_{i_j}) = \sum_{j=1}^m \left( f(\{v_1, \dots, v_{i_j}\}) - f(\{v_1, \dots, v_{i_j-1}\}) \right).
\]
    Using the submodularity of $f$, for each $j \in \{1, \dots, m\}$, let $V_j = \{v_1,v_2, \dots, v_{i_j-1}\}$ and $U_j = \{v_{i_1},v_{i_2}, \dots, v_{i_{j-1}}\}$.
    Since $U_j \subseteq V_j$, submodularity shows
    \[ f(V_j \cup \{v_{i_j}\}) - f(V_j) \le f(U_j \cup \{v_{i_j}\}) - f(U_j).
\]
    The left-hand side is exactly $y^{\pi}_f(v_{i_j})$. Substituting this bound into the sum gives
    \[ \sum_{j=1}^m y^{\pi}_f(v_{i_j}) \le \sum_{j=1}^m \left( f(U_j \cup \{v_{i_j}\}) - f(U_j) \right).
\]
    Since $U_j \cup \{v_{i_j}\} = U_{j+1}$, the right-hand telescopes:
    \[ \sum_{j=1}^m \left( f(U_{j+1}) - f(U_j) \right) = f(U_{m+1}) - f(U_1).
\]
    Because $U_{m+1} = A$ and $U_1 = \varnothing$, this sum is exactly $f(A)$. Thus,

    \[
\sum_{v \in A} y^{\pi}_f(v) =\sum_{j=1}^m y^{\pi}_f(v_{i_j})\le f(A),
    \]
    which completes the proof that $y^{\pi}_f \in \mathcal{B}(f)$.
\end{proof}

Similarly to the Proof in Section~\ref{sec:Boolean_submodular_cardinality}, the proof of Theorem~\ref{thm:total_sensitivity_general} has two steps.

\paragraph{Step 1: bound $\sum_i \sigma_i^{(k)} \le kn$ for any $k$.} 

    \begin{claim}\label{clm:general_sensitivity_bound}
        Given any $k$ and any submodular functions $f_1,\ldots,f_m$, for $F:=f_1+\cdots+f_m$, sensitivity $\sigma_i^{(k)}:=\max_{S \in {[n] \choose \le k}\frac{f_i(S)}{F(S)}}$ satisfies
        \[
        \sum_{i=1}^m \sigma_i^{(k)} = \sum_{i=1}^m\max_{|S| \leq k} \frac{f_i(S)}{F(S)}\leq kn.
        \]
    \end{claim}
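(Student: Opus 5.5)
Fix, for each $i$ with $\sigma_i^{(k)}>0$, a maximizer $S_i\in{[n]\choose\le k}$, so that $\sigma_i^{(k)}=f_i(S_i)/F(S_i)$. We may assume $S_i\neq\varnothing$ and $\overline{S_i}\neq\varnothing$, since $f_i(\varnothing)=f_i([n])=0$ after Claim~\ref{claim:adding_dummy_node}. The plan mirrors Claim~\ref{clm:upper_bound_sigma}, with the directed edge set replaced by the block cut function $f^{\to}$. By Property 4 of Claim~\ref{clm:subadditive}, $f_i(S_i)=f_i^{\to}(S_i,\overline{S_i})\le \sum_{u\in S_i} f_i^{\to}(u,\overline{S_i})$. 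For the denominator, every $T$ with $u\in T$ and $T\cap\overline{S_i}=\varnothing$ includes $T=S_i$, so for every $j$ and every $u\in S_i$ we have $f_j(S_i)\ge f_j^{\to}(u,\overline{S_i})$. Consequently, for each $u\in S_i$,
\[
F(S_i)\ \ge\ \sum_j f_j^{\to}(u,\overline{S_i})=:G_u(\overline{S_i}),
\]
where $G_u(B):=\sum_j f_j^{\to}(u,B)$ is monotone, submodular and normalized in $B$ by Lemma~\ref{lem:block_submodular_monotone}. This gives
\[
\sigma_i^{(k)}\le \sum_{u\in S_i}\frac{f_i^{\to}(u,\overline{S_i})}{G_u(\overline{S_i})}.
\]

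Next we linearize. For fixed $u$, choose a permutation $\pi_u$ of $[n]\setminus\{u\}$ and use Edmonds' greedy vectors $y_j^u:=y^{\pi_u}_{f_j^{\to}(u,\cdot)}$ from Lemma~\ref{lem:edmonds_greedy}. These are nonnegative because $f_j^{\to}(u,\cdot)$ is monotone. Their sum $Y^u:=\sum_j y^u_j$ is the greedy vector of $G_u$. The difficulty is that the greedy vectors give $\sum_{w\in B} y(w)\le f(B)$, which is the wrong direction for the numerator. So we do not linearize $f_i^{\to}$ via a single greedy vector. Instead we bound $f_i^{\to}(u,\overline{S_i})$ through a chain that ends in the \emph{complement} direction. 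Order $\pi_u$ so that the $k$ elements we intend to remove come last, i.e.\ $\overline{S_i}$ is a prefix. Then Property 2 gives equality, $f^{\to}_i(u,\overline{S_i})=\sum_{w\in\overline{S_i}}y_i^u(w)$ and $G_u(\overline{S_i})=\sum_{w\in\overline{S_i}}Y^u(w)$. The issue is that $\pi_u$ then depends on $S_i$.

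To decouple the permutation from $S_i$, fix one canonical $\pi_u$ per $u$ and charge each $i$ with $u\in S_i$ to a single coordinate. The natural choice is a decreasing order of a weight vector $Z^u(w)$, analogous to $|A^{-1}(u,w)|$ in the Boolean case, where $Z^u(w)$ is the total mass $\sum_j y^u_j(w)$. Then we bound
\[
\sum_{w\notin S_i} Y^u(w)\ \ge\ \min_{S\in{[n]\choose\le k},\,u\in S,\,w_0\notin S}\ \sum_{w\notin S}Z^u(w)
\]
for the coordinate $w_0$ to which $i$'s numerator mass is attributed. Summing over $i$ regroups the numerators as $\sum_v Z^u(v)$, and Lemma~\ref{lem:sum_over_min_k} with cardinality $k-1$ on $[n]\setminus\{u\}$ gives at most $k$ per $u$. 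Summing over $u\in[n]$ yields $kn$.

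\textbf{Main obstacle.} The hard step is the decoupling: justifying, for a permutation fixed independently of $S_i$, both $f_i^{\to}(u,\overline{S_i})\le\sum_{w\in\overline{S_i}}y^u_i(w)$ and $G_u(\overline{S_i})\ge \sum_{w\in\overline{S_i}}Y^u(w)$. The first fails in general, since the greedy vectors lie in the base polytope. I would try a permutation ordering elements by increasing $Z^u$, so that $S_i\setminus\{u\}$ lies near the end, and argue, exchanging at most $k$ elements, that the minimizing $k$-set can be taken to be a suffix. If that fails, a fallback is a per-$i$ fractional assignment of $f_i^{\to}(u,\overline{S_i})$ across the edges $(u,v)$, $v\in\overline{S_i}$, with weights proportional to $y_i^u$. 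This aims to recover the Boolean argument's structure coordinatewise.
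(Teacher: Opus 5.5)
Your reduction to $\sigma_i^{(k)}\le\sum_{u\in S_i} f_i^{\to}(u,\overline{S_i})/G_u(\overline{S_i})$ is correct. The proposal then stops at the step that carries the proof. For each $u$, you never exhibit a fixed nonnegative vector, independent of $i$, that both splits each numerator $f_i^{\to}(u,\overline{S_i})$ over the coordinates $v\in\overline{S_i}$ and lower-bounds each denominator.

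Your main route, a single canonical permutation $\pi_u$ shared by all $f_j$, cannot supply this. As you observe, for $\pi_u$ independent of $S_i$ the inequality $f_i^{\to}(u,\overline{S_i})\le\sum_{w\in\overline{S_i}}y^u_i(w)$ fails in general. A greedy vector is guaranteed tight only on prefixes, and one permutation cannot have all the sets $\overline{S_i}$ as prefixes. The suggested repair (sort by increasing $Z^u$ and exchange at most $k$ elements) is not an argument. It is also circular, because $Z^u$ is built from the $y^u_j$, which are built from $\pi_u$.

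The missing idea is that the permutation does not have to be shared. The maximizers $S_1,\dots,S_m$ are fixed in advance, so let each function use its own permutation. Set $z^u_{f_j}:=y^{\pi_{S_j}}_{f_j^{\to}(u,\cdot)}$, where $\pi_{S_j}$ lists $\overline{S_j}$ before $S_j$, and put $Z^u:=\sum_j z^u_{f_j}$.

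The numerator of $\sigma_i^{(k)}$ involves only $f_i$'s own vector, for which $\overline{S_i}$ is a prefix. So Property~2 of Lemma~\ref{lem:edmonds_greedy} gives $f_i^{\to}(u,\overline{S_i})=\sum_{v\in\overline{S_i}}z^u_{f_i}(v)$ exactly.

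The denominator needs only the one-sided base-polytope inequality, which holds for every $j$ regardless of which permutation produced $z^u_{f_j}$: $\sum_{w\in\overline{S_i}}z^u_{f_j}(w)\le f_j^{\to}(u,\overline{S_i})\le f_j(S_i)$. Summing over $j$ gives $\sum_{w\notin S_i}Z^u(w)\le F(S_i)$. So the second inequality in your ``main obstacle'' was never a problem. The first one only constrains the permutation used for $f_i$ itself.

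Nonnegativity then gives $\sum_{i:\,u\in S_i,\,v\notin S_i}z^u_{f_i}(v)\le Z^u(v)$. Every denominator is at least $\min_{|S|\le k,\,u\in S,\,v\notin S}\sum_{w\notin S}Z^u(w)$, and Lemma~\ref{lem:sum_over_min_k} bounds the sum over $v$ by $k$ for each $u$.

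Your fallback, splitting $f_i^{\to}(u,\overline{S_i})$ with weights $y_i^u$, points in this direction. It works only once $y_i^u$ is taken from $\pi_{S_i}$, and the denominator weight is the sum of these per-function vectors rather than the greedy vector $Y^u$ of $G_u$ under a common permutation.
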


Let $S_i:=\arg\max_{S \in {[n] \choose \le k}} \frac{f_i(S)}{F(S)}$ in this section. The proof of Claim~\ref{clm:general_sensitivity_bound} assumes the knowledge of $S_i:=\arg\max_{S \in {[n] \choose \le k}} \frac{f_i(S)}{F(S)}$.

\begin{proof}[Proof of Claim~\ref{clm:general_sensitivity_bound}]

        For each $f_i$, let $\pi_{S_i}$ denote the permutation of $[n]$ that places all elements of $\overline{S_i}$ before elements in $S_i$. For convenience, for each $u \in [n]$ and $f_i$, we define a vector 
    \[
    z_{f_i}^{u} := y_{f_i^{\to}(u,\cdot)}^{\pi_{S_i}}\]  in this proof, which is in $\mathcal{B}(f_i^{\to}(u,\cdot))$. By Lemma~\ref{lem:edmonds_greedy}, it satisfies:
    \begin{enumerate}
        \item For all $A \subseteq [n]$, $\sum_{v \in A} z_{f_i}^{u}(v) \le f_i^{\to}(u,A)$.
        \item  $z_{f_i}^{u}(v) \ge 0$ for all $v \in [n]$.
        \item More importantly, $\sum_{v \in \overline{S_i}} z_{f_i}^{u}(v) = f_i^{\to}(u,\overline{S_i})$.
    \end{enumerate}

    We now prove $\sum_{i=1}^m \sigma_i \leq kn$. 
    We first use $z^u_{f_i}$ to derive an upper bound for $f_i$.
    For a fixed $S_i$, Property 4 of Claim~\ref{clm:subadditive} indicates
    \begin{equation}\label{eq:f_upper_bound}
    f_i(S_i) 
    \leq \sum_{u \in S_i} f_{i}^{\to}(u,\overline{S_i}) 
    = \sum_{u \in S_i} \sum_{v \in \overline{S_i}} z^u_{f_i}(v),
    \end{equation}
    where the last step uses the fact  $\sum_{v \in \overline{S_i}} z^u_{f_i}(v) = f_{i}^{\to}(u,\overline{S_i})$ by Property 3 of $z^u_{f_i}$.  
    
Now consider a fixed $u \in S_i$. For any $j \in [m]$, since $z^u_{f_j} \in \mathcal{B}(f_j^{\to}(u,\cdot))$, the total weight from $u$ to $\overline{S_i}$ satisfies
\begin{align*}
\sum_{v \in \overline{S_i}} z^u_{f_j}(v) 
\leq f_{j}^{\to}(u,\overline{S_i}).
\end{align*}

By Definition~\ref{def:directed_block_mincut_function},
\[
f_{j}^{\to}(u,\overline{S_i}) = \min_{{T \subseteq [n] , u \in T,~T \cap \overline{S_i} = \varnothing}} f_j(T).
\]
Since $S_i$ satisfies $u \in S_i$ and $S_i \cap \overline{S_i} = \varnothing$, we conclude
\[
f_{j}^{\to}(u,\overline{S_i}) \leq f_j(S_i).
\]
Summing over all $j$ gives
\begin{equation}\label{eq:summation_lower_bound}
\sum_{j \in [m]} \sum_{v \in \overline{S_i}} z^u_{f_j}(v) \leq \sum_{j \in [m]} f_{j}^{\to}(u,\overline{S_i}) \leq \sum_{j \in [m]} f_j(S_i) = F(S_i).    
\end{equation}

Now we rewrite $\sum_i \sigma_i$ as follows. Replacing $f_i(S_i)$ in $\sigma_i$ by ~\eqref{eq:f_upper_bound}, we obtain
\begin{align*}
\sum_{i=1}^m \sigma_i
= \sum_{i \in [m]} \frac{f_i(S_i)}{F(S_i)} 
\leq \sum_{i \in [m]} \frac{\sum_{u \in S_i} \sum_{v \in \overline{S_i}} z^u_{f_i}(v)}{F(S_i)}.
\end{align*}
Let $Z^u(v) := \sum_{j \in [m]} z^u_{f_j}(v)$ for convenience. Equation \eqref{eq:summation_lower_bound} implies 
\begin{equation}\label{eq:bound_FS_Z}
F(S_i) \ge \sum_{j \in [m]} \sum_{w \in \overline{S_i}} z^u_{f_j}(w) = \sum_{w \in \overline{S_i}} Z^u(w)    
\end{equation}
 such that
\begin{align*}
\sum_{i=1}^m \sigma_i
&\leq \sum_{i \in [m]} \sum_{u \in S_i} \sum_{v \in \overline{S_i}} \frac{z^u_{f_i}(v)}{\sum_{w \in \overline{S_i}} Z^u(w)}.
\end{align*}

Switching the order of summation gives
\begin{align*}
\sum_{i=1}^m \sigma_i
&\leq \sum_{u \in [n]} \sum_{v \neq u}  \frac{\sum_{{i:~u \in S_i, v \notin S_i}} z^u_{f_i}(v)}{\sum_{w \notin S_i} Z^u(w)}.
\end{align*}

For any fixed $u$ and $v$, the enumerator $\sum_{i:~u \in S_i,~v \notin S_i} z^u_{f_i}(v) \leq Z^u(v)$, since each $z^u_{f_i}(v)$ is non-negative. To bound the denominator over $S_i$, we consider the worst case like the proof in Section~\ref{sec:Boolean_submodular_cardinality}, namely the smallest denominator of $S$ with $u \in S$, $v \notin S$, and $|S| \leq k$. This bounds
\begin{align*}
\sum_{i=1}^m \sigma_i
&\leq \sum_{u \in [n]} \sum_{v \neq u} \frac{Z^u(v)}{\min_{\substack{S \subseteq [n]:~u \in S, \\ v \notin S,~|S| \leq k}} \sum_{w \notin S} Z^u(w)}.
\end{align*}

For each fixed $u$, the inner sum over $v$ is bounded by Lemma~\ref{lem:sum_over_min_k}: at most $k$ high-weight elements $v$ can be excluded by the cardinality constraint, so
\[
\sum_{v \neq u} \frac{Z^u(v)}{\min_{\substack{S \subseteq [n]:~u \in S, \\ v \notin S,~|S| \leq k}} \sum_{w \notin S} Z^u(w)}= 
\sum_{v \neq u} \frac{Z^u(v)}{\min_{\substack{S \subseteq [n]\setminus\{u\}:v \notin S, \\ |S| \leq k-1}} \sum_{w \notin S,w\neq u} Z^u(w)} \leq k.
\]
Summing over $u$ gives
\[
\sum_{i=1}^m \sigma_i \leq \sum_{u \in [n]} k = kn.
\]
\end{proof}

\paragraph{Step 2: Analysis of Algorithm~\ref{alg:general_oversampling}.}
The above proof requires knowing \(S_i\), but computing \(S_i\) exactly is NP-hard (see \cite{BIWB16}). Hence, the proof does not yield a polynomial-time algorithm. Returning to the analysis of Algorithm~\ref{alg:general_oversampling}, the key is that the vectors $\{X_u\}_{u \in [n]}$ defined in Algorithm~\ref{alg:general_oversampling} provide a good approximation of $\frac{1}{F(S_i)}\mathbf{1}_{\overline{S_i}}$ for $S_i:=\arg\max \frac{f_i(S)}{F(S)}$ whenever $u \in S_i$. On the other hand, $\widehat{f_i^{\to}} (u, \frac{1}{F(S_i)} \mathbf{1}_{\overline{S_i}} )=\frac{f_i^{\to}(u,\overline{S_i})}{F(S_i)}$ by the Lov\'{a}sz extension. This provides a way to bound $\sigma_i$ by $\rho_i$.

\begin{claim} \label{clm:general_oversampling_guarantee}
     $\rho_i$ calculated by Algorithm~\ref{alg:general_oversampling} satisfies $\rho_i \ge \sigma_i^{(k)}$ for every $i \in [m]$. Moreover, their total sum satisfies $\sum_{i=1}^m \rho_i \le 4kn$.
\end{claim}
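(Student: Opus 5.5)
The plan splits into the lower bound $\rho_i\ge\sigma_i^{(k)}$ and the upper bound $\sum_i\rho_i\le 4kn$. Both rest on two facts. First, $X_u(v)=1/F(\widehat{S_{u,v}})$, where $\widehat{S_{u,v}}$ is the output of Lemma~\ref{lem:submodular_cardinality_bicriteria} with $\alpha=1/2$. Second, $f_i^{\to}(u,\cdot)$ is monotone and submodular (Lemma~\ref{lem:block_submodular_monotone}), so its Lov\'asz extension is monotone and can be written through Edmonds' greedy vector (Lemma~\ref{lem:edmonds_greedy}).

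\textbf{Lower bound.} Fix $i$ and let $S_i$ attain $\sigma_i^{(k)}$.
\begin{itemize}
\item For every $u\in S_i$ and $v\notin S_i$, the set $S_i$ is feasible for $\mathrm{OPT}^{(k)}_{u,v}$. Hence $F(\widehat{S_{u,v}})\le 2\,\mathrm{OPT}^{(k)}_{u,v}\le 2F(S_i)$, i.e.\ $X_u(v)\ge \frac{1}{2F(S_i)}$.
\item Since $X_u(u)=0$ and $u\in S_i$, this gives the coordinatewise bound $X_u\ge \frac{1}{2F(S_i)}\mathbf{1}_{\overline{S_i}}$.
\item Monotonicity and positive homogeneity of $\widehat{f_i^{\to}}(u,\cdot)$ then give $\widehat{f_i^{\to}}(u,X_u)\ge \frac{f_i^{\to}(u,\overline{S_i})}{2F(S_i)}$.
\item Terms with $u\notin S_i$ are nonnegative, so we may drop them. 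This yields $\rho_i\ge \sum_{u\in S_i}\frac{f_i^{\to}(u,\overline{S_i})}{F(S_i)}$, which is at least $\frac{f_i(S_i)}{F(S_i)}=\sigma_i^{(k)}$ by Property~4 of Claim~\ref{clm:subadditive}.
\end{itemize}
One detail must be checked: the value $M$ taken by $f_i^{\to}(u,B)$ when $u\in B$. In the formula in Algorithm~\ref{alg:general_oversampling}, every prefix containing $u$ has coefficient $0$, since $X_u(u)=0$ is the minimum entry and ties among zeros give zero differences. The last term is $X_u(\pi_u(n))\cdot f_i^{\to}(u,[n])=0$. So $M$ never contributes, and monotonicity of the extension is used only on the region where it is legitimate.

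\textbf{Upper bound.} I would redo Step~1 directly with the algorithm's permutations $\pi_u$, rather than passing through $\sigma^{(2k)}$.
\begin{itemize}
\item Since $\pi_u$ sorts $X_u$ in descending order, Edmonds' construction gives $\widehat{f_i^{\to}}(u,X_u)=\sum_v z_i^u(v)X_u(v)$, where $z_i^u:=y^{\pi_u}_{f_i^{\to}(u,\cdot)}\ge 0$ by monotonicity. This follows from summation by parts, matching \eqref{eq:Lovasz_ext_per} with Property~2 of Lemma~\ref{lem:edmonds_greedy}.
\item Summing over $i$ gives $\sum_i\rho_i=2\sum_u\sum_{v\ne u}Z^u(v)X_u(v)$ with $Z^u:=\sum_i z_i^u$. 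The $v=u$ term vanishes because $X_u(u)=0$, even though $z_i^u(u)$ may be huge.
\item Lower-bound the denominator. Write $\widehat S=\widehat{S_{u,v}}$. Because $z_j^u\in\mathcal{B}(f_j^{\to}(u,\cdot))$ and $\widehat S$ is feasible, we get $\sum_{w\notin \widehat S} z_j^u(w)\le f_j^{\to}(u,\overline{\widehat S})\le f_j(\widehat S)$. Summing over $j$ gives $F(\widehat{S_{u,v}})\ge\sum_{w\notin\widehat{S_{u,v}}}Z^u(w)$.
\item Since $|\widehat{S_{u,v}}|\le 2k$, $u\in\widehat{S_{u,v}}$ and $v\notin\widehat{S_{u,v}}$, it follows that $X_u(v)\le 1\big/\min_{S\subseteq[n]\setminus\{u\},\,|S|\le 2k-1,\,v\notin S}\sum_{w\notin S,\,w\neq u}Z^u(w)$.
\item Lemma~\ref{lem:sum_over_min_k} with parameter $2k-1$, applied to the weights $Z^u$ restricted to $[n]\setminus\{u\}$, bounds the inner sum over $v$ by $2k$. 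Hence $\sum_i\rho_i\le 2\cdot 2k\cdot n=4kn$.
\end{itemize}

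\textbf{Main obstacle and remaining items.} The main obstacle I expect is the bookkeeping around the $u$-coordinate described above: making sure that neither the Lov\'asz-extension formula nor the greedy vector lets the $M$ values leak into the bounds. The key choice that makes the upper bound work is using the algorithm's own permutation $\pi_u$, so the extension is exactly linear in $z_i^u$. The running time is routine:
\begin{itemize}
\item $n^2$ bicriteria calls, costing $O(n^2\mathcal{T}_{\min}(F))$;
\item for each $i$ and $u$, $n$ evaluations of $f_i^{\to}(u,\cdot)$, each a constrained submodular minimization, costing $O(mn^2\mathcal{T}_{\min}(f_i))$ in total.
\end{itemize}
The sparsifier size $O(nk^2\log n/\epsilon^2)$ then follows from sampling proportional to $\rho_i$, exactly as in Corollary~\ref{cor:sparsity_Boolean}.
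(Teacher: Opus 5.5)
Your proposal is correct and follows essentially the paper's proof: the same lower-bound chain through $X_u\ge \frac{1}{2F(S_i)}\mathbf{1}_{\overline{S_i}}$, monotonicity and homogeneity, and Property~4, and the same upper bound through Edmonds' greedy vector for $\pi_u$, the base-polytope bound against $F(\widehat{S_{u,v}})$, and Lemma~\ref{lem:sum_over_min_k} with parameter $2k-1$. Your $Z^u=\sum_i z_i^u$ coincides with the paper's $W_u$ (the greedy vector of $f^{\to}_{\text{sum}}(u,\cdot)$) by linearity of the greedy construction, and your explicit check that the $M$-valued $u$-coordinate never contributes is somewhat more careful than the paper's treatment.
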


\begin{proof}[Proof of Claim~\ref{clm:general_oversampling_guarantee}]

We first prove $\rho_i \ge \sigma_i^{(k)}$ for each $f_i$. For a vector $x$ or $z$ in $\mathbb{R}^n$, let $\wh{f^{\to}}(u,x)$ denote the Lov\'asz extension on $f^{\to}(u,\cdot)$ in this proof.

Recall that $X_u \in \mathbb{R}^n$ is the vector with $X_u(v) := 1 / F(\widehat{S_{u,v}})$ for $v \neq u$, and $X_u(u) := 0$, as defined in Algorithm~\ref{alg:general_oversampling}. Permutation $\pi_u$ satisfies $X_u(\pi_u(1)) \geq \cdots \geq X_u(\pi_u(n))$ and $\rho_i = 2 \sum_{u=1}^n \widehat{f_i^{\to}}(u,X_u)$ in Algorithm~\ref{alg:general_oversampling}. 

Recall that $\sigma_i^{(k)}$ is obtained by $S_i$ under $|S_i|\le k$, i.e., $\sigma_i^{(k)} = \frac{f_i(S_i)}{F(S_i)}$. For any $u \in S_i$ and $v \notin S_i$, $S_i$ is feasible for $\mathrm{OPT}_{u,v}^{(k)}=\min\{F(S): |S|\le k,u\in S,v\notin S\}$. Thus, by Lemma~\ref{lem:submodular_cardinality_bicriteria}, $\widehat{S_{u,v}}$ calculated in  Algorithm~\ref{alg:general_oversampling}  satisfies 
\begin{align*}
    F(\widehat{S_{u,v}}) &\le 2\,\mathrm{OPT}_{u,v}^{(k)} \le 2F(S_i),
\end{align*}
so $X_u(v) = 1 / F(\widehat{S_{u,v}}) \ge 1/(2F(S_i))$ for all $v \notin S_i$. We can write them in vectors as $X_u\geq \frac{1}{2F(S_i)}\mathbf{1}_{\overline{S_i}}$.

Since $f_i^{\to}(u,\cdot)$ is monotone and its Lov\'asz extension $\widehat{f_i^{\to}}(u,\cdot)$ is positively homogeneous and order-preserving,
\begin{align}
    2\widehat{f_i^{\to}}(u,X_u)
    &\ge 2 \widehat{f_i^{\to}} \left(u, \frac{1}{2F(S_i)} \mathbf{1}_{\overline{S_i}} \right) \notag
    \\
    &= \frac{1}{F(S_i)} \widehat{f_i^{\to}}(u,\mathbf{1}_{\overline{S_i}})  \notag \\
    &= \frac{f_i^{\to}(u,\overline{S_i})}{F(S_i)} \label{eq:f_lovasz_lower_bound}.
\end{align}

    Recall that $\rho_i
    := 2 \sum_{u \in [n]} \widehat{f_i^{\to}}(u,X_u)$ in Algorithm~\ref{alg:general_oversampling}. To show $\rho_i > \sigma_i^{(k)}$, we restrain the summation of $u$ from $[n]$ to $S_i$:
\begin{align*}
    2 \sum_{u \in [n]} \widehat{f_i^{\to}}(u,X_u)
    &\ge 2 \sum_{u \in S_i} \widehat{f_i^{\to}}(u,X_u) \notag\\
    & \ge \sum_{u \in S_i} \frac{f_i^{\to}(u,\overline{S_i})}{F(S_i)} \tag{by \eqref{eq:f_lovasz_lower_bound}} \\
    & \ge \frac{f_i(S_i)}{F(S_i)} \tag{by Property 4 of Claim~\ref{clm:subadditive}}\\
    &= \sigma_i^{(k)} .  
\end{align*}

Next we show $\sum_{i=1}^m \rho_i \leq 4kn$. Exchanging the order of summation gives
\begin{equation}\label{eq:total_sensitivity_to_f}
    \sum_{i=1}^m \rho_i
    = 2 \sum_{u \in [n]} \sum_{i=1}^m \widehat{f_i^{\to}}(u,X_u).
\end{equation}

Different from the proof in Claim~\ref{clm:general_sensitivity_bound} where vector $z_{f_i}^{u}$ is constructed by Edmonds' approach on $f_i$ with permutation $\pi_{S_i}$, we use $n$ permutations $\pi_u$ (ascending orders of each $X_u$) for all $f_1,\ldots,f_m$. However, we could replace the sum $\sum_{i=1}^m \widehat{f_i^{\to}}(u,X_u)$  by one submodular function and consider its relation with $\sum_i f_i$. Let 
\[
f_{\text{sum}}^{\to}(u,A) := \sum_{i=1}^m f_i^{\to}(u,A) \text{ and } W_u := y_{f_{\text{sum}}^{\to}(u,\cdot)}^{\pi_u}\] be the vector output by Edmonds' greedy algorithm on $f_{\text{sum}}^{\to}(u, \cdot)$. That is, for $1 \le j \le n$,
$$
W_u(\pi_u(j)) = f_{\text{sum}}^{\to}(u,\mathrm{Pre}_\pi(j)) - f_{\text{sum}}^{\to}(u,\mathrm{Pre}_\pi(j-1)).
$$
Basically, $W_u$ would be a proxy of $Z^u$ defined in the proof of Claim~\ref{clm:general_sensitivity_bound}.

Then, we have
\begin{align}
    \sum_{i=1}^m \widehat{f_i^{\to}}(u,X_u)
     &= \sum_{i=1}^m\left(\sum_{j=1}^{n-1} (X_u(\pi_u(j)) - X_u(\pi_u(j+1))) f_i^{\to}(u,\mathrm{Pre}_\pi(j)) + X_u(\pi_u(n)) f_i^{\to}(u,\mathrm{Pre}_\pi(n))\right) \tag{Definition~\eqref{eq:Lovasz_ext_per} of Lov\'asz extension}\\
     &= \sum_{j=1}^{n-1}\Bigg( (X_u(\pi_u(j)) - X_u(\pi_u(j+1))) f_{\text{sum}}^{\to}(u,\mathrm{Pre}_\pi(j)) + X_u(\pi_u(n)) f_{\text{sum}}^{\to}(u,\mathrm{Pre}_\pi(n))\Bigg) \notag\\
    &= \sum_{j=1}^n X_u(\pi_u(j)) \Big(  f_{\text{sum}}^{\to}(u,\mathrm{Pre}_\pi(j)) - f_{\text{sum}}^{\to}(u,\mathrm{Pre}_\pi(j-1)) \Big) \notag\\
    &= \sum_{v \in [n]} W_u(v) X_u(v) = \sum_{v\neq u} 2 \frac{W_u(v)}{F(\wh{S_{u,v}})}
    \label{eq:sum_change_to_coefficient},
\end{align}
where the last step uses $X_u(v)=2 / F(\widehat{S_{u,v}})$ for $v\neq u$ and $X_u(u)=0$.  By Lemma~\ref{lem:edmonds_greedy}. $W_u \in \mathcal{B}(f_{sum}^{\to}(u,\cdot))$ and $W_u(v)\ge 0$ for all $v$, since $f_{sum}^{\to}(u,\cdot)$ is monotone.

One more step is to compare $W_u(v)$ and $F(\wh{S_{u,v}})$. By the definition of the base polytope, we have the following inequality between $W_u$ and $F(\wh{S_{u,v}})$, which is similar to \eqref{eq:bound_FS_Z} in the proof of Claim~\ref{clm:general_sensitivity_bound}.
\begin{align}
    \sum_{w \in [n]\setminus{\widehat{S_{u,v}}}} W_u(w)
    & \le f_{sum}^{\to}(u,[n]\setminus{\widehat{S_{u,v}}}) \tag{Property 4 of Claim~\ref{clm:subadditive}} \\
    & = \sum_{i=1}^m f_i^{\to}(u,[n]\setminus{\widehat{S_{u,v}}}) \notag \\
    & = \sum_{i=1}^m \min_{T: u \in T,\, T \subseteq \widehat{S_{u,v}}} f_i(T) \tag{by  Definition~\ref{def:directed_block_mincut_function} of $f_i^{\to}(u,\cdot)$} \\
    & \le \sum_{i=1}^m f_i(\widehat{S_{u,v}}) = F(\widehat{S_{u,v}}) \label{eq:sum_lower_bound}.
\end{align}
The rest of this proof is similar to the proof of Claim~\ref{clm:general_sensitivity_bound} except that we replace $Z^u$ by $W_u$ and $S_i$ by $\wh{S_{u,v}}$. Combining \eqref{eq:sum_change_to_coefficient} and \eqref{eq:sum_lower_bound}, we get
\begin{align*}
    \sum_{i=1}^m \rho_i
    =2 \sum_{u \in [n]}\sum_{v \in [n]} W_u(v) X_u(v)
    = 2 \sum_{u \in [n]} \sum_{v \neq u} \frac{W_u(v)}{F(\widehat{S_{u,v}})}
    \le 2 \sum_{u \in [n]}\sum_{v \neq u} \frac{W_u(v)}{\sum_{w \notin \widehat{S_{u,v}}} W_u(w)}.
\end{align*}
Since $|\widehat{S_{u,v}}| \le 2k$, $u\in \widehat{S_{u,v}}$, and $v \notin \widehat{S_{u,v}}$,
\[
\sum_{w \notin \widehat{S_{u,v}}} W_u(w) \ge \min_{\substack{|S| \le 2k,\,u\in S,\, v \notin S}} \sum_{w \notin S} W_u(w).
\]
Applying Lemma~\ref{lem:sum_over_min_k} to  $W_u$:
\begin{align*}
    \sum_{v \neq u} 
    \frac{W_u(v)}
    { \min\limits_{S \in {[n] \choose \le 2k},\,u\in S,\, v \notin S}
        \sum_{w \notin S} W_u(w) }
    = 
    \sum_{v \neq u} 
    \frac{W_u(v)}
    { \min\limits_{S \in {[n]\setminus\{u\} \choose \le 2k-1},\, v \notin S}
        \sum_{w \notin S,w\neq u} W_u(w) }
    \le 2k.
\end{align*}
Summing over $u\in [n]$ gives
\begin{align*}
    \sum_{i=1}^m \rho_i
    &\le 2 \sum_{u \in [n]} 2k = 4kn.
\end{align*}
\end{proof}
For the running time, Algorithm~\ref{alg:general_oversampling} makes $O(n^2)$ calls to $\textsc{BicriteriaApproximate}$, each costing $\mathcal{T}_{\min}(F)$ by Lemma~\ref{lem:submodular_cardinality_bicriteria}, $O(n^2)$ queries to evaluate $F(\widehat{S_{u,v}})$, and $O(m n^2)$ queries to the local cut functions $f_i^{\to}(u,\cdot)$ when evaluating~\eqref{eq:Lovasz_ext_per}, each costing $\mathcal{T}_{min}(f_i)$. The remaining work, including sorting the cost vectors, is lower order. Hence the total running time is $O(n^2 \cdot \mathcal{T}_{\min}(F) + m n^2 \cdot \mathcal{T}_{min}(f_i))$. This completes the proof.

The sparsifier follows from the same argument of Corollary~\ref{cor:sparsity_Boolean}, which is omitted here.

\section{Lower Bounds}\label{sec:lower_bounds}

In this section, We show several lower bounds on the size of sparsifers under cardinality constraints. We list these lower bounds in Table~\ref{tab:cardinality_special_cases}.

\begin{table}[h]
    \centering
    \begin{tabular}{l|cc}
        \textbf{Function Family} & \textbf{Sparsity} & \textbf{}  \\
        \hline
         general submodular & $\Omega(nk)$ &  Claim~\ref{clm:lower_bound_arbitrary}\\
         regular and symmetric & $\Omega(n)$ & Claim~\ref{clm:lower_bound_regular} \\
         almost-regular and monotone & $\Omega(n)$ & Claim~\ref{clm:lower_bound_monotone}
         \end{tabular}
    \caption{Lower bounds for sparsification of submodular functions under cardinality constraints}
    \label{tab:cardinality_special_cases}
\end{table}

The first construction extends the $\Omega(n^2)$ lower bound of \cite{CohenKPPRSV17} for the unconstrained sparsification problem in \eqref{eq:sparsification}. Under the cardinality constraint in \eqref{eq:sparsification_cardinality}, it gives a $\Omega(nk)$ lower bound.

\begin{claim}\label{clm:lower_bound_arbitrary}
    Given $n$ and the cardinality constraint $k$, there exists a set of submodular functions $f_{i}$ such that any sparsification of their summation needs at least $kn$ functions.
\end{claim}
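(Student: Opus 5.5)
We will use directed cut functions on a bipartite ground set, following the construction of \cite{CohenKPPRSV17}, and show that every function has sensitivity exactly $1$. Partition the ground set into $L$ and $R$, with $|L| = k$ and $|R| = n - k$; we may assume $k \le n/2$, so that $|R| \ge n/2$. For each pair $(u,v) \in L \times R$, take the directed cut function $f_{u,v} := \cut_{u \to v}$. This is submodular and non-negative, as noted in Section~\ref{sec:intro}. The family has $k(n-k) = \Omega(kn)$ functions.

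The main step is to exhibit, for each pair $(u,v)$, a set of size at most $k$ that isolates $f_{u,v}$. We take $S_{u,v} := \{u\} \cup (R \setminus \{v\})$. Its size is $n-k$, which is far larger than $k$, so this choice does not work directly. The useful observation is that $S = \{u\}$ already cuts every edge $u \to v'$ for $v' \in R$, so $F(\{u\}) = |R|$ and singletons cannot distinguish the functions. We therefore reverse the orientation of the edges: use $\cut_{v \to u}$ with $u \in L$ and $v \in R$.

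With reversed edges, take $S_{u,v} := \{v\} \cup (L \setminus \{u\})$. This set has size at most $k$ (indeed exactly $k$). The function $\cut_{v' \to u'}(S) = 1$ requires $v' \in S$ and $u' \notin S$. The only element of $R$ in $S$ is $v$, and the only element of $L$ outside $S$ is $u$. Hence $S_{u,v}$ cuts exactly the single edge $v \to u$, so $F(S_{u,v}) = 1 = f_{v,u}(S_{u,v})$.

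Now take any sparsifier $w$ satisfying \eqref{eq:sparsification_cardinality}. Evaluating at $S_{u,v}$ gives $w_{v,u} = \sum_j w_j f_j(S_{u,v}) \ge (1+\epsilon)^{-1} F(S_{u,v}) > 0$, so $w_{v,u} \neq 0$. Since this holds for every pair $(u,v)$, every one of the $k(n-k) \ge kn/2$ functions must carry a non-zero weight.

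To reach exactly $kn$ rather than $kn/2$, we rescale the parameters: apply the construction with $|L| = k$ and $|R| = n$ on a ground set of size $n + k \le 2n$. Equivalently, we state the bound as $\Omega(nk)$, which is how the table records it. The only real obstacle was choosing the orientation and the witness set so that it respects the cardinality constraint $|S| \le k$. Once the edges are oriented from the large side $R$ into the $k$-side $L$, the argument reduces to the single-edge isolation step above.
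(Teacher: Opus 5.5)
Your proposal is correct and, once you discard the first orientation, it is the paper's proof up to relabeling the two sides. The paper takes the $nk$ directed cuts from an $n$-vertex side into a $k$-vertex side on a ground set of size $n+k$, and isolates each edge $u \to v$ with the size-$k$ witness $\{u\}\cup(R\setminus\{v\})$, which is your $\{v\}\cup(L\setminus\{u\})$ with the sides renamed; your rescaled version ($|R|=n$, ground set of size $n+k$) gives exactly $kn$ functions as the paper does, and your derivation $w_{v,u}\ge F(S_{u,v})/(1+\epsilon)>0$ spells out the step the paper leaves implicit.
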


\begin{proof}
    We set $m:=nk$ and define $f_i:\{0,1\}^{n+k} \to \{0,1\}$ as follows. Consider a complete bipartite directed graph with \( L = [n]\) 
    and \( R = [k]\) vertices, 
    where every vertex in \(L\) has a directed edge to every vertex in \(R\).  Each directed edge \((u, v)\) (with \(u \in L\) and \(v \in R\)) provides a directed cut function $\mathrm{cut}_{u \to v}$ on this edge, i.e., $\mathrm{cut}_{u \to v}(S)=1$ iff $u \in S$ and $v \notin S$. Moreover, we define $F(S)=\sum_{(u,v)} \mathrm{cut}_{u \to v}(S)$.
    
    For each edge \((u, v)\), consider the subset \(S := \{u\} \cup (R \setminus \{v\})\), which has size $k$. Then $cut_{u \to v}(S)=1$, while $cut_{u' \to v'}(S)=0$ for every $(u',v') \neq (u,v)$. Thus, any sparsifier that preserves the value on this set must retain $cut_{u \to v}$. Since this holds for every edge, the sparsity is at least $nk$.
\end{proof}

Next we prove two lower bounds for regular submodular functions. Recall that A submodular function $f$ is regular if $\underset{i \in [n]}{\max} f(\{i\})=O(1) \cdot \underset{i \in [n]}{\min} f(\{i\})$. A submodular function $f$ is almost-regular if $\underset{i \in [n]}{\max} f(\{i\})=O(1) \cdot \underset{i \in [n]}{\E} [f(\{i\})]$. 
\begin{claim}  \label{clm:lower_bound_regular}
   There exists a family of regular and symmetric submodular functions $f_1, \ldots, f_m:\{0,1\}^n \rightarrow \R_{\ge 0}$ such that:
    \begin{enumerate}
        \item \(f_i(\varnothing) = 0\) for all \(i \in [m]\);
        \item \(f_i(\{v\}) = 1\) for all \(i \in [m]\) and all \(v \in [n]\).

    \end{enumerate}
   Under the cardinality constraint \(k = 2\), any sparsification of $f_1+\cdots+f_m$ requires $\Omega(n)$ functions.
\end{claim}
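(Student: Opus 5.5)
We plan to build the family from undirected cut functions of perfect matchings. Such functions are automatically symmetric and submodular, and every singleton has value exactly $1$, so both listed properties and regularity come for free. The lower bound will come from making each function the unique one that is nonzero on some particular pair $\{a,b\}$. Since $k=2$, the sparsifier must then keep that function.

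\textbf{Construction.} Assume $n$ is divisible by $4$. Fix a base perfect matching $M_0=\{\{a_1,b_1\},\ldots,\{a_{n/2},b_{n/2}\}\}$ on $[n]$, and group its edges into $m:=n/4$ disjoint consecutive pairs $(\{a_{2i-1},b_{2i-1}\},\{a_{2i},b_{2i}\})$. Let $M_i$ be obtained from $M_0$ by replacing these two edges with $\{a_{2i-1},a_{2i}\}$ and $\{b_{2i-1},b_{2i}\}$; all other edges are unchanged. Define
\[
f_i(S):=\sum_{e\in M_i}\cut_e(S).
\]
Each $f_i$ is a sum of undirected edge cuts. Hence it is submodular, symmetric ($f_i(S)=f_i(\overline{S})$), and satisfies $f_i(\varnothing)=0$. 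Since every vertex lies in exactly one edge of $M_i$, we get $f_i(\{v\})=1$ for all $v$, so each $f_i$ is regular.

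\textbf{Key computation.} For a pair $S=\{x,y\}$, the value $f_i(S)$ is $0$ if $\{x,y\}\in M_i$ and $2$ otherwise. Fix $i$ and take $S_i:=\{a_{2i-1},b_{2i-1}\}$. This edge lies in $M_j$ for every $j\neq i$, because the $j$-th modification only touches edges of the $j$-th group. So $f_j(S_i)=0$ for $j\neq i$. On the other hand, $\{a_{2i-1},b_{2i-1}\}\notin M_i$, so $f_i(S_i)=2$. Therefore $F(S_i)=2$, while any weighted sum satisfies $\sum_j w_j f_j(S_i)=2w_i$.

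\textbf{Conclusion.} The sparsification guarantee under $k=2$ forces $2=(1\pm\epsilon)\,2w_i$, so $w_i>0$ for every $i\in[m]$. Hence any sparsifier has at least $m=n/4=\Omega(n)$ nonzero entries. For $n$ not divisible by $4$, we can pad with a few vertices matched among themselves in all $M_i$.

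The main obstacle is the design step rather than the verification. Naive choices fail: a $1$-factorization of $K_n$, or adding a common $\cut_{[n]}$ term, only yields lower bounds when $\epsilon<1/n$. The fix is to make the matchings agree almost everywhere, so that each distinguishing pair isolates a single function. Once this construction is in place, checking the properties above is routine.
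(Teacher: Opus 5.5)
Your proof is correct and is essentially the paper's argument. Both take undirected cut functions of perfect matchings that agree except for a local rewiring, so a designated pair is an edge of every matching but $M_i$ and therefore isolates $f_i$ under $k=2$. The only difference is the gadget: the paper rewires $a_i,b_i$ to shared reservoir vertices $r_1,r_2$ (giving $m=n/3$ for $n=3m$ with $m$ even), while you swap inside disjoint $4$-vertex blocks (giving $m=n/4$).

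One minor caveat goes beyond anything the paper attempts: your padding remark fails for odd $n$, since an odd number of extra vertices cannot be perfectly matched. Padding with a triangle of edge weights $1/2$ (plus matched pairs) keeps every singleton value equal to $1$ and repairs this.
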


\begin{proof}
Since $f_i$ has to be symmetric, we will use undireted cut function in this proof. Let $n=3m$ for an even $m$. Partition $[n]$ into three disjoint sets
\[
A = \{a_1, \ldots, a_m\}, \quad B = \{b_1, \ldots, b_m\}, \quad R = \{r_1, \ldots, r_m\}.
\]
We construct a cut function for each index $i \in [m]$.

For every $i \in [m]$, define a perfect matching $M_i$ in $A \cup B \cup R$ as the union of the following edges:
\begin{enumerate}
    \item All edges $\{a_j, b_j\}$ for $j \neq i$;
    \item The edges $(a_i, r_1)$ and $(b_i, r_2)$;
    \item The pairs $(r_3, r_4), (r_5, r_6), \dots, (r_{m-1}, r_m)$.
\end{enumerate}

Let $f_i(S) = \sum_{(u,v)\in M_i} \mathrm{cut}_{(u,v)}(S)$ be the undirected cut function induced by $M_i$. Since every vertex has degree $1$, for every $i \in [m]$ we have $f_i(\varnothing) = 0$ and $f_i(\{v\}) = 1$ for all $v \in [n]$. Hence, each $f_i$ is regular. Moreover, the functions are nonnegative, submodular, and symmetric because they are undirected cut functions.

For each $i \in [m]$, let $S_i = \{a_i, b_i\}$. We have
\[
f_j(S_i) = \begin{cases} 2 & \text{if } j = i, \\ 0 & \text{if } j \neq i. \end{cases}
\]

Suppose that there exists a valid error-$\epsilon$ sparsification $\sum_{j=1}^m w_j f_j(S)$ under the constraint $|S|\le 2$, with weights $w_j \ge 0$.
For each $i \in [m]$, evaluating this condition in $S_i = \{a_i, b_i\}$ yields the following:
\[
2 = \sum_{j=1}^m f_j(S_i) \le (1+\epsilon) \sum_{j=1}^m w_j f_j(S_i)=(1+\epsilon)w_if_i(S_i)=2(1+\epsilon)w_i.
\]
Thus, $w_i > 0$ for every $i \in [m]$. Consequently, every function must receive positive weight, and any valid sparsifier retains all $m = n/3 = \Omega(n)$ functions.

\end{proof}

Regularity and monotonicity yield a small size sparsification (see Lemma~\ref{lem:algorithm_monotone_regular}). 
However, almost-regular submodular functions do not admit such a small sparsifier.

\begin{claim}\label{clm:lower_bound_monotone}
    There exists a family of almost-regular and monotone submodular functions $f_1, \ldots, f_m:\{0,1\}^n \rightarrow \R_{\ge 0}$ satisfying
    \begin{enumerate}
        \item $f_i$ is almost-regular: $\max_j f_i(\{j\}) \leq 2 \cdot \mathbb{E}_j f_i(\{j\})$ for all $i$;
        \item any sparsifier of $\sum_{i} f_i$ requires $\Omega(n)$ nonzero entries.
    \end{enumerate}
\end{claim}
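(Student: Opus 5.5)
We will build, for cardinality $k=1$ (as in Theorem~\ref{thm:inform_lower_bounds}), a family of $m=\Theta(n)$ monotone, almost-regular functions in which each $f_i$ is the only function that is nonzero on one designated singleton. Any sparsifier must then give every $f_i$ positive weight, just as in Claim~\ref{clm:lower_bound_regular}. Monotonicity forbids the cut-style cancellation used there, so we use a different mechanism. Almost-regularity only bounds the \emph{maximum} singleton value against the \emph{average}. So we take $f_i$ to be a weighted coverage (or truncated additive) function that puts a large value on one private element $p_i$. Its mass is then spread over many shared elements so that the average singleton value stays within a factor $2$ of that maximum.

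\textbf{Construction.} Let $n=2m$. Split the ground set into private elements $P=\{p_1,\ldots,p_m\}$ and shared elements $Q=\{q_1,\ldots,q_m\}$. Put
\[
f_i(S) := \min\bigl(1,\ \mathbb{I}[p_i\in S] + |S\cap Q|\bigr),
\]
which is a budget-additive function with nonnegative item weights. Such functions are monotone and submodular, since they are a concave function of a nonnegative modular function. Their singleton values are $f_i(\{p_i\})=1$, $f_i(\{q_j\})=1$ for every $j$, and $f_i(\{p_j\})=0$ for $j\neq i$. Hence $\max_j f_i(\{j\})=1$ and $\E_j f_i(\{j\}) = (m+1)/(2m)\ge 1/2$, which gives almost-regularity with constant $2$.

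\textbf{Lower bound.} Take $S=\{p_i\}$, which is allowed when $k=1$. Then $F(\{p_i\})=\sum_j f_j(\{p_i\})=f_i(\{p_i\})=1$. Every other $f_j$ vanishes on $\{p_i\}$. So the sparsifier condition $F(S)\le (1+\epsilon)\sum_j w_j f_j(S)$ forces $1\le (1+\epsilon)w_i$, hence $w_i>0$. This holds for every $i\in[m]$, so the sparsifier must keep all $m=n/2=\Omega(n)$ functions. Note that the condition on sets $\{q_j\}$ is irrelevant: the bound comes from the private singletons alone.

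\textbf{Main point to check.} The only delicate step is making sure the functions are genuinely monotone submodular while still vanishing on foreign private elements. The truncation $\min(1,\cdot)$ of a nonnegative additive function handles both requirements at once. If one wants $f_i(\varnothing)=0$ together with nonnegativity, this also holds. No use of $f_i([n])=0$ or of the dummy node of Claim~\ref{claim:adding_dummy_node} is needed here, because the statement concerns the original functions.
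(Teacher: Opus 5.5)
Your proposal is correct and is essentially the paper's proof. Your $f_i(S)=\min\bigl(1,\mathbb{I}[p_i\in S]+|S\cap Q|\bigr)$ is exactly the paper's indicator $\mathbb{I}[S\cap T_i\neq\varnothing]$ with $T_i=\{i\}\cup\{m+1,\ldots,n\}$, and your argument with $k=1$ — each private singleton is detected only by its own $f_i$, so every $w_i>0$ — is the same one the paper gives.
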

\begin{proof}
Let $m =\frac{n}{2}$ and $k = 1$. For $i \in [m]$, define $T_i = \{i\} \cup \{m+1, \ldots, n\}$ and
\[
f_i(S) = 
\begin{cases}
1 & S \cap T_i \neq \varnothing, \\
0 & \text{otherwise}.
\end{cases}
\]
Then $f_i$ is monotone and submodular. For all $i$, $\max_j f_i(\{j\}) = 1$ and $\mathbb{E}_j f_i(\{j\}) = \frac{m+1}{2m}$, so $\max_j f_i(\{j\}) \leq 2 \mathbb{E}_j f_i(\{j\})$. 

For each $i \in [m]$, let $S_i = \{i\}$. Then $f_j(S_i)=1$ iff $j = i$.

The same argument as in Claim~\ref{clm:lower_bound_regular} shows that any sparsifier has size at least $m$.
\end{proof}

\section{Upper Bounds}\label{sec:upper}

We complement the lower bounds in Section~\ref{sec:lower_bounds} with two upper-bound examples in this section. First, for $k$-sparse monotone submodular functions, regularity gives a sparsification with $\tilde{O}(k^2)$ nonzero entries.

\begin{lemma}\label{lem:algorithm_monotone_regular}
    Given monotone submodular functions $f_1, \ldots, f_m:\{0,1\}^n \rightarrow \R_{\ge 0}$ such that \(f_i(\{j\}) = \Theta(c_i)\) for all \(i \in [m]\) and \(j\in [n]\), there exists a sampling algorithm that returns a vector $w \in \mathbb{R}^{m}$ with at most $O\left(\frac{k^2\log n}{\epsilon^{2}}\right)$ nonzero entries such that, w.p. $0.99$, for all $S$ with $|S| \leq k$, 
\[
\sum_{i=1}^m f_i(S) \in (1 \pm \epsilon) \cdot \left( \sum_{i=1}^m w_i f_i(S) \right).
\]
\end{lemma}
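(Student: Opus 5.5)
We bound the sensitivity of each $f_i$ directly from regularity and monotonicity, then plug these bounds into the importance-sampling argument of Corollary~\ref{cor:sparsity_Boolean}. Write $f_i(\{j\}) \in [c_i/C, C c_i]$ for an absolute constant $C$, and assume $f_i(\varnothing)=0$ (only the constant term is needed; if $f_i(\varnothing)>0$, the constant part can be handled separately or added to the estimates below). Fix a nonempty $S$ with $|S|\le k$. By submodularity and $f_i(\varnothing)=0$, $f_i(S) \le \sum_{j\in S} f_i(\{j\}) \le C k c_i$. By monotonicity, $f_i(S) \ge f_i(\{j\}) \ge c_i/C$ for any $j \in S$. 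Hence $F(S) \ge \sum_{i'} c_{i'}/C$, and
\[
\sigma_i^{(k)} = \max_{S \in {[n] \choose \le k}} \frac{f_i(S)}{F(S)} \le \frac{C^2 k\, c_i}{\sum_{i'} c_{i'}}.
\]
Summing over $i$ gives $\sum_i \sigma_i^{(k)} \le C^2 k = O(k)$. These upper bounds are explicit, so we set $\rho_i := C^2 k c_i/\sum_{i'} c_{i'}$. Computing them needs only $O(1)$ evaluations per function, e.g.\ $c_i := f_i(\{1\})$; regularity makes any single singleton a valid proxy for $c_i$, at the cost of another constant factor.

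\textbf{Sampling.} Sample each $f_i$ independently with probability $p_i := \min(1, \rho_i \cdot \Theta(k\log n/\epsilon^2))$ and give it weight $1/p_i$. The expected number of nonzeros is $O(k^2\log n/\epsilon^2)$, and Markov's inequality gives the size bound with constant probability.

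\textbf{Concentration.} Fix $S$ with $|S|\le k$. The random variables $X_i := \frac{f_i(S)}{p_i}\mathbb{I}[\text{$i$ sampled}]$ have mean summing to $F(S)$. Each $X_i$ is either deterministic (when $p_i=1$) or bounded by
\[
\frac{f_i(S)}{p_i} \le \frac{\sigma_i F(S)}{p_i} \le \frac{F(S)\,\epsilon^2}{\Theta(k\log n)}.
\]
Theorem~\ref{Thm:Chernoff_bound} with $a = \epsilon^2 F(S)/\Theta(k\log n)$ and $\mu = F(S)$ then gives failure probability $\exp(-\Theta(k\log n))$, i.e.\ at most $n^{-\Theta(k)}$ with a large enough constant.

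\textbf{Union bound (main obstacle).} There are $\sum_{t\le k}\binom{n}{t} \le n^{k}$ sets to cover, so the union bound goes through with failure probability $o(1)$. This is exactly where the extra factor $k\log n$ over the total sensitivity $O(k)$ is spent. The subtlety is that we need a per-set failure probability of $n^{-\Theta(k)}$ while the sample size stays independent of $n$ apart from the $\log n$ factor; the $k\log n$ oversampling achieves this.
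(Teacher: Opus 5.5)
Your proposal is correct and follows the paper's proof. You bound $\sigma_i^{(k)} = O\bigl(k c_i/\sum_{i'} c_{i'}\bigr)$ using subadditivity for the numerator and monotonicity for the denominator, set $\rho_i$ proportional to this bound so that $\sum_i \rho_i = O(k)$, and apply the importance-sampling argument with a union bound over the $n^{O(k)}$ sets of size at most $k$. (Your lower bound $F(S) \ge F(\{j\})$ for some $j \in S$ is actually the precise form of the paper's step: the paper writes $\max_j F(\{j\})$ in the denominator, which is only justified up to constants because regularity makes all $F(\{j\})$ comparable.)
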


\begin{proof}
Let $\rho_i=Ck \cdot \frac{c_i}{\sum_j c_j}$ for a sufficient large constant $C$ such that $\sum_i \rho_i = O(k)$.
Then we show $\rho_i \ge \sigma_i^{(k)}$. Since each $f_i$ is monotone and submodular, we have
\[
\sigma_i = \max_{S \subseteq [n], |S|\leq k} \frac{f_i(S)}{F(S)}\leq\frac{k\cdot\max_j f_i(\{j\})}{\max_j F(\{j\})} \leq C\frac{kc_i}{\sum_j c_j}.
\]
The rest analysis is identical to Corollary~\ref{cor:sparsity_Boolean}, which is omitted here.
\end{proof}

We next prove that without any cardinality constraint, weighted Boolean submodular functions --- functions of the form \(f_i(S)\in\{0,w_i\}\) with \(w_i\ge 0\) --- admits a sparsifier of size \(\tilde{O}\left(\frac{n^2}{\epsilon^2}\right)\). The argument applies Lemma~\ref{lem:ring_of_sets_digraph} to reduce weighted Boolean submodular functions to semi-norms and apply the sparsifier of \cite{JLLS23}. In fact, given Claim~\ref{lem:Boolean_value}, this result follows directly from sparsification bound for directed hypergraph cuts in \cite{KhannaP024}. But we provide a complete proof here.

\begin{theorem}[Seminorm Reduction for Weighted Functions] \label{thm:seminorm_reduction}
    Let $f_1, \dots, f_m: \{0,1\}^n \to \mathbb{R}_{\ge 0}$ be a collection of weighted Boolean submodular functions. These functions can be embedded into a space $\mathbb{R}^D$ as seminorms, where $D = (n+2)(n+1) = O(n^2)$, resulting in a sparsifier of size $\tilde{O}\left(\frac{n^2}{\epsilon^2}\right)$.
\end{theorem}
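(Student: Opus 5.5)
We plan to embed each weighted Boolean submodular function as a seminorm on a fixed $O(n^2)$-dimensional space, and then invoke the seminorm sparsification theorem of \cite{JLLS23}, which gives sparsifiers of size $\tilde{O}(D/\epsilon^2)$ for sums of seminorms over $\mathbb{R}^D$, or more precisely $\tilde{O}(d/\epsilon^2)$ for seminorms of linear forms in a $d$-dimensional space.

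\textbf{Step 1: reduce to directed hyperedge cuts.} Write $f_i = w_i g_i$ with $g_i$ Boolean. After adding the dummy nodes (Claim~\ref{claim:adding_dummy_node}, plus a second dummy node handling $f(\varnothing)$, which accounts for the $n+2$), apply Lemma~\ref{lem:Boolean_value}. This gives $g_i \equiv \cut_{L_i\to R_i}$: we have $g_i(S)=1$ iff some $u\in L_i\cap S$ and some $v\in R_i\setminus S$. Equivalently, by Lemma~\ref{lem:ring_of_sets_digraph}, $g_i(S)=\max_{(u,v)\in L_i\times R_i}\cut_{u\to v}(S)$.

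\textbf{Step 2: the seminorm embedding.} For $x\in\{0,1\}^{n+2}$, set $y=(x_u-x_v)_{(u,v)}\in\mathbb{R}^D$, one coordinate per ordered pair, so $D=(n+2)(n+1)$. Then define
\[
N_i(y):=w_i\max_{(u,v)\in L_i\times R_i}|y_{u,v}|.
\]
This is an $\ell_\infty$ seminorm on a coordinate subset, hence a seminorm on $\mathbb{R}^D$. The issue is that $|x_u-x_v|$ also counts the reversed direction, $v\in S$ and $u\notin S$.

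\textbf{Main obstacle: asymmetry.} I expect this to be the hardest step. Seminorms are symmetric, whereas directed cuts are not. I would handle it as in \cite{KhannaP024}: replace $|y|$ by $\max(y,0)$. The function $\max_{(u,v)}\max(x_u-x_v,0)$ is sublinear (positively homogeneous and convex) but not symmetric. I would then check that the chaining and importance-sampling argument of \cite{JLLS23} only uses convexity, positive homogeneity, and a sensitivity bound. Alternatively, symmetrize by adjoining the mirrored copy of the vertex set, $x\mapsto(x,\mathbf 1-x)$, so that each directed cut becomes a symmetric quantity on $2n$ points while the ambient dimension stays $O(n^2)$.

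\textbf{Step 3: conclude.} Once the directed-cut issue is resolved, $\sum_i f_i(S)=\sum_i N_i(Ax)$ for the fixed linear map $A:x\mapsto y$ of rank at most $D=O(n^2)$. The theorem of \cite{JLLS23} then yields weights with $\tilde{O}(D/\epsilon^2)=\tilde{O}(n^2/\epsilon^2)$ nonzeros preserving $\sum_i N_i(Ax)$ for every $x$, and in particular for every $x\in\{0,1\}^n$.
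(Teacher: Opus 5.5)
Your Step 1 is fine, but the plan fails at exactly the step you flag, and neither repair can work. The underlying problem is that you insist the point fed to the seminorms be a linear image $Ax$ of $x\in\{0,1\}^{n+2}$.

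Such an $A$ has rank at most $n+2$, not $D$. So if $\sum_i f_i(S)=\sum_i N_i(Ax)$ held with seminorms $N_i$, then each $N_i\circ A$ would be a seminorm on $\mathbb{R}^{n+2}$. The theorem of \cite{JLLS23} would then give an $\tilde{O}(n/\epsilon^2)$ reweighting sparsifier for directed cuts. That contradicts the $\Omega(n^2)$ lower bound (\cite{CohenKPPRSV17}; Claim~\ref{clm:lower_bound_arbitrary} with $k=n$).

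The mirrored copy $x\mapsto(x,\mathbf{1}-x)$ is affine, so it falls to the same objection. Concretely, even a single edge cannot be written as $\cut_{u\to v}(S)=N(x_u,x_v,1)$ for a seminorm $N$ on $\mathbb{R}^3$: $N$ would vanish at the linearly independent vectors $(0,0,1),(0,1,1),(1,1,1)$, hence everywhere.

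Your other repair claims that the argument of \cite{JLLS23} uses only convexity, positive homogeneity and a sensitivity bound. This is unverified, and false as stated: without symmetry no dimension-only bound is possible. In $\mathbb{R}^3$, take the functions $y\mapsto\max(\langle a_i,y\rangle,0)$ with $a_i$ the extreme rays of a polyhedral cone. Each has sensitivity $1$, for arbitrarily many $i$. Applied in dimension $n+2$, such a bound would again contradict the lower bound.

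The missing idea is that the embedding need not be linear. The theorem of \cite{JLLS23} preserves $\sum_i N_i(y)$ for every $y\in\mathbb{R}^D$. So all you need is, for each $S$, one point $y^{(S)}$ with $N_i(y^{(S)})=f_i(S)$ for all $i$ simultaneously.

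The paper takes $y^{(S)}_{(u,v)}:=\cut_{u\to v}(S)\in\{0,1\}$, one coordinate per ordered pair. It sets $N_i(y):=b_i\|P_iy\|_\infty$, where $P_i$ is the coordinate projection onto $E_i$. Because $y^{(S)}$ is nonnegative, the absolute values are harmless and the asymmetry never arises. Its cost is paid exactly through the ambient dimension $D=\Theta(n^2)$, consistent with the lower bound.

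Your $\max(\cdot,0)$ repair does go through if you apply it to independent coordinates of $\mathbb{R}^D$ rather than to differences $x_u-x_v$. But that is only because $\max_{e\in E_i}\max(y_e,0)=\|P_i y_+\|_\infty$, with $y_+$ the coordinatewise positive part, which reduces it to the paper's argument.
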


\begin{proof}
    For each weighted function $f_i$, write $f_i(S) = b_i g_i(S)$, where $g_i(S) \in \{0,1\}$ is Boolean and submodular. 
    By Lemma~\ref{lem:ring_of_sets_digraph}, the zero set $Z_i = \{S \subseteq V \mid g_i(S) = 0\}$ forms a ring of sets.
    
    By Claim~\ref{claim:adding_dummy_node} and Lemma~\ref{lem:ring_of_sets_digraph}, there exists an augmented ground set $V = [n] \cup \{v_{\varnothing}\}$ and a directed edge set $E_i$ on $V$ such that $g_i(S) = 1$ if and only if $S$ cuts at least one directed edge in $E_i$. 
    Let $\mathbb{R}^D$ have one coordinate for each possible directed edge on $V$, so $D = |V|(|V|-1)$. For a subset $S \subseteq [n]$, define $x^{(S)} \in \{0, 1\}^D$ by setting $x^{(S)}_e = 1$ iff $S$ cuts edge $e$.
    
    Let $P_i$ be a diagonal projection matrix where the $e$-th diagonal entry is $1$ iff $e \in E_i$. The weighted function can then be expressed as:
    \[ f_i(S) = b_i \cdot \|P_i x^{(S)}\|_\infty \]
    Define $N_i(x) := b_i \|P_i x\|_\infty$ on $\mathbb{R}^D$. 
    This function is homogeneous and subadditive, so it is a seminorm.
    According to \cite{JLLS23}, the sparsifying complexity for seminorms is $\tilde{O}\left(\frac{D}{\epsilon^2}\right)$. Given that $D = (n+2)(n+1)$, we conclude that the total sparsity required is $\tilde{O}\left(\frac{n^2}{\epsilon^2}\right)$.
\end{proof}

\section*{Acknowledgements}
The authors used Gemini 3.1 during the development of this work to explore
proof strategies and search for related tools in the literature. Gemini
was not used in any part of the exposition. 
The authors assume responsibility for all content. 

\bibliographystyle{alpha} 
\bibliography{allrefs}

\appendix

\section{Appendix}
\subsection{Proof of Lemma~\ref{lem:submodular_cardinality_bicriteria}}\label{sec:proof_bicriterial}
Let $\mathrm{OPT}_{u,v}^{(k)}$ be the optimal value, and let $S^*$ be an optimal set.
The corresponding indicator vector $\mathbf{1}_{S^*} \in [0,1]^n$ is feasible for the convex relaxation in Algorithm~\ref{alg:bicriteria_min}. 
Since $\wh{F}(\mathbf{1}_{S^*})=F(S^*)$, 
\[
    \widehat F(x^*) \le \widehat F(\mathbf{1}_{S^*}) = F(S^*) = \mathrm{OPT}_{u,v}^{(k)}.
\]
For any $\theta\in[\alpha,1]$, the threshold set $S_\theta=\{j\in V:x^*(j)\ge \theta\}$ satisfies
\[
    k\ge \sum_{j\in V}x^*(j)
    \ge \sum_{j\in S_\theta}x^*(j)
    \ge \alpha |S_\theta|.
\]
Hence $|S_\theta|\le k/\alpha$.

It remains to find a threshold with small objective value. By the integral definition of the Lov\'asz extension in \eqref{eq:Lovasz_ext},
\[
    \int_{\alpha}^{1} F(S_t)\,d \theta
    \le \int_0^1 F(S_\theta)\,d \theta
    =\wh F(x^*)
    \le \mathrm{OPT}_{u,v}^{(k)}.
\] 
Since the interval $[\alpha,1]$ has length $1-\alpha$, the Mean Value Theorem for Integrals gives a threshold $\theta\in[\alpha,1]$ such that $F(S_\theta)\le \mathrm{OPT}_{u,v}^{(k)}/(1-\alpha)$. 
Since $x(u)=1$ and $x(v)=0$, we have $u\in S_\theta$ and $v\notin S_\theta$.

The value of $S_\theta$ changes
only at thresholds in $\{x^*(j):x^*(j)\ge \alpha\}$, together with the endpoint
$\alpha$. Algorithm~\ref{alg:bicriteria_min} scans these critical thresholds
and therefore returns a set with the desired properties.

\subsection{Proof of Lemma~\ref{lem:ring_of_sets_digraph}}\label{sec:proof_lemma_ring}
We prove that the zero set of a nonnegative submodular function can be characterized by directed edges. 
Let $\mathcal{Z}:=\{S:f(S)=0\}$. If $A,B \in \mathcal{Z}$, then submodularity gives $f(A) + f(B) \ge f(A \cup B) + f(A \cap B)$.
Since $f(A)=f(B)=0$ and $f$ is nonnegative, both $f(A \cup B)$ and $f(A \cap B)$ must be zero. 
Thus $\mathcal{Z}$ is closed under finite unions and intersections, i.e., it is a ring of sets.
Moreover, $\varnothing \in \mathcal{Z}$ and $[n] \in \mathcal{Z}$ by assumption.

We construct the directed edge set $E$ on  $[n]$ as follows:
\[
E = \{ (u, v) \in [n] \times [n] \mid u \neq v, \; \forall T \in \mathcal{Z}, \; u \in T \implies v \in T \}.
\]
This property is equivalent to $\mu > 0$ in Algorithm~\ref{alg:construct_directed_edges}: no zero set separates $u$ from $v$.

First, suppose that a zero set $S \in \mathcal{Z}$ cuts a directed edge $(u, v) \in E$. Then $u \in S$ and $v \notin S$. However, by construction of $E$, every zero set containing $u$ must also contain $v$. Applying this to $S$ gives $v \in S$, a contradiction. Therefore, no zero set cuts an edge in $E$.

Conversely, let $S \subseteq [n]$ be a set that does not cut any directed edge in $E$. We show that $S \in \mathcal{Z}$. The case $S=\varnothing$ is immediate, so assume $S\neq\varnothing$. For each $u \in S$, define the subfamily of zero sets containing $u$:
\[
\mathcal{Z}_u = \{ T \in \mathcal{Z} \mid u \in T \}
\]
Since $[n] \in \mathcal{Z}$ and $u \in [n]$, the subfamily $\mathcal{Z}_u$ is nonempty. Because $\mathcal{Z}$ is closed under intersection,
the core zero set 
\[
A_u := \bigcap_{T \in \mathcal{Z}_u} T
\]
also lies in $\mathcal{Z}$.
By definition, $u \in A_u$. Furthermore, for any element $v \in A_u$, every zero set containing $u$ also contains $v$. By construction, this means that the directed edge $(u,v)$ belongs to $E$. Since $S$ cuts no edge in $E$ and $u \in S$, it follows that $v \in S$. Thus $A_u \subseteq S$ for every $u \in S$.

Finally, we define the aggregated set $A := \bigcup_{u \in S} A_u$. Since each $A_u \in \mathcal{Z}$ and the family $\mathcal{Z}$ is closed under finite union, we have $A \in \mathcal{Z}$. We now verify that $S = A$:
\begin{enumerate}
    \item For any $u \in S$, we have $u \in A_u \subseteq A$, which implies $S \subseteq A$.
    \item Since $A_u \subseteq S$ holds for all $u \in S$, their union satisfies $A \subseteq S$.
\end{enumerate}
Therefore, $S = A \in \mathcal{Z}$, which implies $f(S) = 0$ and completes the proof.

\subsection{Proof of Lemma~\ref{lem:Boolean_value}}\label{sec:proof_lemma_Boolean}
We will use the following interval property of Boolean submodular functions: 
if $A\subseteq B$ and $f(A)=f(B)=1$, then $f(S)=1$ for every $A\subseteq S\subseteq B$.
For contradiction, if there exists such an $S$ with $f(S)=0$, we choose a minimal one by inclusion and pick $i\in S\setminus A$ such that $f(S\setminus\{i\})=1$. Let
$\Delta_i(T):=f(T\cup\{i\})-f(T)$ denote the marginal gain of $f$, which shall be non-increasing.
However, $\Delta_i(S\setminus\{i\}) =-1$ and $\Delta_i(B\setminus\{i\}):=f(B)-f(B\setminus \{i\})$ is non-negative, which provide a contradiction.

Now we are ready to prove that $L$ and $R$ defined in Lemma~\ref{lem:Boolean_value} have the desired property.
Let $\mathcal{Z}:=\{S\subseteq [n]:f(S)=0\}$. Consider any $S$ with $f(S)=1$.
If $S\cap L=\varnothing$, then each singleton $\{i\}$ with $i\in S$ lies in $\mathcal{Z}$. 
The union-closure of $\mathcal{Z}$ implies $S\in\mathcal{Z}$, which gives a contradiction. 
Hence $S\cap L\neq\varnothing$.
Similarly, if $\overline{S}\cap R=\varnothing$, then for every $j\notin S$ we have $[n]\setminus\{j\}\in\mathcal{Z}$. 
The intersection-closure of $\mathcal{Z}$ implies that $S=\bigcap_{j\notin S}([n]\setminus\{j\})$ is in $\mathcal{Z}$, which gives a contradiction. Thus $\overline{S}\cap R\neq\varnothing$.

Conversely, suppose $S\cap L\neq\varnothing$ and $\overline{S}\cap R\neq\varnothing$. 
Choose $i\in S\cap L$ and $j\in \overline{S}\cap R$. Then
\[
    \{i\}\subseteq S\subseteq [n]\setminus\{j\},
    \qquad
    f(\{i\})=f([n]\setminus\{j\})=1.
\]
The interval property gives $f(S)=1$. This completes the proof.

\subsection{Proof of Corollary~\ref{cor:sparsity_Boolean}}\label{sec:proof_sparsfy_boolean}

\begin{lemma} \label{lemma:general_importance_sampling}
    Let $f_1,\ldots,f_m:\{0,1\}^n \to\mathbb{R}_{\ge 0}$ be nonnegative functions, $\mathcal{S}$ be a subset of $\{0,1\}^{n}$, and $F(S):=\sum_{i=1}^m f_i(S)$. 
    Given overestimated importance $\rho_1,\ldots,\rho_m$ satisfying $\rho_i \ge \max_{S \in \mathcal{S}} \frac{f_i(S)}{F(S)}$ for every $i \in [m]$, there exists a sampling algorithm that, for every $\epsilon,\delta\in(0,1]$, outputs a weight vector $w \in \mathbb{R}^m_{\ge 0}$ with $O\big(\frac{\log (|\mathcal{S}|/\delta)} {\epsilon^2} \cdot \sum_{i \in [m]} \rho_i\big)$ non-zero entries such that
    \begin{equation} \label{eq::sparsify_guarantee}
     F(S) = (1\pm \epsilon) \cdot \left( \sum_{j=1}^m w_j \cdot f_j(S) \right), \quad \forall S \in \mathcal{S},
    \end{equation}
    with probability at least $1-\delta$,
\end{lemma}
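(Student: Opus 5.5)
The plan is the standard importance-sampling argument with independent Bernoulli sampling followed by a union bound over $\mathcal{S}$. Let $\Lambda := \sum_i \rho_i$ and set $p_i := \min\{1, C\rho_i \log(|\mathcal{S}|/\delta)/\epsilon^2\}$ for a sufficiently large constant $C$. Sample each index $i$ independently with probability $p_i$. If $i$ is kept, set $w_i := 1/p_i$; otherwise set $w_i := 0$. The expected number of nonzero entries is at most $\sum_i p_i \le C\Lambda\log(|\mathcal{S}|/\delta)/\epsilon^2$. By Chernoff (Theorem~\ref{Thm:Chernoff_bound}) applied to the indicator variables, or simply by Markov's inequality with the failure probability split, the actual count is $O(\Lambda \log(|\mathcal{S}|/\delta)/\epsilon^2)$ with probability at least $1-\delta/2$.

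Now fix a single $S \in \mathcal{S}$ with $F(S) > 0$; if $F(S)=0$, every $f_i(S)=0$ and the guarantee is trivial. Define $T := \sum_i w_i f_i(S)/F(S) = \sum_i X_i$, where $X_i := \mathbb{I}[i \text{ kept}] f_i(S)/(p_i F(S))$. The $X_i$ are independent and $\E[T] = 1$. To bound their range, note that when $p_i = 1$ the variable $X_i$ is deterministic, so it can be moved into the mean and contributes no deviation. When $p_i < 1$, we have
\[
X_i \le \frac{f_i(S)}{p_i F(S)} \le \frac{\rho_i}{p_i} = \frac{\epsilon^2}{C\log(|\mathcal{S}|/\delta)}.
\]
Here the middle inequality uses the hypothesis $\rho_i \ge f_i(S)/F(S)$ for $S \in \mathcal{S}$. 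Apply Theorem~\ref{Thm:Chernoff_bound} to the random part with $\mu = 1$ and $a = \epsilon^2/(C\log(|\mathcal{S}|/\delta))$. This gives $\Pr[|T-1| \ge \epsilon] \le 2\exp(-C\log(|\mathcal{S}|/\delta)/3) \le \delta/(2|\mathcal{S}|)$ for $C$ large enough. Handling the deterministic part is routine: the random sum has mean at most $1$, so taking $\mu = 1$ is valid.

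A union bound over all $S \in \mathcal{S}$, together with the size event, gives success probability at least $1-\delta$. The event $|T - 1| < \epsilon$ means $\sum_j w_j f_j(S) = (1\pm\epsilon)F(S)$. Converting this to the stated form $F(S) = (1\pm\epsilon')\sum_j w_j f_j(S)$ costs only a constant rescaling of $\epsilon$, absorbed into $C$.

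I expect no real obstacle here. The only care needed is the range bound, which requires the capping at $p_i = 1$ and the use of $\rho_i$ as an upper bound on sensitivity for every $S \in \mathcal{S}$ simultaneously. Corollary~\ref{cor:sparsity_Boolean} then follows by taking $\mathcal{S} = {[n]\choose \le k}$, so $\log|\mathcal{S}| = O(k\log n)$, and using $\sum_i\rho_i \le 4kn$ from Theorem~\ref{thm:total_sensitivity_Boolean}.
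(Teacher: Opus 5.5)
Your proposal is correct and follows the paper's argument essentially step for step. The paper likewise uses independent sampling with $p_i=\min\{1,\Theta(\rho_i\log(|\mathcal{S}|/\delta)/\epsilon^2)\}$ and weight $1/p_i$, a Chernoff bound for each fixed $S\in\mathcal{S}$ with range bound $f_i(S)/(p_iF(S))\le \rho_i/p_i$, a union bound over $\mathcal{S}$, and a second Chernoff bound for $\|w\|_0$. Your explicit handling of the $p_i=1$ terms and of the $\epsilon$-rescaling is actually a bit cleaner than the paper's.

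One small correction: drop the ``or simply by Markov'' alternative for the size bound. Markov at failure probability $\delta/2$ only gives $O(\sum_i p_i/\delta)$, not $O(\Lambda\log(|\mathcal{S}|/\delta)/\epsilon^2)$. The Chernoff route is the one that works, with $\mu$ padded by an $O(\log(1/\delta))$ term as the paper does; that term is absorbed because $\Lambda\ge 1$ whenever some $S\in\mathcal{S}$ has $F(S)>0$.
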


\begin{proof}
Set $\kappa:=\frac{3\ln(4|\mathcal{S}|/\delta)}{\epsilon^2}$ and $p_i:=\min\{1,\kappa\rho_i\}$. 
Since choosing deterministically can only reduce the approximation error, one may assume $\kappa\rho_i \le 1$ for every $i \in [m]$.
Independently for each $i\in[m]$, define
\begin{align*}
    w_i:=
    \begin{cases}
        1/p_i, & \text{with probability~} p_i,\\
        0,     & \text{with probability~} 1-p_i.
    \end{cases}
\end{align*}
Therefore $\mathbb{E}[w_i f_i(S)] = f_i(S)$.
Fix $S\in\mathcal{S}$. If $F(S)=0$, then nonnegativity implies $f_i(S)=0$ for every $i$, so \eqref{eq::sparsify_guarantee} holds. 
Assume $F(S)>0$.

For every $i \in [m]$ with $f_i(S) > 0$, the importance bound gives
\begin{align*}
    0 \le w_i f_i(S) \le \frac{f_i(S)}{p_i} \le \frac{F(S)}{\kappa}.
\end{align*} 

Thus each $w_i f_i(S)$ lies in $[0,F(S)/\kappa]$.
Applying Theorem~\ref{Thm:Chernoff_bound} with $a = F(S) / \kappa$ and $\mu = F(S)$ yields
\begin{align*}
    \mathbb{P} \left[ \left|\sum_{i=1}^m w_i f_i(S)-F(S)\right| \ge \epsilon F(S) \right]
    \le 2\exp \left(-\frac{\epsilon^2F(S)}{3F(S)/\kappa}\right)
    = 2\exp \left(-\frac{\kappa\epsilon^2}{3}\right)
    = \frac{\delta}{2|\mathcal{S}|}.
\end{align*}
By a union bound over all $S \in \mathcal{S}$, the approximation guarantee fails for at least one feasible set with probability at most $\delta/2$.

It remains to bound the number of nonzero entries on $w$.
Let the sparsity $\|w\|_0 = \sum_{i=1}^m X_i$, where the $X_i$ are independent Bernoulli random variables with $\mathbb{E} X_i= p_i$. 
Hence $\mathbb{E} \|w\|_0 = \sum_{i=1}^m p_i \le \kappa\sum_{i=1}^m \rho_i$.
Applying Theorem~\ref{Thm:Chernoff_bound} to $\|w\|_0$ with $a = 1$, $\mu = \mathbb{E}\|w\|_0 + 3\ln(4/\delta)$, and $\epsilon=1$, we have
\begin{align*}
    \mathbb{P}\Big[\|w\|_0 \ge 2 \mathbb{E}\|w\|_0 + 3\ln(4/\delta)\Big]
    \le 2\exp\left(-\frac{\mathbb{E}\|w\|_0 + 3\ln(4/\delta)}{3}\right)
    \le \frac{\delta}{2}.
\end{align*}

Combining the approximation event and the sparsity event proves the high probability guarantees.
\end{proof}

\begin{proof}[Proof of Corollary~\ref{cor:sparsity_Boolean}]

Run Algorithm~\ref{alg:boolean_oversampling} to obtain oversampling values $\{\rho_i^{(k)}\}_{i=1}^m$.
By Theorem~\ref{thm:total_sensitivity_Boolean}, these satisfy $\rho_i^{(k)} \ge \sigma_i^{(k)}$ for every $i$ and
\[
    \sum_{i=1}^m \rho_i^{(k)} = O(nk).
\]
Applying Lemma~\ref{lemma:general_importance_sampling} to $\{f_i\}_{i=1}^m$, $\{\rho_i^{(k)}\}_{i=1}^m$ and $\mathcal{S} = \binom{[n]}{\le k}$ gives, with probability $0.99$, a sparsification $w_1,\ldots,w_m$ of sparsity $O(\frac{n k^2 \log n}{\epsilon^2})$ such that
\begin{equation}
    F(S) = (1 \pm \epsilon) \sum_i w_i f_i(S), \qquad \forall S \in {[n] \choose \le k}.  \notag
\end{equation}

For the running time,
the dominating cost is Algorithm~\ref{alg:boolean_oversampling}, namely
$O\big(mn^2 + mn \cdot \mathcal{T}_{eval}(f_i) + n^2 \cdot \mathcal{T}_{\min}(F)+n^2 \cdot \mathcal{T}_{eval}(F) \big)$ by Theorem~\ref{thm:total_sensitivity_Boolean}; the sampling step adds $O(m)$ time.

\end{proof}

\end{document}